\newtheorem{theorem}{Theorem}[section]
\newtheorem{lemma}[theorem]{Lemma}
\theoremstyle{definition}
\newtheorem{definition}[theorem]{Definition}
\newtheorem{remark}[theorem]{Remark}
\newenvironment{fminipage}%
{\begin{Sbox}\begin{minipage}}%
		{\end{minipage}\end{Sbox}\fbox{\TheSbox}}
\newenvironment{algbox}[0]{\vskip 0.2in
	\noindent
	\begin{fminipage}{6.3in}
	}{
\end{fminipage}
\vskip 0.2in
}
\newcommand{\myparagraph}[1]{\smallskip\noindent {\bf #1.}}
\newcommand{\classP}{\ensuremath{\mathsf{P}}}
\newcommand{\Pcomplete}{$\mathsf{P}$-complete}
\newcommand{\NCone}{\ensuremath{\mathsf{NC}^{1}}}
\newcommand{\MPC}{\ensuremath{\mathsf{MPC}}}
\def\defeq{\stackrel{\mathrm{def}}{=}}
\def\ContractionSampling{\textsf{ContractionSampling\ }}
\newcommand\PPi{\boldsymbol{\Pi}}
\renewcommand\deg{\boldsymbol{\mathit{d}}}
\newcommand\xx{\boldsymbol{\mathit{x}}}
\newcommand\DD{\boldsymbol{\mathit{D}}}
\newcommand\LL{\boldsymbol{\mathit{L}}}
\newcommand\Otil{\widetilde{O}}
\newcommand\xxhat{\boldsymbol{\widehat{\mathit{x}}}}
\newcommand\edge{e}
\newcommand\xor{\bigoplus}
\title{Parallel Batch-Dynamic Graphs: Algorithms and Lower Bounds}
\author{
	David Durfee \\
	LinkedIn \\
	\texttt{ddurfee@linkedin.com}
	\and
	Laxman Dhulipala\\
	CMU\\%Carnegie Mellon University\\
	\texttt{ldhulipa@cs.cmu.edu}
	\and
	Janardhan Kulkarni \\
	Microsoft Research \\
	\texttt{jakul@microsoft.com}
	\and
	Richard Peng \\
	Georgia Tech \\
	\texttt{richard.peng@gmail.com}
	\and
	Saurabh Sawlani \\
	Georgia Tech \\
	\texttt{sawlani@gatech.edu}
	\and
	Xiaorui Sun \\
	University of Illinois at Chicago \\
	\texttt{xiaorui@uic.edu}
}
\date{}
\begin{document}

\maketitle

%!TEX root = Main.tex

\begin{abstract}
In this paper we study the problem of dynamically maintaining graph
properties under batches of edge insertions and deletions in the
massively parallel model of computation. In this setting, the graph is
stored on a number of machines, each having space strongly sublinear
with respect to the number of vertices, that is, $n^\epsilon$ for
some constant $0 < \epsilon < 1$. Our goal is to handle batches of
updates and queries where the data for each batch fits onto one
machine in constant rounds of parallel computation, as well as to
reduce the total communication between the machines. This objective
corresponds to the gradual buildup of databases over time, while the
goal of obtaining constant rounds of communication for problems in the
static setting has been elusive for problems as simple as undirected
graph connectivity.

We give an algorithm for dynamic graph connectivity in this setting with constant
communication rounds and communication cost almost linear in terms of the batch size.
Our techniques combine a new graph contraction technique,
an independent random sample extractor from correlated samples,
as well as distributed data structures supporting
parallel updates and queries in batches.

We also illustrate the power of dynamic algorithms in the \MPC{} model 
by showing that the batched version of the adaptive connectivity problem is 
$\mathsf{P}$-complete in the centralized setting, but sub-linear sized
batches can be handled in a constant number of rounds. Due to the
wide applicability of our approaches, we believe it represents a
practically-motivated workaround to the current difficulties in
designing more efficient massively parallel static graph algorithms.

%We also illustrate the power of dynamic algorithms in the \MPC{}
%model by giving a constant-round algorithm handling sub-linear sized
%batches for the adaptive connectivity problem, a problem which we show
%to be $\mathsf{P}$-complete in the centralized setting.  Due to the
%wide applicability of our approaches, we believe it represents a
%practically-motivated workaround to the current difficulties in
%designing more efficient massively parallel static graph algorithms.

\end{abstract}

\thispagestyle{empty}
\newpage
\pagenumbering{arabic}

%!TEX root = Main.tex

\section{Introduction}\label{sec:intro}

Parallel computation frameworks and storage systems, such as
MapReduce, Hadoop and Spark, have been proven to be highly effective
methods for representing and analyzing the massive datasets that
appear in the world today.
Due to the importance of this new class of systems, models of
parallel computation capturing the power of such systems have been
increasingly studied in recent years, with the Massively Parallel Computation (\MPC{})
model~\cite{KSV10} now serving as the canonical model.
In recent years the \MPC{} model has seen the development of algorithms
for fundamental problems, including clustering~\cite{EIM11, BMVKV12,
BBLM14, YV17},
connectivity problems~\cite{RMCS13, KLMRV14,ASSWZ18,ASW18,
andoni2019log},
optimization~\cite{MKSK13, EN15, PENW16},
dynamic programming~\cite{IMS16, bateni2018massively}, to name several
as well as many other fundamental graph and optimization problems~\cite{BKV12,andoni2014parallel,
KMV15,AG18, AK17, AK17b, ASSWZ18, brandt2018matching,
czumaj2018round, lkacki2018connected, onak2018round, ABBMS17, assadi2019sublinear,
assadi2019distributed,
behnezhad2019massively, behnezhad2019exponentially, gamlath2018weighted, ghaffari2019conditional,
ghaffari2019sparsifying, hajiaghayi2019massively}.
Perhaps the main goal in these algorithms has been solving the
problems in a constant number of communication rounds while minimizing
the total communication in a round.
Obtaining low round-complexity is well motivated due to the high cost
of a communication round in practice, which is often between minutes
and hours~\cite{KSV10}.
Furthermore, since communication between processors tends to be much
more costly than local computation, ensuring low communication
per-round is also an important criteria for evaluating algorithms in
the \MPC{} model~\cite{sarma2013upper, beame2013communication}.

Perhaps surprisingly, many natural problems such as dynamic
programming~\cite{IMS16} and submodular maximization~\cite{PENW16} can
in fact be solved or approximated in a constant number of
communication rounds in \MPC{} model.
However, despite considerable effort, we are still far from obtaining
constant-round algorithms for many natural problems in the \MPC{} setting
where the space-per-machine is restricted to be sublinear in the
number of vertices in the graph (this setting is arguably the most
reasonable modeling choice, since real-world graphs can have trillions
of vertices). For example, no constant round algorithms are known for
a problem as simple as connectivity in an undirected graph, where the
current best bound is $O( \log{n})$ rounds in general~\cite{KSV10,RMCS13, KLMRV14, ASSWZ18, lkacki2018connected, ASW18}.
Other examples include a $O(\sqrt{\log n})$ round algorithm for
approximate graph
matching~\cite{onak2018round,ghaffari2019sparsifying}, and
a $O(\sqrt{\log \log n})$-round algorithm for $(\Delta+1)$ vertex
coloring~\cite{chang2019coloring}.
Even distinguishing between a single cycle of size $n$ and two cycles
of size $n/2$ has been conjectured to require $\Omega(\log{n})$
rounds~\cite{KSV10, RMCS13, KLMRV14, RVW16, YV17, ghaffari2019conditional, Sungjin19}.
Based on this conjecture,
recent studies have shown that several other graph related problems,
such as maximum matching, vertex cover,  maximum independent set
and single-linkage clustering
cannot be solved in a constant number of rounds~\cite{YV17, ghaffari2019conditional}.
%
%
%, and
%recent studies in this topic have focused on parameterizing in terms
%of diameter~\cite{ASSWZ18} and spectrum~\cite{ASW18}.

On the other hand, most large-scale databases are not formed by a single
atomic snapshot, but form rather gradually through an accretion of
updates. Real world examples of this include the construction of
social networks~\cite{LiuZY10}, the accumulation of log
files~\cite{HingaveI15}, or even the gradual change of the Internet
itself~\cite{DeanG08,KangTF09,MalewiczAMBDHLC10}. In each of these
examples, the database is gradually formed over a period of months, if
not years, of updates, each of which is significantly smaller than the
whole database. It is often the case that the updates are grouped
together, and are periodically processed by the database as a batch.
Furthermore, it is not uncommon to periodically re-index the data
structure to handle a large number of queries between sets of updates.

In this paper, motivated by the gradual change in real-world datasets
through batches of updates, we consider the problem of maintaining
graph properties in dynamically changing graphs in the \MPC{} model.
Our objective is to maintain the graph property for \emph{batches} of
updates, while achieving a \emph{constant number of rounds} of
computation in addition to also minimizing the total \emph{communication}
between machines in a given round.

Specifically, we initiate the study of \emph{parallel batch-dynamic
graph problems} in \MPC{}, in which an update contains a number of mixed
edge insertions and deletions. We believe that batch-dynamic
algorithms in \MPC{} capture the aforementioned real world examples of
gradually changing databases, and provide an efficient distributed
solution when the size of the update is large compared to single update
dynamic algorithms. We note that a similar model for dynamic graph
problems in \MPC{} was recently studied by Italiano et
al.~\cite{italiano2019dynamic}. However, they focus on the scenario
where every update only contains a single edge insertion or deletion.
Parallel batch-dynamic algorithms were also recently studied in the
shared-memory setting by Tseng et al.~\cite{tseng2018batch} for the
forest-connectivity problem and Acar et al.~\cite{acar2019parallel}
for dynamic graph connectivity. However, the depth of these algorithms
is at least $\Omega(\log n)$, and it is not immediately clear whether
these results can be extended to low (constant) round-complexity
batch-dynamic algorithms in the \MPC{} setting.

We also study the power of dynamic algorithms in the \MPC{} setting by
considering a natural ``semi-online" version of the connectivity
problem which we call \emph{adaptive connectivity}. We show that the
adaptive connectivity problem is $\mathsf{P}$-complete, and therefore
in some sense inherently sequential, at least in the centralized
setting. In contrast to this lower bound in the centralized setting,
we show that in the \MPC{} model there is a batch-dynamic algorithm that
can process adaptive batches with size proportional to the space
per-machine in a constant number of rounds. Note that such an
algorithm in the centralized setting (even one that ran in slightly
sublinear depth per batch) would imply an algorithm for the Circuit
Value Problem with polynomial speedup, thus solving a longstanding
open problem in the parallel complexity landscape.

%!TEX root = Main.tex

\subsection{Our Results}

Since graph connectivity proves to be an effective representative for the aforementioned difficulty of graph problems in the \MPC{} model,
the focus of this paper is studying graph connectivity and adaptive graph connectivity in the batch-dynamic \MPC{} model.

\subsubsection*{Graph Connectivity}

%\paragraph{Connectivity, 2-Edge-Connectivity, 3-Edge-Connectivity:}
The dynamic connectivity problem is to determine if a given pair of
vertices belongs to same connected component in the graph as the graph
undergoes (batches of) edge insertions and deletions. The dynamic
connectivity algorithm developed in this paper is based on a
hierarchical partitioning scheme that requires a
more intricate incorporation of sketching based data structures for
the sequential setting.
Not only does our scheme allow us to achieve a constant number of
rounds, but it also allows us to achieve a total communication bound
that is linear with respect to the batch size with only an additional
$n^{o(1)}$ factor.

\begin{restatable}[]{theorem}{Main}
%\begin{theorem}
\label{thm:Main}
In the \MPC{} model with memory per machine $s = \Otil(n^\epsilon)$
we can maintain a dynamic undirected graph on $m$ edges which,
for constants $\delta, \alpha$, and integer $k$ such that $k \cdot
n^{\alpha + \delta} \cdot \mathrm{polylog}(n) \leq s$, can handle the
following operations with high probability:
\begin{enumerate}
  \item A batch of up to $k$ edge insertions/deletions, using $O(1/(\delta\alpha))$ rounds.
  \item
  Query up to $k$ pairs of vertices for 1-edge-connectivity, using $O(1/\alpha)$ rounds.
\end{enumerate}
Furthermore, the total communication for handling a batch of $k$
operations is $\Otil(k n^{\alpha + \delta})$, and the total space used across all machines is $\Otil(m)$.
%\end{theorem}
\end{restatable}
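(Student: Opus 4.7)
My plan is to adapt the Ahn--Guha--McGregor linear-sketching paradigm for dynamic connectivity to the \MPC{} setting, but compressed into a hierarchy of only $L = \Theta(1/\alpha)$ levels rather than the usual $\Theta(\log n)$. At each level $i$, I maintain a ``supernode graph'' $G_i$ obtained by contracting certain edges sampled from the sketches at level $i-1$; the parameters are chosen so that the number of supernodes shrinks by an $n^{\alpha}$ factor per level, so after $L$ levels the residual graph fits on a single machine and connectivity can be resolved locally. The total communication bound $\Otil(k\cdot n^{\alpha+\delta})$ per batch corresponds to each of the $k$ affected edges touching at most one sketch per level, with each sketch of size $n^{\alpha+\delta}\cdot\mathrm{polylog}(n)$.

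\noindent\textbf{Key steps.} First, I set up the sketch data structure: at each level $i$, for every supernode $u\in V(G_i)$ store an $\ell_0$-style sketch $\Ppsi_i(u)$ supporting sampling of an edge leaving $u$ in $G_i$, with enough independent sub-sketches to extract $n^{\alpha}\cdot\mathrm{polylog}(n)$ edges per supernode. Second, for a batch of $k$ edge insertions/deletions, I route each update to the $L$ sketches that contain it, using a distributed hashing/routing primitive that delivers all updates in $O(1/\delta)$ rounds per level (because the load per machine is $k\cdot n^{\alpha+\delta}/s = O(1)$); this yields a total of $O(L/\delta) = O(1/(\delta\alpha))$ rounds for sketch maintenance. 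Third, I recompute the contraction hierarchy from scratch after the updates by, at each level, extracting many samples per supernode from the sketches, then contracting the sampled edges---the contraction itself is an instance of component-finding on a graph of size $n^{1-i\alpha}$ per level, which after $L$ levels reduces to a graph of $O(n^{\alpha})$ supernodes that fits on one machine. Fourth, for a batch of $k$ queries, I look up each endpoint's supernode identifier at every level in parallel in $O(1/\alpha)$ rounds via the distributed dictionary.

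\noindent\textbf{Main obstacle.} The hardest step is step three: a single sample per supernode is too little (it only guarantees a constant-factor shrinkage per level, giving $\log n$ levels), so I must extract many samples per supernode, and these samples are necessarily correlated because they come from the same batch of sketches and share randomness across different queries in the batch. The role of the ``independent random sample extractor from correlated samples'' is to post-process the correlated raw samples drawn from the sketches into a set of approximately-independent edge samples whose joint distribution is close enough to product that standard Bor\r{u}vka-style contraction arguments certify an $n^{\alpha}$-factor drop in supernode count with high probability. The other subtle point is arranging the random bits so that a single machine can locally simulate the sketches it is responsible for without requiring $\Omega(n)$-sized seeds, which is handled by limited-independence hashing with seed length $n^{o(1)}$, contributing the $n^{o(1)}$ slack absorbed into $\delta$.

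\noindent\textbf{Wrap-up.} Combining the three ingredients---$O(L)=O(1/\alpha)$ hierarchy levels, $O(1/\delta)$ rounds per level for sketch updates, and the extractor-based contraction guarantee per level---yields $O(1/(\delta\alpha))$ rounds for a batch of updates and $O(1/\alpha)$ rounds for a batch of queries. Total per-level communication is $\Otil(k\cdot n^{\alpha+\delta})$, and summing over $L$ levels only changes constants, giving the claimed bound. The total space $\Otil(m)$ follows because the level-$0$ sketches dominate and their aggregate size is $\Otil(m)$ by construction.
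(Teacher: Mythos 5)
You correctly identify the two key technical lemmas that make the paper's argument go through: a fast-contraction lemma showing that sampling $n^{\delta}$ edges per component collapses components in $O(1/\delta)$ iterations rather than $O(\log n)$, and an independence-extraction argument to make the correlated per-vertex sketch samples usable in that contraction argument. However, the overall architecture you propose --- a static contraction \emph{hierarchy} with $L = \Theta(1/\alpha)$ levels, where level $i$ has $n^{1-i\alpha}$ supernodes and is ``recomputed from scratch'' after each batch --- differs fundamentally from the paper's and does not support the claimed communication bound.

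\textbf{The gap.} The paper never builds a contraction hierarchy. Instead, it maintains an explicit maximal spanning forest $F$ in a \emph{batch-dynamic Euler-tour / skip-list} structure (Section~\ref{sec:MPCDynamicTree}), in which each node is augmented with its $\widetilde{O}(n^{\delta})$ sketch words. The skip list is sampled with probability $1/n^{\alpha}$ so that it has $O(1/\alpha)$ levels, and this is where the $O(1/\alpha)$ round cost of \textsc{Link}, \textsc{Cut}, \textsc{ID}, \textsc{ComponentSum} comes from. The crucial point is that under a batch of $k$ deletions only at most $2k$ tree fragments are created; the contraction/sampling loop in \textsc{Delete} operates \emph{only} on those fragments, using \textsc{ComponentSum} to pull the aggregate sketch of each fragment, and hence touches $\widetilde{O}(kn^{\alpha+\delta})$ data per iteration. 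In your scheme, level~$0$ of the hierarchy has $n$ supernodes and you state that you rebuild each level by component-finding on a graph of size $n^{1-i\alpha}$; that is $\Theta(n)$ communication just for level~$0$, which is incompatible with the claimed $\widetilde{O}(kn^{\alpha+\delta})$ per-batch communication, and you give no mechanism that restricts the recomputation to the $O(k)$ affected pieces. Relatedly, you do not describe any distributed structure for aggregating the per-vertex sketch values over a dynamically changing connected component --- this is precisely what the augmented dynamic forest provides via XOR-\textsc{ComponentSum}, and without it your per-supernode sketches $\Ppsi_i(u)$ cannot be formed or updated within the stated bounds. Finally, the paper's $O(1/(\delta\alpha))$ round bound for deletions arises as $O(1/\delta)$ iterations of the contraction loop, each iteration paying $O(1/\alpha)$ rounds for forest operations; your factoring into $O(1/\alpha)$ hierarchy levels times $O(1/\delta)$ routing rounds is a different (and, given the communication issue, unsupported) decomposition. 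To repair the proposal you would need to replace the static hierarchy with (i) a batch-dynamic spanning forest with $O(1/\alpha)$-round operations, (ii) sketch augmentation supporting \textsc{ComponentSum}, and (iii) restricting the fast-contraction loop to the $O(k)$ tree fragments produced by the batch.
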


\subsubsection*{Adaptive Connectivity and Lower-Bounds in the Batch-Dynamic \MPC{} Model}

In the \emph{adaptive connectivity problem}, we are given a sequence
of query/update pairs. The problem is to process each query/update pair in
order, where each query determines whether or not a given pair of
vertices belongs to the same connected component of the graph, and
applies the corresponding dynamic update to the graph if the query
succeeds. We obtain the following corollary by applying our batch-dynamic connectivity algorithm, Theorem~\ref{thm:Main}.

\begin{restatable}[]{corollary}{AdaptiveUpperBound}
In the \MPC{} model with memory per machine $s = \Otil(n^\epsilon)$
we can maintain a dynamic undirected graph on $m$ edges
which for constants $\delta, \alpha$, and integer $k$ such that $k \cdot n^{\alpha + \delta} \cdot \mathrm{polylog}(n) \leq s$
can handle the following operation with high probability:
\begin{enumerate}
  \item An adaptive batch of up to $k$ (query, edge
    insertions/deletions) pairs, using $O(1/(\delta\alpha))$ rounds.
\end{enumerate}
Furthermore, the total communication for handling a batch of $k$
operations is $\Otil(k n^{\alpha + \delta})$, and the total space used across
all machines is $\Otil(m)$.
\end{restatable}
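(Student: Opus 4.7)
The plan is to reduce the adaptive batch to a local simulation on a single machine, invoking Theorem~\ref{thm:Main} only as a black box for initial extraction and final commitment. The key observation is that the $k$ (query, update) pairs collectively mention at most $O(k)$ distinct vertices and at most $k$ candidate edges, so all the information needed to replay them sequentially can be compressed to fit on one machine under the hypothesis $k \cdot n^{\alpha+\delta} \cdot \mathrm{polylog}(n) \leq s$.

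First I would use the batch query primitive of Theorem~\ref{thm:Main} to pull, onto a single machine, enough of the maintained spanning-forest and sketch state to resolve every connectivity question and every tree-edge replacement question involving only the $O(k)$ active vertices. By the theorem this uses $O(1/(\delta\alpha))$ rounds and $\Otil(k n^{\alpha+\delta})$ communication, and by the same hypothesis the extracted data fits within the per-machine memory $s$.

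Next, on that single machine, I would maintain an off-the-shelf sequential dynamic-connectivity structure on the extracted subgraph and process the $k$ pairs in their adaptive order: answer query $i$ against the local structure, and if the answer is affirmative apply update $i$ to the local structure. This step is purely local and adds no rounds of inter-machine communication. Let $U \subseteq \{1, \dots, k\}$ be the indices of the updates that actually fired during the replay. I would then commit those updates to the global data structure by invoking the batch update primitive of Theorem~\ref{thm:Main} on $\{\text{update}_i : i \in U\}$, using another $O(1/(\delta\alpha))$ rounds and $\Otil(k n^{\alpha+\delta})$ communication. Composing the three stages gives the advertised round count, communication, and space bounds.

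The main obstacle is certifying that the extraction in the first step really captures everything the local simulation might need: a tree-edge deletion in the local structure may trigger a search for a replacement edge that lives outside the extracted subgraph. The correctness of the reduction therefore rests on the sketches underlying Theorem~\ref{thm:Main} being restrictable to the $O(k)$ active vertices and composing correctly under sequential deletions, which should follow from the linearity and independence properties of the $\ell_0$-sampling sketches that power the sequential dynamic-connectivity machinery on which Theorem~\ref{thm:Main} is built.
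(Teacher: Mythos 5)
Your high-level three-stage plan (extract state, simulate sequentially on one machine, commit the fired updates) matches the paper's skeleton, and the final commit stage is handled the same way. But the extraction stage is where all the difficulty lives, and your treatment of it has a real gap that you yourself flag but do not close.

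The obstacle you name — that a tree-edge deletion triggers a search for a replacement edge that lives anywhere in the affected component, not just among the $O(k)$ active vertices — is not resolved by hoping the sketches are ``restrictable to the $O(k)$ active vertices.'' They are not: the sketch used to find a replacement edge is the XOR over \emph{all} vertices of a component (via \textsc{ComponentSum}), and components can have $\Omega(n)$ vertices. There is no way to pull a sublinear fragment of the sketch state onto one machine and still be able to produce replacement edges for arbitrary tree-edge deletions during the local replay. Moreover, the local replay is adaptive: whether the $i$-th deletion fires depends on the outcome of the $(i-1)$-th query, so you do not even know in advance which deletions' replacement edges you would need.

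The paper resolves this with a different and more concrete idea. Rather than trying to restrict the sketches, it runs the batch-deletion machinery of Theorem~\ref{thm:Main} \emph{speculatively} on the entire set of deletions appearing in the adaptive batch (ignoring adaptivity), so the global data structure itself does the work of discovering replacement edges. Because at most $k$ tree edges can be cut, at most $O(k)$ replacement edges are discovered, and this set dominates anything the adaptive replay could need: if a deletion actually fires during the replay, its replacement edge is among those found speculatively; if it does not fire, the extra discovered edge is simply ignored. These $O(k)$ replacement edges plus the $O(k)$ batch edges fit on a single machine, the structure is rolled back to $G$, the replay is simulated locally, and the real updates are then committed. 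This speculative-delete-then-undo step is the missing ingredient in your proposal, and without it the extraction step as you describe it does not go through.
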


We also provide a lower-bound for the adaptive connectivity problem in
the centralized setting, showing that the problem is
$\mathsf{P}$-complete under $\mathsf{NC}^{1}$ reduction.
$\mathsf{P}$-completeness is a standard notion of parallel hardness~\cite{kruskal1990complexity, Greenlaw1995, blelloch1996parallel}.
As a consequence of our reduction, we show that the adaptive
connectivity algorithm does not admit a parallel algorithm in the
centralized setting with polynomial speedup, unless the
(Topologically-Ordered) Circuit Value Problem admits a parallel
algorithm with polynomial speedup, which is a long-standing open
problem in parallel complexity literature.

\begin{restatable}[]{theorem}{AdaptiveLowerBound}
\label{thm:1connlb}
%\begin{theorem}\label{thm:1connlb}
The adaptive connectivity problem is $\mathsf{P}$-complete under
$\mathsf{NC}^{1}$ reductions.
%\end{theorem}
\end{restatable}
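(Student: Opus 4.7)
Membership in \classP{} is immediate: a centralized algorithm processes the sequence of (query, update) pairs in order using any polynomial-time dynamic-connectivity data structure. For hardness under \NCone{} reductions, my plan is to reduce from the topologically-ordered Monotone Circuit Value Problem (MCVP), a canonical \classP{}-complete problem under \NCone{} reductions. Given a monotone circuit $C$ with gates $g_1,\dots,g_n$ listed in topological order together with an assignment to the input wires, I will construct an adaptive-connectivity instance whose designated final query returns $1$ iff $C$ evaluates to $1$.

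For each wire $w$ introduce two fresh vertices $s_w, t_w$, and for every AND gate with output wire $c$ one further fresh vertex $m_c$. The invariant I maintain is that after processing the operations for gate $g_i$ with output wire $c$, the vertices $s_c$ and $t_c$ lie in the same component iff wire $c$ evaluates to $1$ in $C$. For each input wire $w$ assigned $1$, emit a trivially-succeeding pair with query $(s_w,s_w)$ and update ``insert $(s_w,t_w)$''. For an OR gate $c = a \vee b$, emit $\bigl(\text{query }(s_a,t_a),\,\text{insert }(s_c,t_c)\bigr)$ and $\bigl(\text{query }(s_b,t_b),\,\text{insert }(s_c,t_c)\bigr)$. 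For an AND gate $c = a \wedge b$, emit $\bigl(\text{query }(s_a,t_a),\,\text{insert }(s_c,m_c)\bigr)$ and $\bigl(\text{query }(s_b,t_b),\,\text{insert }(m_c,t_c)\bigr)$. The final operation is a query on $(s_{\mathrm{out}}, t_{\mathrm{out}})$ whose answer is the output bit of $C$.

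Correctness is a direct induction on topological order: the OR gadget adds the edge $(s_c,t_c)$ exactly when at least one input carries $1$, while the AND gadget completes the two-hop path $s_c$--$m_c$--$t_c$ exactly when both inputs carry $1$. Because $m_c$ and each pair $(s_w,t_w)$ are fresh to gate $g_w$ and no later operation touches edges incident to them, no spurious connection between $s_c$ and $t_c$ can arise through other gadgets. The reduction is \NCone{}-uniform since each gate produces a constant-size block of operations whose vertex labels come from a trivial arithmetic encoding of the wire index (for instance $s_w \mapsto 3w$, $t_w \mapsto 3w+1$, $m_c \mapsto 3c+2$), and the output is the concatenation of these blocks in the given topological order.

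The main hazard is that gadgets for different gates could leak connectivity into each other through the shared graph; I guard against this by using a fresh intermediate vertex $m_c$ per AND gate rather than reusing one, and by ensuring that subsequent gates only \emph{query} the endpoints $(s_w,t_w)$ and never insert any edge incident to them. Once these isolation properties are in place, the induction on topological order is clean, so adaptive connectivity is \classP{}-hard under \NCone{} reductions, and combining with its membership in \classP{} yields the claim.
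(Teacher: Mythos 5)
Your proof is correct, but it takes a genuinely different route from the paper. The paper reduces from the Topologically-Ordered Circuit Value Problem with $\wedge$, $\vee$, and $\lnot$ gates, using a single root vertex $r$ and one vertex per gate, with the invariant that the component of $r$ contains exactly the gates that evaluate to $\mathsf{true}$; the $\lnot$ gate is handled by a query of the form $\lnot\,\mathsf{Connected}(r, v_a)$. You instead reduce from topologically-ordered \emph{monotone} CVP, sidestepping negation entirely, and use a decentralized gadget: a fresh pair $(s_w, t_w)$ per wire whose connectivity encodes that wire's bit, with a fresh intermediate vertex $m_c$ so the AND gadget forms a two-hop path $s_c$--$m_c$--$t_c$. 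Both encodings are sound. The paper's single-root design reuses connectivity to the root for all gates (which makes the $\wedge$ query $\mathsf{Connected}(v_a, v_b)$ work for free, since two true gates are both joined to $r$) and handles the non-monotone circuit directly, so it does not need to invoke the $\classP$-completeness of monotone CVP as a separate fact. Your per-wire encoding is arguably easier to argue correct because the isolation property is local (the only edges ever incident to $s_c, t_c, m_c$ are those emitted by gate $c$'s own block), and it never uses negated connectivity queries even though the problem definition permits them; the cost is that you must cite the standard fact that topologically-ordered monotone CVP is $\classP$-complete under $\NCone$ reductions. Two very minor technical points, both shared with the paper: when both inputs to an $\vee$ gate are true you insert $(s_c, t_c)$ twice, so the reduction implicitly treats the graph as a multigraph or insertion as idempotent; and the query $\mathsf{Connected}(s_w, s_w)$ is a reflexive query you use as a noop guard, which is harmless but worth stating as always-true by convention.
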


By observing that our reduction, and the $\mathsf{NC}^{1}$ reductions
proving the hardness for the Circuit Value Problem can be done in
$O(1)$ rounds of \MPC{}, we have the following corollary in the \MPC{}
setting.
\begin{restatable}[]{corollary}{MPCLowerBound}
\label{thm:mpclowerbound}
In the \MPC{} model with memory per machine $s = \Otil(n^\epsilon)$ for
some constant $\epsilon$, if adaptive connectivity on a sequence of
size $O(n)$ can be solved in $O(k)$ rounds, then every problem in
$\mathsf{P}$ can be solved in $O(k)$ rounds.
\end{restatable}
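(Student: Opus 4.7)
The plan is to chain two reductions together. By Theorem~\ref{thm:1connlb} we already know that adaptive connectivity is $\mathsf{P}$-complete under $\NCone$ reductions, so every problem in $\classP$ reduces to adaptive connectivity via an $\NCone$ reduction. The remaining task is therefore to verify that this reduction, together with the reduction from an arbitrary problem in $\classP$ to the canonical $\mathsf{P}$-complete problem (e.g.\ the Topologically-Ordered Circuit Value Problem), can be carried out in $O(1)$ rounds of $\MPC{}$ with memory $\Otil(n^\epsilon)$ per machine, producing an adaptive connectivity instance of length $O(n)$. If so, we feed the result into the hypothesized $O(k)$-round adaptive connectivity algorithm and obtain an $O(1) + O(k) = O(k)$-round algorithm for the original $\classP$ problem.

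The core step is thus a simulation argument: any $\NCone$ computation whose output length is $O(n)$ and whose input length is $\mathrm{poly}(n)$ can be performed in $O(1)$ rounds of $\MPC{}$ with $\Otil(n^\epsilon)$ space per machine. The cleanest way I would do this is to appeal to the structural properties of the specific reductions used in the proof of Theorem~\ref{thm:1connlb}: the reduction from the Topologically-Ordered CVP to adaptive connectivity is \emph{local}, in that each gate of the circuit contributes a constant-size block of (query, update) pairs to the produced sequence, indexed by the gate's topological position and its input wires. Each output symbol therefore depends on $O(1)$ input bits in known positions, which is an $\mathsf{AC}^0$ (indeed $\mathsf{NC}^0$) mapping and can be evaluated in a single round: every machine reads the inputs relevant to its output block, which is possible since each input bit is needed by only $O(1)$ output positions and because $n^\epsilon$ space per machine comfortably holds polynomially many gate descriptions.

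For the outer reduction, from a general problem $L \in \classP$ to Topologically-Ordered CVP, the reduction is the usual one that writes out the tableau of the polynomial-time machine as a circuit; this is also a highly uniform, bit-local mapping (each gate's identity and wiring is computed by an $\mathsf{AC}^0$ formula in the positions of the tableau). Hence it too can be produced in $O(1)$ $\MPC{}$ rounds using $\Otil(n^\epsilon)$-space machines, by having each machine responsible for generating a contiguous range of gate descriptions from the small amount of input it needs. Composing these two $O(1)$-round reductions with the assumed $O(k)$-round adaptive connectivity algorithm yields an $O(k)$-round $\MPC{}$ algorithm for $L$, proving the corollary.

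The main obstacle I anticipate is the requirement that the adaptive connectivity instance have size $O(n)$ (matching the hypothesis) rather than polynomial in the original input. This is handled by the standard trick of reparameterizing: we regard the instance size of the general $\classP$ problem as the relevant $n$, and choose the $\MPC{}$ parameters so that the circuit produced has polynomial size in this $n$, so that the resulting adaptive connectivity sequence has length $O(n^c)$; then invoking the hypothesis on inputs of length $n' = n^c$ and redefining $n \leftarrow n'$ is sufficient, since the asymptotic claim ``every problem in $\classP$ can be solved in $O(k)$ rounds'' is robust to polynomial changes in input size. The only subtlety to check is that the space-per-machine assumption $s = \Otil(n^\epsilon)$ continues to be $\Otil((n')^{\epsilon'})$ for a suitable constant $\epsilon' > 0$, which holds with $\epsilon' = \epsilon/c$.
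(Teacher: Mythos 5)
Your proposal follows the same route as the paper: observe that the $\NCone$ reductions (from an arbitrary $\classP$ problem to Topologically-Ordered CVP, and from there to adaptive connectivity via Theorem~\ref{thm:1connlb}) can each be carried out in $O(1)$ rounds of $\MPC{}$, then feed the resulting instance to the hypothesized $O(k)$-round algorithm. The paper's proof is terser and does not spell out the locality/reparameterization details you give, but the argument is the same.
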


\subsection{Batch-Dynamic \MPC{} Model}\label{sec:dynamic_mpc_model}
In this section, we first introduce the massively parallel computation (\MPC{}) model,
followed by the batch-dynamic \MPC{} model which is the main focus of
this paper.

\paragraph{Massively Parallel Computation (\MPC{}) Model.}
The Massively Parallel Computation (\MPC{}) model
is a widely accepted theoretical model for parallel computation~\cite{KSV10}.
Here, the input graph $G$ has $n$ vertices and at most $m$ edges
at any given instant.
We are given $p$ processors/machines, each with local memory for storage $s = \widetilde{\Theta}(m/p)$.\footnote{Throughout this paper, $\widetilde{\Theta}$ and $\Otil$ hide polylogarithmic terms in the size of the input.}
Note that we usually assume $m^{1-\delta} \geq p \geq m^{\delta}$, for some $\delta>0$.
This is because the model is relevant only when the number of machines and
the local memory per machine are significantly smaller than the size of the input.

The computation in the \MPC{} model proceeds via rounds.
Initially, the input data is distributed across the processors arbitrarily.
During each round, each processor runs a polynomial-time algorithm
on the data which it contains locally.
Between rounds, each machine receives at most $\mu$ amount of data from other machines.
The total data received by all machines between any two rounds is termed as the communication cost.
Note that no computation can occur between rounds, and equivalently,
no communication can occur during a round.

The aim for our algorithms in this model is twofold.
Firstly and most importantly, we want to minimize the number of rounds required for our algorithm,
since this cost is the major bottleneck of massively parallel algorithms in practice.
Ideally, we would want this number to be as low as $O(1)$.
Secondly, we want to decrease the maximum communication cost over all rounds,
since the costs of communication between processors in practice are massive
in comparison to local computation.

\paragraph{Batch-Dynamic \MPC{} Model.}
%\todo{should this be in its own section?}
At a high-level, our model works as follows.
Similar to recent works by Acar et al.~\cite{acar2019parallel} and Tseng et al.~\cite{tseng2018batch}, we assume that the graph undergoes
batches of insertions and deletions, and in the initial round of each computation,
an update or query batch is distributed to an arbitrary machine.  %that we can distribute to the desired
%processors.
%\saurabh{The second part of the previous sentence is unclear - what are we trying to say?}\xiaorui{fixed}
The underlying computational model used is the \MPC{} model,
and assume that space per machine is strongly sublinear with respect
to the number of vertices of the graph, that is, $O(n^\alpha)$ for
some constant $0 < \alpha < 1$.

More formally, we assume there are two kinds of operations in a batch:
\begin{enumerate}
\item Update: A set of edge insertions/deletions of size up to $k$.
\item Query: A set of graph property queries of size up to $k$.
\end{enumerate}
%Furthermore, we will consider batches of queries being made, and the number of updates or queries in a batch will never exceed the size of any processor.
For every batch of updates, the algorithm needs to properly maintain the graph according to the edge insertions/deletions such that
the algorithm can accurately answer a batch of queries at any instant.
We believe that considering batches of updates and queries most closely relates to practice where often multiple updates occur in the examined network before another query is made.
Furthermore, in the \MPC{} model there is a distinction between a batch of updates and a single update, unlike the standard model, because it is possible for the batch update to be made in parallel,
and handling batch updates or queries is as efficient as handling a single update or query, especially in terms of the number of communication rounds.
%, which can still keep the number of rounds constant and also minimize the total communication.

We use  two criteria to measure the efficiency of parallel dynamic algorithms: the number of  communication rounds and the total communication between different machines.
Note that massively parallel algorithms for static problems are often most concerned with communication rounds.
In contrast, we also optimize the total communication in the dynamic setting, since the total communication becomes a bottleneck for practice when overall data size is very huge, especially when the update is much smaller than the total information of the graph.
Ideally, we want to handle batches of updates and queries in constant communication rounds and sublinear  total communication with respect to the number of vertices in the graph.

The key algorithmic difference between the dynamic model we introduce here and the \MPC{} model is that we can decide how to partition the input into processors as updates occur to the graph.

Dynamic problems in the \MPC{} model were  studied in the very
recent paper by Italiano et al.~\cite{italiano2019dynamic}.
Their result only explicitly considers the single update case.
%Although we believe it can be  extended to handle batches as well.
In the batch-dynamic scenario, the result of \cite{italiano2019dynamic} generalizes but has higher dependencies on batch
sizes in both number of rounds and total communication.%but is less efficient in terms of number of communication rounds or the total communication.
Our incorporation of graph  sketching, fast contraction, and batch search trees are all critical for obtaining our optimized dependencies on batch sizes.

%!TEX root = Main.tex

\subsection{Our Techniques}\label{sec:intro_techniques}

In this section we give in-depth discussion of the primary techniques
used to achieve the results presented in the previous section.

\subsubsection*{Connectivity}

Without loss of generality, we assume that the batch of updates is either only edge insertions or only edge deletions.
For a mixed update batch with both insertions and deletions, we can simply handle the edge deletions first, and then the edge insertions.
In case the same edge is being inserted and deleted, we simply eliminate both operations.

Similar to previous results on dynamic connectivity~\cite{Frederickson85, GalilI92, HenzingerK99,HolmDT01,ahn2012analyzing, KapronKM13,
	GibbKKT15:arxiv,NanongkaiS17,Wulffnilsen17,NanongkaiSW17},
we maintain a maximal spanning forest.
This forest encodes the connectivity information in the graph,
and more importantly,
%undergoes at most one update per update
undergoes few changes per update to the graph.
Specifically:
\begin{enumerate}
	\item An \emph{insert} can cause at most two trees in $F$ to be joined to form
	a single tree.
	\item A \emph{delete} may split a tree into two, but if there
	exists another edge between these two resulting trees, they should
	then be connected together to ensure that the forest is maximal.
\end{enumerate}

Our dynamic trees data structure adapts the recently
developed parallel batch-dynamic data structure for maintaining a maximal spanning forest in the shared-memory
setting by Tseng et al.~\cite{tseng2018batch} to the \MPC{} model.
Specifically, \cite{tseng2018batch} give a
parallel batch-dynamic algorithm that runs in $O(\log n)$ depth w.h.p.
to insert $k$ new edges to the spanning forest,
to remove $k$ existing edges in the spanning forest,
or to query the IDs of the spanning tree containing the given $k$ vertices.
We show that the data structure can be modified to achieve $O(1/\alpha)$
round-complexity and $O(k\cdot n^{\alpha})$ communication for any small constant $\alpha$ satisfying $k \cdot n^\alpha \cdot \mathrm{polylog}(n) \leq s$ in the \MPC{} setting.
In addition, if we associate with each vertex a key of length $\ell_{key}$,
then we can query and update a batch of $k$ key values in $O(1/\alpha)$
round-complexity and $O(k \cdot \ell_{key}\cdot)$ communication.

With a parallel batch-dynamic data structure to maintain a maximal spanning forest,
a batch of edge insertions or edge queries for the dynamic
connectivity problem can be handled  in $O(1/\alpha)$
round-complexity and $O(k\cdot n^\alpha)$ communication for any constant $\alpha$.
Our strategy for insertions and queries is similar to the dynamic
connectivity algorithm of Italiano et al.~\cite{italiano2019dynamic}:
A set of edge queries can be handled by querying the IDs of the spanning tree of all the vertices involved.
Two vertices are in the same connected component if and only if their IDs are equal.
To process a batch of edge insertions, we maintain the maximal spanning forest by first identifying
the set of edges in the given batch that join different spanning trees without creating cycles using ID queries, and then inserting these edges to the spanning forest, by linking their respective trees.

Handling a set of edge deletions, however, is more complex.
This is because if some spanning forest edges are removed,
then we need to find replacement edges which are in the graph, but previously not in the spanning forest,
that can be added to the spanning forest without creating cycles.
To facilitate this,
we incorporate developments in sketching
based sequential data structures for dynamic connectivity~\cite{ahn2012analyzing, KapronKM13}.

To construct a sketch of parameter $0 < p < 1$ for a graph, we first independently sample every edge of the graph with probability $p$, and then set the sketch for each vertex to be the XOR of the IDs for all the sampled edges which are incident to the vertex.
A sketch has the property that for any subset of vertices, the XOR of the sketches of these vertices equals to the XOR of the IDs for all the sampled edges leaving the vertex subset.
In particular, if there is only a single sampled edge leaving the vertex subset, then the XOR of the sketches of these vertices equals to the ID of the edge leaving the vertex subset.

The high level idea of \cite{ahn2012analyzing, KapronKM13} is to use sketches for each current connected component to sample previous non-tree edges going out of the connected component using sketches with different parameters, and use these edges to merge
connected components that are separated after deleting some tree edges.
We visualize this process by representing each connected component as a vertex in a multigraph,
and finding a replacement non-tree edge between the two components as the process of merging these
two vertices.
At first glance, it seems like we can translate this approach
to the \MPC{} model by storing all the sketches for each connected component
in a single machine.
However, directly translating such a data structure leads to either $\text{polylog}(n)$ communication rounds or $\Omega(m)$ total communication per update batch.
To see this, let us look at some intuitive ideas to adapt this data structure to the \MPC{} model,
and provide some insight into why they have certain limitations:
\begin{enumerate}
\item \textbf{Sketch on the original graph}:
For this case, once we use the sketch to sample an edge going out of a given connected component,
we only know the ID of the two vertices of the edge, but not the two connected components the edge connects.
Obtaining the information about which connected components the endpoints belong to requires communication,
because a single machine cannot store the connected component ID of each vertex in the graph.
Hence, to contract all the connected components using sampled edges for each connected component, we need one round of communication.
Since we may need to reconnect as many as $k$ connected components ($k$ is the number of deletions, i.e., the batch size),
this approach could possibly require $\log k = \Theta(\log n)$ communication rounds.
\item \textbf{Sketch on the contracted graph where every connected component is contracted to a single vertex}:
To do this, each edge needs to know which connected components its endpoints belong to.
If we split a connected component into several new connected components after deleting some tree edges,
the edges whose vertices previously belong to same connected component may now belong to different connected components.
To let each edge know which connected components its endpoints belong to,
we need to broadcast the mapping between vertices and connected components to all the related edges.
Hence, the total communication can be as large as $\Omega(m)$.
To further illustrate this difficulty via an example,
consider the scenario that the current maximal spanning forest is a path of $n$ vertices, and
a batch of $k$ edge deletions break the path into $k+1$ short paths.
In this case, almost all the vertices change their connected component IDs.
In order to
find edges previously not in the maximal spanning forest to link these $k+1$ path,
every edge needs to know if the two vertices of the edge belong to same connected component or not,
and to do this, the update of connected component ID for vertices of every edge requires $\Omega(m)$ communication.
%\xiaorui{One STOC reviewer have following concern "One could potentially have a set of machines to maintain the component id of every node in the graph so that every machine i maintains all the id for machines with ids between $[i n^\delta, (i+1) n^\delta]$. Or am I missing something?". Is it ok now?}
\end{enumerate}

The high level idea of our solution is to speed up the ``contraction" process such that constant iterations suffice to shrink all the connected components into a single vertex.
To do this, sampling $\Otil(1)$ edges leaving each  connected component in each iterations (as previous work) is not enough,
because of the existence of low conductance graph.
Hence,
we need to sample a much larger number of edges leaving each connected component.
Following this intuition, we prove a \textbf{fast contraction lemma} which shows that picking $n^\alpha$
edges out of each component finds all connecting non-tree edges between components
within $O(1/\alpha)$ iterations.

However, a complication that arises with the aforementioned fast contraction lemma is that
it requires the edges leaving a component to be \emph{independently} sampled.
But the edges sampled by a single sketch are correlated.
This correlation comes from the fact that
a sketch outputs an edge leaving a connected component if and only if there is only one sampled edge leaving that connected component.
To address this issue, we construct an \textbf{{independent sample extractor}} to identify enough edges that are eventually sampled independently based on the sketches
and show that these edges are enough to simulate the independent sampling process required by the fast contraction lemma. % contraction process.

We discuss these two ideas in depth below. 
In the rest of this section, 
we assume without loss of generality that every current connected component is contracted into a single vertex, since the sampled edges are canceled under the XOR operation for sketches. 

%To get around of this difficulty,

%We give a fast contraction lemma which allows to use multiple sketches to reduce the number of edges with a rate polynomial of the number of sketches,
%and an extractor to obtain independent edge samples from correlated edges given by sketches.
%Directly translating such a data structure does give $\text{polylog}(n)$ communication
%per update, but at the cost of a prohibitive $\text{polylog}(n)$ rounds per update.

%\vspace{.3cm}
\paragraph{Fast Contraction Lemma.}
We first define a random process for edge sampling (which we term \ContractionSampling)
in Definition~\ref{def:contractionsampling}.
The underlying motivation for such a definition is that the edges
obtained from the sketch are not independently sampled.
So, we tweak the sampling process via an independent sample extractor,
which can then produce edges which obey the random process \ContractionSampling.
Before discussing this independent sample extractor,
we will first outline how edges sampled using \ContractionSampling
suffice for fast contraction.
%We instead incorporate the sketching based routine into our hierarchical partitioning.
%Specifically,
%we give adaptations of the tree data structures and the underlying
%dynamic graph algorithms in Sections~\ref{sec:1conn}
%and~\ref{sec:MPCDynamicTree}, and make use of $n^\delta$ sketches to speed up the contraction process by defining
%a so-called ContractionSampling process(formally defined in \ref{def:contractionsampling}) and making use of the following lemma. %Lemma~\ref{lem:contraction_main}.

\begin{restatable}[\ContractionSampling process]{definition}{ContractionSamplingDef}
\label{def:contractionsampling}
%\begin{definition}
The random process \ContractionSampling  for a multigraph $G = (V, E)$ and an integer $k$ is defined as follows:
	each vertex $v$ independently draws $t_v$ samples $S_{v, 1}, S_{v, 2}, \dots S_{v, t_v}$
	for some integer $t_v \geq k$ such that
		\begin{enumerate}
		\item %For any $i \in t_v$,
		the outcome of each $S_{v, i}$ can be an either an edge incident to $v$ or $\bot$;
		\item
		for every edge $e$ incident to vertex $v$,
		\[\sum_{i = 1}^{t_v} \Pr[S_{v, i} = e] \geq \Omega\left(\frac{k \log^2 n}{\deg_G(v)}\right).\]
		\end{enumerate}
		%Let $E'$ be the set of all edges sampled in this manner by at least one vertex. }
%\end{definition}
\end{restatable}

We show that in each connected component, if we contract edges sampled by the \ContractionSampling process,
the number of edges remaining reduces by a polynomial factor with high probability by taking $k = \text{poly}(n)$.

\begin{restatable}[]{lemma}{Contraction}

%\begin{lemma}
\label{lem:contraction_main}
	Consider the following contraction scheme starting with a multigraph $G(V,E)$
	on $n$ vertices and $m < poly(n)$ (multi) edges:
	For a fixed integer $k$,
	\begin{enumerate}
		\item
		let $E'$ be a set of edges sampled by the \ContractionSampling process;
		\item contract vertices belonging to same connected component of graph $G' = (V, E')$ into a new graph $G^\star = (V^\star, E^\star)$ as follows:
		each vertex of $V^\star$ represents a connected component in the sampled graph $G' = (V, E')$,
		and there is an edge between two vertices $x, y \in V^\star$ iff
		there is an edge in $G$ between the components corresponding to $x$ and $y$,
		with edge multiplicity equal to the sum of multiplicity of edges in $G$ between the components
		corresponding to $x$ and $y$.
	\end{enumerate}
	Then the resultant graph has at most $\widetilde O(m k^{-1/3})$ (multi) edges with high probability.

%\end{lemma}
\end{restatable}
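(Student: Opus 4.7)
The plan is to bound $\mathbb{E}[|E^\star|] \leq \Otil(m k^{-1/3})$ via a per-edge survival analysis, and then upgrade to a high-probability bound via a bounded-differences inequality. Writing $|E^\star| = \sum_{e = (u,v) \in E} \mathbf{1}[u \not\sim_{G'} v]$, the main task reduces to proving $\Pr[u \not\sim_{G'} v] \leq \Otil(k^{-1/3})$ for every edge $e = (u,v) \in E$.

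The first step I would take is a coupling to an independent Bernoulli sampling. The independence of the $S_{v,i}$'s across $i$ combined with the hypothesis $\sum_i \Pr[S_{v,i} = e] \geq \Omega(k \log^2 n / \deg_G(v))$ yields $\Pr[\exists i \colon S_{v,i} = e] \geq 1 - \exp(-\Omega(k \log^2 n / \deg_G(v)))$. By coupling one obtains a subgraph $\hat E' \subseteq E'$ in which each edge $e = (u,v)$ appears independently with probability $p_e \geq \Omega(\min(1, k \log^2 n / \min(\deg_G(u), \deg_G(v))))$. Since contraction is monotone in $E'$, it suffices to prove the lemma assuming $E' = \hat E'$.

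Next I would split edges by a degree threshold. Call $e = (u,v)$ \emph{light} if $\min(\deg_G(u), \deg_G(v)) \leq k \log n / 10$ and \emph{heavy} otherwise. For light edges, $p_e \geq 1 - n^{-10}$, so $e$ is sampled and its endpoints are merged in $G'$ with probability $\geq 1 - n^{-10}$, contributing $\Otil(1)$ in total by a union bound over the at most polynomially many edges. For heavy edges, where $\deg_G(u), \deg_G(v) > k \log n / 10$, the core of the argument is a short-path analysis in $G'$: the one-hop sampled neighborhoods $R_u, R_v \subseteq V$ each have size $\Omega(k \log^2 n)$ w.h.p.~by Chernoff, and I would lower bound the expected number of $2$- and $3$-hop paths from $u$ to $v$ in $\hat E'$. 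The $1/3$ exponent arises because a generic $3$-hop path traverses three independent sampling events contributing factors of order $k / \deg_G(\cdot)$, which balanced against the handshake identity $\sum_e (1/\deg_G(u) + 1/\deg_G(v)) = n$ yields $\Pr[u \not\sim_{G'} v] \leq \Otil(k^{-1/3})$. Summing over heavy edges gives $\mathbb{E}[|E^\star|] \leq \Otil(m k^{-1/3})$.

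To upgrade the expectation bound to a high-probability bound, I would apply a McDiarmid bounded-differences inequality: viewing $|E^\star|$ as a function of the independent per-vertex samples $\{S_{v,i}\}_{v \in V}$, modifying vertex $v$'s samples can change $|E^\star|$ by at most $\deg_G(v)$, so Azuma-type concentration yields deviations of order $\Otil(\sqrt{m})$, absorbed into the polylogarithmic factors once $m k^{-1/3} \gtrsim \sqrt{m}$; the corner case $m \leq k^{2/3}$ is handled directly since then $\Otil(m k^{-1/3}) = \Omega(1)$ and $|E^\star| \leq m$. The main obstacle is the heavy-edge short-path analysis, especially in the bipartite-like regime where $u$ and $v$ share few common neighbors and a two-hop intersection argument does not apply; there one must count $3$-hop paths while carefully tracking correlations between $R_u$, $R_v$, and the edges sampled within their induced subgraphs, and the $\log^2 n$ slack in the sampling hypothesis is essential to make this calculation tight.
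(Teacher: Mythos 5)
Your proposal reduces the lemma to a per-edge survival bound, $\Pr[u \not\sim_{G'} v] \leq \widetilde O(k^{-1/3})$ for every edge $e = (u,v) \in E$, and then tries to establish this by a short-path analysis. This reduction is incorrect: the per-edge bound is simply false, and no amount of $2$-hop or $3$-hop path counting can rescue it. Consider the (simple) graph consisting of two cliques $K_A, K_B$ on $N$ vertices each joined by a single bridge edge $uv$, with $u \in K_A$, $v \in K_B$, and $N \gg k$. The edge $uv$ is heavy by your classification since $\deg_G(u) = \deg_G(v) = N$. The $\ContractionSampling$ hypothesis only guarantees $uv$ is sampled with probability $\Theta(k \log^2 n / N)$ from each endpoint, so $\Pr[uv \notin E'] \geq 1 - O(k\log^2 n / N) \to 1$. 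And $uv$ is a bridge, so $u \sim_{G'} v$ if and only if $uv \in E'$; there are \emph{no} $2$- or $3$-hop alternatives, and the one-hop sampled neighborhoods $R_u \subseteq K_A$, $R_v \subseteq K_B$ are contained in disjoint sets. Hence $\Pr[u \not\sim_{G'} v] \to 1$, falsifying the bound you rely on. (The lemma itself is fine here: $m \approx N^2$ and only the single bridge survives, so $1 \ll \widetilde O(m k^{-1/3})$.) The multigraph setting only makes this worse, since your Chernoff claim that $|R_u| = \Omega(k\log^2 n)$ w.h.p. presupposes $u$ has many \emph{distinct} neighbors, which parallel edges violate.

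The structural reason your approach cannot work is that the $\widetilde O(m k^{-1/3})$ bound is \emph{not} attained edge-by-edge; some edges (bridges, low-conductance cuts) survive with probability close to $1$, and what the lemma actually promises is that the \emph{total number} of such edges is small. The paper's proof captures exactly this distinction: it invokes the expander decomposition of Lemma~\ref{lem:Expander} to partition $V$ into pieces $S_i$ of conductance $\Omega(k^{-1/3}\log^{1/3} n)$ with only $O(m\log^{4/3} n \cdot k^{-1/3})$ crossing edges, then shows via Cheeger (Lemma~\ref{lem:Cheeger}) and the degree-to-effective-resistance bound (Lemma~\ref{lem:ERlessthanp}) that the $\ContractionSampling$ rate dominates $\Omega(\log n)$ times effective resistance inside each piece, so that the spectral-sparsifier connectivity guarantee (Lemma~\ref{lem:Sparsifier}) collapses each $S_i$ to a single vertex. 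The surviving edges are then exactly the inter-piece edges, bounded deterministically by the decomposition, with no per-edge probability bound ever needed. To repair your argument you would have to abandon the per-edge survival framing entirely and introduce some analogue of the expander/conductance decomposition to isolate the edges that are \emph{allowed} to survive.
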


%{
%\color{red}
%In particular, we show that
%for a multigraph on $n$ vertices and $m < poly(n)$ (multi) edges,
Based on Lemma~\ref{lem:contraction_main}, if we iteratively apply the \ContractionSampling process with $k=n^{\alpha}$ and shrink connected components using sampled edges into a single vertex,
then every connected component of the multigraph becomes a singleton vertex in $O( 1 / \alpha)$ rounds with high probability.

%
%\begin{restatable}[]{lemma}{Contraction}
%	\label{lem:Contraction}
%	Consider an iterative contraction scheme starting with a multigraph
%	on $n$ vertices and $m < poly(n)$ (multi) edges that performs
%	at each round:
%	\begin{enumerate}
%		\item for each vertex $u$, we pick
%		$n^{\delta}$ independent neighbors of it uniformly at random.
%		\item then we contract all edges picked.
%	\end{enumerate}
%	Then the graph becomes singleton vertices (of which there may be
%	multiple as the initial graph may be disconnected)
%	in $O( 1 / \delta)$ rounds with high probability.
%\end{restatable}

Lemma~\ref{lem:contraction_main}
can be shown using a straightforward argument for simple graphs.
However, in the case of multigraphs (our graphs are multigraphs because there
can be more than one edge between two components), this argument
is not as easy. It is possible that for a connected component $C_1$,
a large number of edges leaving $C_1$ will go to another connected component $C_2$.
Hence, in one round, the sampled $n^\delta$ edges leaving $C_1$ may all go to $C_2$.
From this perspective, we cannot use a simple degree-based counting argument to show that every connected component merges with at least $n^\delta$ other connected components
if it connected to at least $n^\delta$ other connected components.
%}

To deal with parallel edges, and to prove that the contraction
occurs in constant, rather than $O(\log n)$ rounds,
we make use of a more combinatorial analysis.
Before giving some intuition about this proof, we define some useful
terminology.

\begin{restatable}[Conductance]{definition}{Conductance}
	\label{def:conductance}
	Given a graph $G(V,E)$ and a subset of vertices $S \subseteq V$,
the conductance of $S$ w.r.t. $G$ is defined as
\[
\phi_G(S) \defeq \min_{S' \subseteq S} \dfrac{\left| E(S', S \setminus S') \right|}{\min \left\{ \sum_{u \in S'} \deg_G(u), \sum_{u \in S \setminus S'} \deg_G(u) \right\}}.
\]
\end{restatable}

The conductance of a graph is a measure of how ``well-knit" a graph is.
Such graphs are of consequence to us because the more well-knit the graph is,
the faster it contracts into a singleton vertex.
We use the expander decomposition lemma from \cite{SpielmanT11},
which says that any connected multigraph $G$ can be partitioned into
such subgraphs.

\begin{restatable}[\cite{SpielmanT11}, Section 7.1.]{lemma}{Expander}
		\label{lem:Expander}
		Given a parameter $k > 0$,  
		any graph $G$ with $n$ vertices and $m$ edges can be partitioned into
		groups of vertices $S_1, S_2, \ldots$ such that
		\begin{itemize}
			\item the conductance of each $S_i$ is at least $1/k$;
			\item the number of edges between the $S_i$'s is at most $O(m \log{n}/k)$.
		\end{itemize}	
\end{restatable}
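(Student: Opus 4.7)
My plan is the standard recursive sparse-cut decomposition, combined with the volume-halving charging argument to bound the number of inter-cluster edges. Concretely, I will define $\textsc{Decompose}(S)$ on a vertex subset $S \subseteq V$ as follows: if $\phi_G(S) \geq 1/k$, return $\{S\}$; otherwise take a witnessing cut $(S', S\setminus S')$ with
\[
|E(S', S\setminus S')| \;<\; \frac{1}{k}\min\!\left(\sum_{u \in S'}\deg_G(u),\; \sum_{u \in S\setminus S'}\deg_G(u)\right),
\]
and recurse on both sides, returning the union. The partition $S_1, S_2, \ldots$ is the output of $\textsc{Decompose}(V)$. The first conclusion then holds by the stopping rule. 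Note here that, by definition, $\phi_G$ is measured with $\deg_G$ rather than with induced subgraph degrees, so the conductance guarantee is automatically with respect to the original graph $G$.

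For the second conclusion I will orient each split so that the primed side has smaller $G$-volume, so the number of edges cut in that split is at most $(1/k)\sum_{u \in S'}\deg_G(u)$. Summing over all internal nodes of the recursion tree and swapping the order of summation bounds the total number of edges cut across the entire recursion by
\[
\frac{1}{k}\sum_{v \in V}\deg_G(v)\cdot N(v),
\]
where $N(v)$ counts the recursive calls in which $v$ lies on the smaller side. Because the $G$-volume of $v$'s current piece at least halves at each such event and lies in $[1, 2m]$, we obtain $N(v) = O(\log n)$; together with $\sum_v \deg_G(v) = 2m$ this yields the desired $O(m\log n/k)$ bound. Since every edge between distinct output pieces is cut at exactly one split, the same quantity upper-bounds the number of inter-cluster edges.

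The main thing to be careful about is bookkeeping around the $G$-degree convention: the conductance test, the witnessing cut, and the charging argument all refer to $\deg_G$ and never to the degree of $u$ inside the induced subgraph on $S$, which is what makes the global $\sum_v \deg_G(v) = 2m$ bound applicable. Once this is set up, the termination of the recursion is immediate on a finite ground set, and the argument is routine. Since the lemma is purely existential, there is no need to exhibit an efficient sparse-cut subroutine — a witness of $\phi_G(S) < 1/k$ exists by definition of the minimum whenever the stopping rule fails.
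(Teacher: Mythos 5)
The paper does not prove this lemma; it cites \cite{SpielmanT11}, Section~7.1, so there is no in-paper proof to compare against. Your argument is a correct, self-contained proof of the (purely existential) statement, and it uses what is essentially the standard route behind the Spielman--Teng decomposition: recursively split along any witnessing sparse cut, charge the cut edges to the smaller-volume side, and observe that each vertex can lie on the smaller side $O(\log m) = O(\log n)$ times. In particular, you are right to flag that the paper's Definition~\ref{def:conductance} normalizes by $\deg_G$ rather than by induced-subgraph degrees: that is exactly what makes $\sum_v \deg_G(v) = 2m$ usable as the global charge budget, and it also means the conductance guarantee in the stopping rule is automatically the one the lemma asserts. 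The one respect in which your proof genuinely departs from the cited source is that Spielman--Teng prove an \emph{algorithmic} version of this decomposition (they must find approximately-sparsest cuts efficiently, which introduces approximation losses and polylogarithmic overhead), whereas you correctly observe that the lemma as stated only needs existence, so one may take an exact sparsest cut as the witness whenever the stopping test fails. Since the paper explicitly notes that the decomposition is used only in the analysis and not by the algorithm, the lighter existential argument you give is both sufficient and cleaner for the present purpose.
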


For each such ``well-knit" subgraph $H$ to collapse in one round of sampling,
the sampled edges in $H$ must form a spanning subgraph of $H$.
One way to achieve this is to generate a spectral sparsifier of $H$ \cite{SpielmanS08} -
which can be obtained by sampling each edge with a probability at least $O(\log n)$ times
its \emph{effective resistance}.
The effective resistance of an edge is the amount of current that would pass through
it when unit voltage difference is applied across its end points, which is a measure
of how important it is to the subgraph being well-knit.

As the last piece of the puzzle, we show that the edges sampled by the \ContractionSampling process
do satisfy the required sampling constraint to produce a spectral sparsifier
of $H$.
Since each such subgraph collapses, Lemma~\ref{lem:Expander} also tells us
that only a small fraction of edges are leftover in $G$,
as claimed in Lemma~\ref{lem:contraction_main}.

It is important to note that although we introduce sophisticated tools such as expander partitioning
and spectral sparsifiers, these tools are only used in the proof
and not in the actual algorithm to find replacement edges.

%\xiaorui{Add more technical details here.}
%These tools are closely related to parallel
%and distributed binary search trees~\cite{EllenFRV10,DrachslerVY14},
%and parallel contraction routines~\cite{MillerR89,AcarAW17} respectively.
%However, the more stringent requirement of constant (instead of $O(\log{n})$
%as in NC) rounds necessitates both a more combinatorial analysis, and the
%use of more sophisticated tools such as expander partitioning
%to help prove the convergence of the contraction algorithm.

%\input{2connectivity_tech}
%\input{23connectivity_tech}
%Theorem~\ref{thm:Main} is based on the following a contraction lemma  which might be of independent interest.

\paragraph{From Sketches to Independent Samples.}

On a high level, our idea to achieve fast contraction
is based on using $O(k \cdot \text{polylog}(n))$ independent sketches.
However, we cannot directly claim that these sketches simulate a \ContractionSampling procedure, as required by the fast contraction lemma (Lemma~\ref{lem:contraction_main}).
This is because \ContractionSampling requires the edges being sampled independently.
Instead, each sketch as given by \cite{ahn2012analyzing, KapronKM13} gives a set of edges are constructed as follows:
\begin{enumerate}
\item Pick each edge independently with probability $p$, where $p$ is the parameter of the sketch.
\item For each vertex which has exact one sampled edge incident to it, output the sampled incident edge. 
%For each subset of the partition, if there is one edge incident to it, output it.
\end{enumerate}
The second step means the samples picked out of two vertices are correlated.
Given a vertex $v$, 
let $E_v$ be the random variable for the edge picked in Step 2 of above sketch construction process.
Consider an example with two adjacent vertices $v_1$ and $v_2$. 
If the outcome of $E_{v_1}$ is the edge $v_1v_2$, then 
the outcome of $E_{v_2}$ cannot be an edge other than $v_1v_2$.
Hence two random variables $E_{v_1}$ and $E_{v_2}$ are correlated.

This issue is a direct side-effect of the faster contraction procedure.
Previous uses of sketching only needs to find one edge leaving per component, which suffices for $O(\log n)$ rounds.
However, our goal is to terminate in a constant number of rounds.
This means we need to claim much larger connected components among the sampled edges.
For this purpose, we need independence because most results on independence between edges require some correlation between the edges picked.

Instead, we show that each sketch still generates a large number of independent edge samples.
That is, while all the samples generated by a copy of the sketch are dependent on each other, a sufficiently large subset of it is in fact, independent.
Furthermore, observe that contractions can only make more progress when more edges are considered.
So it suffices to show that this particular subset we choose makes enough progress.
Formally, we prove the following lemma.
\begin{restatable}{lemma}{IndependenceExtractor}
%\begin{lemma}
\label{lem:independence}
Given an integer $k$ and a multigraph $G$ of $n$ vertices,
$ O(k \log^3 n)$ independent sketches simulates a \ContractionSampling process.
Furthermore, for every edge sampled by the \ContractionSampling process,
there exists a sketch and a vertex such that the value of the sketch on the vertex is exactly the ID of that edge.
%\end{lemma}
\end{restatable}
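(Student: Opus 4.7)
The plan is to use $T = O(k\log^3 n)$ independent sketches, partitioned into $\Theta(\log n)$ groups where group $j$ consists of $\Theta(k\log^2 n)$ sketches of parameter $p_j = 2^{-j}$ for $j = 0, 1, \ldots, \lceil\log n\rceil$. This ensures that for every vertex $v$ with degree $d_v \in [1, n]$, some group has $p_j = \Theta(1/d_v)$. For each sketch $s$ and vertex $v$, let $X_{s,v}$ denote the sketch's output at $v$ (either an incident edge, when exactly one sampled edge hits $v$, or $\bot$); these will form the basis of the candidate samples $S_{v,(s)}$.

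Verifying condition~(2) of Definition~\ref{def:contractionsampling} is then immediate. For a sketch with parameter $p$ and edge $e \in \delta(v)$,
\[
\Pr[X_{s,v} = e] \;=\; p(1-p)^{d_v - 1} \;=\; \Omega(1/d_v)
\]
whenever $p = \Theta(1/d_v)$. Summing over the $\Theta(k\log^2 n)$ sketches in the group matching $v$'s degree yields $\Omega(k \log^2 n / d_v)$. The count $t_v \geq k$ follows from $T \geq k$, and the ``furthermore'' clause is tautological: every non-$\bot$ sample is, by construction, literally the output of a specific sketch at the corresponding vertex.

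The main hurdle is the cross-vertex independence required by Definition~\ref{def:contractionsampling} (the ``each vertex $v$ independently draws'' clause), since the raw outputs $X_{s,u}$ and $X_{s,v}$ are correlated for adjacent $u,v$ through the shared random bit of any common edge. I plan to decorrelate via an \emph{independent sample extractor}: for each sketch $s$ and edge $e = (u,v)$, draw a fresh fair coin $\tau_{s,e} \in \{u,v\}$ orienting $e$ toward one endpoint, and set the extracted sample at $v$ to equal $e$ only when $e$ is sampled in $R_s$, $\tau_{s,e} = v$, and $e$ is the unique oriented-incoming sampled edge at $v$. After this filtering, the sample at each vertex depends only on the randomness attached to edges oriented toward it, so samples at distinct vertices are functions of disjoint random variables and are therefore mutually independent. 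The extraction costs only a constant factor in the per-sample probability (coming from the $\tau$-coin and the modified unique-incoming event), which is absorbed into the $\Theta(k\log^2 n)$ sketches per group.

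The step I expect to be the main obstacle is reconciling the orientation-based filter with the XOR-based decoding of the underlying sketch: a standard sketch at $v$ natively decodes when exactly one sampled edge is incident to $v$, not when exactly one \emph{oriented-incoming} sampled edge is incident. One clean fix is to fold the orientation into the sketch construction, treating each sketch as two sub-sketches over oriented copies of $G$ (absorbing a constant factor into $T$), and verifying that the resulting data structure still supports the XOR subset queries used elsewhere in the algorithm. One must then also check that the ``furthermore'' guarantee survives the extractor---i.e., every extracted edge is still the literal value of some (possibly oriented) sketch at some vertex---which should follow immediately from the construction of the filtered output.
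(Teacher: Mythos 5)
Your approach has a genuine gap that you have correctly identified but not resolved: the orientation-based filter produces extracted samples that do not match the values actually readable from the sketch. The sketch at $v$ is the XOR of \emph{all} sampled edges incident to $v$, not just the oriented-incoming ones, so even when $e$ is the unique oriented-incoming sampled edge at $v$, a sampled oriented-outgoing edge will corrupt the XOR, and the ``furthermore'' clause fails. Your proposed fix---maintaining two oriented sub-sketches per vertex per sketch---would require changing the data structure from Definition~\ref{def:sketch}, which is exactly the one maintained and queried via $\textsc{ComponentSum}$ in the deletion algorithm of Section~\ref{sec:1conn_algorithms}; you flag this as needing verification but do not carry it out. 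Moreover, a per-vertex orientation does not provide a clean global conditioning event, since every edge is outgoing at one endpoint and incoming at the other, so ``condition on no outgoing edge at $v$ being sampled'' across all $v$ would condition on everything.

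The paper sidesteps both issues by using a \emph{global} decoupling device rather than a per-vertex one: $O(\log n)$ random bipartitions of $V$ into red and yellow. For a fixed bipartition $B=(R,Y,E_B)$, the edge set splits globally into crossing edges ($E_B$, ``Phase 2'') and non-crossing edges (``Phase 1''). Taking $U$ to be the Phase~1 sample bits gives a well-defined conditioning set, and for a red vertex $v$ whose Phase~1 incident edges were all unsampled, the defined sample $S_{v,i}$ equals $e$ exactly when $e$ is the unique sampled $B$-edge at $v$---which is then also the unique sampled edge at $v$ overall, so the standard sketch genuinely decodes $e$ and the ``furthermore'' clause holds with no change to the data structure. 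Independence across red vertices follows because the crossing edges incident to distinct red vertices are disjoint (this is exactly what bipartiteness gives you that orientation does not). The third $\log n$ factor in your budget, which you spend on degree-scale groups, the paper spends instead on $O(\log n)$ independent bipartitions so that with high probability every directed pair $(v,e)$ with $e\ni v$ lands in some $(R,E_B)$ with $v$ red and $e$ crossing; the degree-scale parameters $p\in\{1/2,1/4,\ldots\}$ are already amortized inside the $O(k\log^2 n)$ sketches assigned to each bipartition. Your probability calculation for condition~(2) of Definition~\ref{def:contractionsampling} is essentially the same as the paper's and is fine; the missing idea is the bipartition-plus-conditioning argument in place of the orientation filter.
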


Our starting observation is that for a bipartite graph, 
sketching process gives independent edge samples for vertices from the same side:
For a bipartite graph $(A, B)$, %any two vertices $v_1, v_2 \in A$, $E_{v_1}$ and $E_{v_2}$ are independent, since the outcome of $E_{v_1}$ depends on samples for edges incident to $v_1$,  the outcome of $E_{v_2}$ depends on samples for edges incident to $v_2$, and these two sets of edges are disjoint. 
the process of sampling edges, and picking all edges incident to degree one vertices of $A$
satisfies the property that all the edges picked are independent. 

To extend this observation to general graph, we consider a bipartition of the graph, $(A, B)$, and view the random sampling of edges from the sketch as a two-step process:
\begin{enumerate}
\item First, we sample all edges within each bipartition $(A, A)$ and $(B, B)$.
\item Then we sample the $(A, B)$ edges independently.
\end{enumerate}
After first step, we remove vertices from $A$ that have some sampled  edges incident to.
The second step gives a set of edges, from which we keep ones incident to some degree one vertices from $A$.
Based on the observation of bipartite graph, the edges kept in the second step are independent (condition on the outcome of the first step).

To bound the probability of picking an edge crossing the bipartition, 
we will first lower bound the probability that the incident vertex from $A$ remains after the first step, and 
then check that the second step on the bipartite graph is equivalent to an independent process on the involved edges. 
The overall lower bound on the probability of an edge picked then follows from combining the probability of an edge being picked in one of these processes with the probability that the corresponding vertices remain after the first step and the initial pruning of vertices.
With this probability estimation, we show that $O(k \cdot \text{polylog} n)$ independent sketches are enough to boost the probability of picking the edge to the required lower bound by the \ContractionSampling process.

At the end, we show that $O(\log n)$ random bipartition of the graph is enough to make sure that every edge appears in at least one of the bipartition, and then Lemma~\ref{lem:independence} follows.

\subsubsection*{Adaptive Connectivity and Lower-Bounds in the Batch-Dynamic \MPC{} Model}

The adaptive connectivity problem is the ``semi-online" version of the
connectivity problem where the entire adaptive batch of operations is
given to the algorithm in advance, but the algorithm must apply the
query/update pairs in the batch \emph{in order}, that is each pair
on the graph defined by applying the prefix of updates before it. We
note that the problem is closely related to \emph{offline} dynamic
problems, for example for offline dynamic minimum spanning tree and
connectivity~\cite{Eppstein1994}. The main difference is that in the
offline problem the updates (edge insertions/deletions) are not
adaptive, and are therefore not conditionally run based on the
queries.  We also note here that every problem that admits a static
$\mathsf{NC}$ algorithm also admits an $\mathsf{NC}$ algorithm for the
offline variant of the problem. The idea is to run, in
parallel for each query, the static algorithm on the input graph
unioned with the prefix of the updates occuring before the query.
Assuming the static algorithm is in $\mathsf{NC}$, this gives a
$\mathsf{NC}$ offline algorithm (note that
obtaining work-efficient parallel offline algorithms for problems like
minimum spanning tree and connectivity is an interesting problem that
we are not aware of any results for).

Compared to this positive result in the setting without adaptivity,
the situation is very different once the updates are allowed to
adaptively depend on the results of the previous query, since the
simple black-box reduction given for the offline setting above is no
longer possible.
In particular, we show the following lower bound for the adaptive connectivity
problem which holds in the centralized setting: the adaptive
connectivity problem is
$\mathsf{P}$-complete, that is unless $\mathsf{P} = \mathsf{NC}$,
there is no $\mathsf{NC}$ algorithm for the problem. The adaptive
connectivity problem is clearly in $\mathsf{P}$ since we can just run
a sequential dynamic connectivity algorithm to solve it.
To prove the hardness result, we give a low-depth reduction from the
Circuit Value Problem (CVP), one of the canonical
$\mathsf{P}$-complete problems. The idea is to take the gates in the
circuit in some
topological-order (note that the version of CVP where the gates are
topologically ordered is also $\mathsf{P}$-complete), and transform
the evaluation of the circuit into the execution of an adaptive
sequence of connectivity queries. We give an $\mathsf{NC}^{1}$
reduction which evaluates a circuit using adaptive connectivity
queries as follows.  The reduction maintains that all gates that
evaluate to $\mathsf{true}$ are contained in a single connected
component connected to some root vertex, $r$. Then, to determine
whether the next gate in the topological order, $g = g_a \wedge g_b$,
evaluates to $\mathsf{true}$ the reduction runs a connectivity query
testing whether the vertices corresponding to $g_a$ and $g_b$ are
connected in the current graph, and adds an edge $(g, r)$, thereby
including it in the connected component of $\mathsf{true}$ gates if
the query is true. Similarly, we reduce evaluating $g = g_a \vee g_b$
gates to two queries, which check whether $g_a$ ($g_b$) is reachable
and add an edge from $(g, r)$ in either case if so. A $g = \lnot g_a$
gate is handled almost similarly, except that the query checks whether
$g_a$ is disconnected from $s$. Given the topological ordering of the
circuit, generating the sequence of adaptive queries can be done in
$O(\log n)$ depth and therefore the reduction works in
$\mathsf{NC}^{1}$.

In contrast, in the \MPC{} setting, we show that we can achieve $O(1)$
rounds for adaptive batches with size proportional to the space per
machine. Our algorithm for adaptive connectivity follows naturally
from our batch-dynamic connectivity algorithm based on the following
idea: we assume that every edge deletion in the batch actually occurs,
and compute a set of replacement edges in $G$ for the (speculatively)
deleted edges. Computing the replacement edges can be done in the same
round-complexity and communication cost as a static batch of deletions
using Theorem~\ref{thm:Main}. Since the number of replacement edges is
at most $O(k) = O(s)$, all of the replacements can be sent to a single
machine, which then simulates the sequential adaptive algorithm on the
graph induced by vertices affected by the batch in a single round. We
note that the upper-bound in \MPC{} does not contradict the
$\mathsf{P}$-completeness result, although achieving a similar result
for the depth of adaptive connectivity in the centralized setting for
batches of size $O(s) = O(n^{\epsilon})$ would be extremely surprising
since it would imply a polynomial-time algorithm for the
(Topologically Ordered) Circuit Value Problem with sub-linear depth
and therefore polynomial speedup.

\subsection{Organization}
%The rest of the paper is organized as follows:
%Definition of the \MPC{} model, as well as dynamic graph data
%structures, is in Section~\ref{sec:Prelims};
Section~\ref{sec:1conn} describes the full version of the high level idea for graph connectivity.
Section~\ref{sec:MPCDynamicTree} contains a discussion of the data structure we used to handle batch-update in constant round.
Section~\ref{sec:Contraction} gives a proof of our fast contraction lemma.
Section~\ref{sec:independentextractor} gives a proof of our independent sample extractor from sketches.
Section~\ref{sec:1conn_algorithms} presents the algorithm for graph connectivity and the correctness proof.
Lastly, we present our lower and upper bounds for the adaptive
connectivity problem in Section~\ref{sec:lowerbound}.

%\input{Prelims}
%\input{MPCDynamicTree}
%!TEX root = Main.tex

\section{1-Edge-Connectivity} \label{sec:1conn}

In this section we prove our result for 1-edge-connectivity, restated here:

\Main*

\paragraph{Parallel Batch-Dynamic Data Structure.}
Similar to previous results on dynamic connectivity~\cite{Frederickson85, GalilI92, HenzingerK99,HolmDT01,ahn2012analyzing,KapronKM13,
	GibbKKT15:arxiv,NanongkaiS17,Wulffnilsen17,NanongkaiSW17},
our data structure is based on maintaining a maximal spanning forest,
which we denote using $F$.
Formally, we define it as follows.
\begin{definition}[Maximal spanning forest]
	Given a graph $G$, we call $F$ a maximal spanning forest of $G$ if
	$F$ is a subgraph of $G$ consisting of a spanning tree in every
	connected component of $G$.
\end{definition}

Note that this is more specific than a spanning forest,
which is simply a spanning subgraph of $G$ containing no cycles.
This forest encodes the connectivity information in the graph,
and more importantly,
%undergoes at most one update per update
undergoes few changes per update to the graph.
Specifically:
\begin{enumerate}
	\item An \emph{insert} can cause at most two trees in $F$ to be joined to form
	a single tree.
	\item A \emph{delete} may split a tree into two, but if there
	exists another edge between these two resulting trees, they should
	then be connected together to ensure that the forest is maximal.
\end{enumerate}
Note that aside from identifying an edge between two trees formed
when deleting an edge from some tree, all other operations are tree
operations.
Specifically, in the static case, these operations can be entirely
encapsulated via tree data structures such as
dynamic trees~\cite{SleatorT83} or Top-Trees~\cite{AlstrupHLT05}.
We start by ensuring that such building blocks also exist in the
\MPC{} setting.
In Section~\ref{sec:MPCDynamicTree}, we show that a forest
can also be maintained efficiently in $O(1)$ rounds and
low communication in the \MPC{} model (Theorem~\ref{thm:Main_data_structure}).
In this section, we build upon this data structure and
show how to process updates and 1-edge-connectivity queries while maintaining
a maximal spanning forest of $G$.

Let $T(v)$ indicate the tree (component) in $F$ to which a vertex $v$ belongs.
We define the component ID of $v$ as the as the ID of this $T(v)$.
We represent the trees in the forest using the following data
structure. We describe the data structure in more detail in
Section~\ref{sec:MPCDynamicTree}.

\begin{restatable}[]{theorem}{MainDataStructure}
	%\begin{theorem}
	\label{thm:Main_data_structure}
	In the \MPC{} model with memory per machine $s = \Otil(n^\epsilon)$ for some constant $\epsilon$, for any constant $0 < \alpha < 1$ and a key length $\ell_{key}$ such that $n^\alpha \cdot \ell_{key} \leq s$,
	we can maintain a  dynamic forest $F$ in space $\Otil(n)$,
	with each vertex $v$ augmented with a key $\xx_v$ of length $\ell_{key}$($\xx_v$ is a summable element from a semi-group),
	%and two functions $f$ on edges and $g$ on pairs of vertices,
	%under the operations:
	\begin{itemize}
		\item $\textsc{Link}(u_1v_1, \ldots, u_kv_k)$: Insert a batch of $k$ edges into $F$.
		\item $\textsc{Cut}(u_1v_1, \ldots, u_kv_k)$: Delete $k$ edges from $F$.
		\item $\textsc{ID}(v_1, \ldots, v_k)$: Given a batch of $k$ vertices, return their component IDs in $F$.
		\item $\textsc{UpdateKey}((v_1, \xxhat_{1}'), \ldots, (v_k, \xxhat_{k}'))$: For each $i$, update the value of $\vec \xx_{v_i}$ to $\vec \xx_i'$.
		\item $\textsc{GetKey}(v_1, \ldots, v_k)$: For each $i$, return the value of $\vec \xx_{v_i}$.
		\item $\textsc{ComponentSum}(v_1 \ldots, v_k)$: Given a set of $k$ vertices, compute for each $v_i$,
		\[
		\sum_{w \colon w \in T(v_i)} \xx_{w}
		\]
		under the provided semi-group operation.
		%\item $\textsc{QueryFunction}((a_1, b_1), (a_2, b_2), \ldots, (a_k, b_k))$: Given $k$ pairs of vertices, output $g(a_i, b_i)$ for every $1 \leq i \leq k$.
		%\item $\textsc{UpdateFunction}((a_1, b_1, \hat{f}_1), (a_2, b_2, \hat{f}_2), \ldots, (a_k, b_k, \hat{f}_k))$: For each $i$, update the function value $f(e)$ using $\hat{f}_i$ for each tree edge $e$ on the path between $a_i$ and $b_i$.
	\end{itemize}

	Moreover, all operations can be performed in $O(1/\alpha)$ rounds and
	\begin{itemize}
	\item \textsc{Link} and \textsc{Cut} operations can be performed in $\Otil(k \cdot \ell_{key} \cdot n^\alpha)$ communication per round,
	\item \textsc{ID}  can be performed in $\Otil(k)$ communication per round,
	\item \textsc{UpdateKey}, \textsc{GetKey} and $\textsc{ComponentSum}$ operations can be performed in $\Otil(k \cdot \ell_{key} \cdot n^\alpha)$ communication per round.
	\end{itemize}

%	all operations can be performed in
%	$O(1 / \epsilon)$ rounds and {\color{red} $\Otil(k \cdot \ell_{key} \cdot  n^{\delta}  )$} communication per round.
	%The memory available to each machine is assumed to be $s = \Otil(n^{\epsilon})$, for some $0 < \epsilon < 1$.
	%\end{theorem}
\end{restatable}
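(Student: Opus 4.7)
The plan is to represent each tree $T$ of the forest $F$ by its Euler tour $\sigma_T$, and to store each $\sigma_T$ at the leaves of a balanced tree $\mathcal{B}(T)$ of branching factor $\Theta(n^{\alpha})$, so that $\mathcal{B}(T)$ has depth $O(1/\alpha)$. Each internal node $u$ of $\mathcal{B}(T)$ stores (i) a canonical component identifier for the interval of $\sigma_T$ that it covers (e.g., the identifier of the leftmost vertex-occurrence in its interval, propagated to the root so that the root identifier is the common component ID), and (ii) the semigroup sum $\sum_{w} \xx_w$ over the vertices whose first occurrence in $\sigma_T$ falls in $u$'s interval. Since $\mathcal{B}(T)$ has fan-out $\Theta(n^{\alpha})$, any single internal node together with its children pointers and its aggregated key (of total size $\widetilde O(n^{\alpha}\cdot\ell_{key})$) fits on one machine of space $\widetilde O(n^{\epsilon})$.

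\textbf{Implementing the operations.} The point queries $\textsc{ID}(v_i)$, $\textsc{GetKey}(v_i)$, $\textsc{ComponentSum}(v_i)$ are each answered by routing the request to an occurrence of $v_i$ in the appropriate $\sigma_T$ and then climbing up $\mathcal{B}(T)$ to its root in $O(1/\alpha)$ rounds, reading the canonical ID or the semigroup sum; for a batch of $k$ queries this uses $\widetilde O(k)$ communication for IDs and $\widetilde O(k\cdot\ell_{key})$ for keys. $\textsc{UpdateKey}$ is symmetric: overwrite the leaf's key, then recompute the aggregated sum on the $O(1/\alpha)$ ancestors lying on the update path, fanning in from the $n^{\alpha}$ children at each level. $\textsc{Link}$ and $\textsc{Cut}$ reduce to sequence concatenation and splitting of Euler tours, which in the $n^{\alpha}$-ary balanced tree are implemented by the standard top-down rebalancing analogous to $B$-tree split/merge, each such operation touching $O(1/\alpha)$ nodes and hence $\widetilde O(n^{\alpha}\cdot\ell_{key})$ bits of data per round.

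\textbf{Batching.} For a batch of $k$ updates, first compute for each update the component(s) it affects via a preliminary $\textsc{ID}$ call. Then sort the updates (using any $O(1/\alpha)$-round MPC sort, which exists for strongly-sublinear memory) by the component they touch and by position within $\sigma_T$, so that independent operations are grouped and can be executed in parallel; conflicting operations on the same interval of $\sigma_T$ are resolved by a single top-down sweep through $\mathcal{B}(T)$, applying all split/merge points at once per visited node and recomputing aggregates on the way back up. Because each level is processed in a single round with each machine responsible for $\widetilde O(n^{\alpha})$ children plus their $\ell_{key}$-sized aggregates, the total communication per round is $\widetilde O(k\cdot n^{\alpha}\cdot\ell_{key})$, and the depth of $\mathcal{B}(T)$ gives $O(1/\alpha)$ rounds in total. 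Associativity of the semigroup operation and standard Euler-tour-tree correctness yield the invariant that, after the batch, every internal node of $\mathcal{B}(T)$ still holds the correct identifier and aggregate.

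\textbf{Main obstacle.} The delicate point is ensuring that batched $\textsc{Link}$/$\textsc{Cut}$ operations that act on the same tree or, worse, cause repeated merging and splitting of the same interval, are applied in a way that (a) never creates conflicting writes to a single node of $\mathcal{B}$ in the same round, and (b) leaves every aggregate consistent after a single $O(1/\alpha)$-round sweep. I would handle this by first reducing the batch to a single virtual partition of the multiset of tour break/merge points: cuts induce a set of positions at which each $\sigma_T$ must be split, and links specify how the resulting fragments are to be concatenated (with the appropriate rerootings captured by a local relabeling of tour entries). Once the batch is presented in this canonical form, each affected $\mathcal{B}(T)$ needs only one synchronized top-down split pass followed by one bottom-up aggregation pass, both of depth $O(1/\alpha)$, which gives the claimed bounds.
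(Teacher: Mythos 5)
Your high-level plan agrees with the paper: both represent each tree of $F$ by its Euler tour, store the tour in a balanced search structure of fanout $\Theta(n^{\alpha})$ so that the depth is $O(1/\alpha)$, keep semigroup aggregates at internal nodes, and reduce \textsc{Link}/\textsc{Cut} to sequence split/join. The substantive difference is the choice of sequence data structure. The paper uses a randomized skip list (following Tseng et al.) in which each element is promoted to the next level independently with probability $1/n^{\alpha}$; this automatically yields $O(1/\alpha)$ levels w.h.p.\ and blocks of size $\widetilde O(n^{\alpha})$ w.h.p., and --- crucially --- there is no rebalancing: a split or join simply cuts or concatenates blocks level by level. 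You instead propose a deterministic $n^{\alpha}$-ary balanced tree (a B-tree-like structure). That choice creates a real gap in the argument: after $k$ arbitrary splits and joins of Euler tours, the resulting fragments can have wildly uneven sizes (down to a single tour element), and restoring the fanout/balance invariant requires cascading split/merge rebalancing along $k$ paths that can interact at shared ancestors. The proposal dismisses this as ``standard top-down rebalancing analogous to B-tree split/merge,'' but a batched B-tree split/join in $O(1/\alpha)$ MPC rounds with $\widetilde O(k\,n^{\alpha}\ell_{key})$ communication is not a standard primitive and is precisely what would need to be proved. Until that is supplied, the claimed round bound for \textsc{Link}/\textsc{Cut} is unsupported; the skip-list route sidesteps the problem entirely, which is why the paper uses it.

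A smaller but also real issue: you propose to attach each vertex's key $\xx_w$ to the \emph{first} occurrence of $w$ in $\sigma_T$. Under \textsc{Link}/\textsc{Cut}, the Euler tour is re-cut and re-concatenated and the tree is implicitly rerooted, so which occurrence of a vertex is ``first'' is not stable; a batch of cuts can change the first occurrence of $\Theta(n)$ vertices in a single tree, which would force $\Omega(n)$ aggregate updates. The standard fix (and what the paper's ETT-based construction does) is to designate one fixed occurrence per vertex as its representative, independent of tour order, and store $\xx_w$ there. With that change and with the skip-list-style self-balancing sequence (or, alternatively, a genuine batched B-tree rebalancing lemma), your argument goes through and matches the paper's bounds.
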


Edge insertions and queries can be handled by above dynamic data structure:
for a set of edge queries, we use the \textsc{ID} operation to query the ID of all the vertices.
Two vertices are in the same connected component if and only if their IDs are same.
For a batch of edge insertions, we maintain the spanning forest by first identifying
all the inserted edges that join different connected components using \textsc{ID} operation, and then
using the \textsc{Link} operations to put these edges into the forest.

The process of handling a set of edge deletions is more complex.
This is because, if some spanning forest edges are removed,
then we need to find replacement edges in the graph which were previously not in the spanning forest,
but can be added to maintain the desired spanning forest.
To do this,
we use the the augmentation of tree nodes with $\xx_u$ and the
$\textsc{ComponentSum}$ operation  to accommodate each vertex storing
``sketches" in order to find replacement edges upon deletions.

%We will defer its discussion until we introduce our deletion operation.
%Note that as we will handle the presence of cycles before calling
%\textsc{Link}, this data structure does not need
%to address cycles in the input batch of edges.
%In addition to this dynamic forest $F$,
%we will also maintain a data structure to store all edges $E$ in $G$,
%along with their IDs and endpoints.
%
%
%In the incremental setting of only needing to handle
%edge insertions from \textsc{Insert},
%we can directly invoke the \textsc{Link} operation to maintain
%the spanning forest.
%However, we also need to search all edges when looking for a replacement
%for a deleted tree edge.
%So the insertion operation needs to deal with updating the sketch values
%needed for the same.
%To make this clear, we first discuss our deletion operation (\textsc{Delete}).

\paragraph{Sketching Based Approach Overview.}
At the core of the \textsc{Delete} operation is an
adaptation of the sketching based approach for finding replacement
edges by Ahn et al.~\cite{ahn2012analyzing} and Kapron et al.~\cite{KapronKM13}.
Since we rely on these sketches heavily,
we go into some detail about the approach here.
Without loss of generality, we assume every edge has a unique $O(\log n)$-bit ID, which is generated by a random function on the two vertices involved.
%While we utilize their scheme in a black-box manner, it is nonetheless
%useful to discuss some high level intuition about it.

For a vertex $v$, this scheme sets $\xx_v$ to the XOR
of the edge IDs of all the edges incident to $v$ (which we assume to be integers):
\[
\xx_{v} \defeq \xor_{\edge \colon \edge \sim v} \edge.
\]
For a subset of vertices $S$, we define $\partial(S)$ as the set
of edges with exactly one endpoint in $S$.
Then, taking the total XOR over all the
vertices in $S$ gives (by associativity of XOR)
\[
\xor_{v \in S} \xx_{v}
= \xor_{v \in S} \xor_{\edge \colon \edge \sim v} e
= \xor_{\edge \in E}
\left( \xor_{v \colon v \in S, \edge \sim v}  \edge \right)
=\xor_{\edge \in \partial(S)} \edge.
\]
So if there is only one edge leaving $S$, this XOR
over all vertices in $S$ returns precisely the ID of this edge.
To address the case with multiple edges crossing a cut,
Ahn \emph{et al.}~\cite{ahn2012analyzing} and Kapron \emph{et al.}~\cite{KapronKM13} sampled multiple subsets of edges
at different rates to ensure that no matter how many
edges are actually crossing, with high probability one sample
picks only one of them.
This redundancy does not cause issues because the edge query procedures also serve as a way to remove false positives. %The variant of this sketching result that we will use is stated as follows in Lemma 3.3.

We formally define the sketch as follows:
\begin{definition}[Graph Sketch from~\cite{ahn2012analyzing,KapronKM13}]\label{def:sketch}
A sketch with parameter $p$ of a graph $G = (V, E)$ is defined as follows:
\begin{enumerate}
\item Every edge is sampled independently with probability $p$.
Let $E'$ be the set of sampled edges.
\item For every vertex $v \in V$, let
\[
\xx_{v} \defeq \xor_{\edge \in E' \colon \edge \sim v} \edge.
\]
\end{enumerate}
\end{definition}
We say a sketch generates edge $\edge$ if there exists a vertex $v$ such that $\xx_v = \edge$.
%This redundancy does not cause issues because the query procedures also serve
%as a way to remove false positives, and we will do the same
%using the \textsc{1ConnQuery} procedure.
%\saurabh{I'm a bit unsure of what redundancy is meant here}
%\richard{I'm happy with this going into appendix, just figured it's worthwhile
%to try to explain from scratch?}
The variant of this sketching result that we will use is
stated as follows in Lemma~\ref{lem:KKM}.
\begin{lemma}[Graph Sketch from~\cite{ahn2012analyzing,KapronKM13}]
	\label{lem:KKM}
	Assume we maintain a sketch for each of $p \in \{1, 1/2, 1/4, \dots, 1/2^{\lceil 2 \ln \rceil - 1}\}$, and let $\vec \xx_v$ denote the sketches on vertex $v$,
%	There is a randomized procedure that for any graph $G$,
%	associates each node $v$ with $O(\log^{2}n)$ bits, $\xx_v$ so that
	\begin{itemize}
		\item upon insertion/deletion of an edge, we can maintain all $\vec \xx_v$'s
		in $O(\log^{2}n)$ update time;
		\item for any subset of vertices $S$, from the value
		\[
		\xor_{v \in S}\vec \xx_v,
		\]
		we can compute $O(\log{n})$ edge IDs so that for any edge
		$e \in \partial(S)$,
		the probability that one of these IDs is $e$ is at least
		${1}/{|\partial(S)|}$.
	\end{itemize}
\end{lemma}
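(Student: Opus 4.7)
The plan is to construct the sketch so that sampling decisions are deterministic given the edge identifier, verify a simple XOR identity that collapses vertex sketches to the set of sampled boundary edges, and then choose the right geometric sampling rate for any given boundary size.

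First I would fix a pseudorandom hash function $h$ mapping edge IDs into $[0,1)$ and declare that at sampling rate $p$ an edge $\edge$ is sampled iff $h(\edge)<p$. This makes the sample set $E'_p$ well-defined from the edge alone, so both endpoints of an edge agree on whether it is in $E'_p$ without any coordination. For the update bound, inserting or deleting an edge $\edge=uv$ requires, for each of the $O(\log n)$ rates $p$, one evaluation of $h$ and, if sampled, one XOR of the $O(\log n)$-bit ID into each of $\xx_u^{(p)}$ and $\xx_v^{(p)}$, giving $O(\log^2 n)$ total bit operations (or $O(\log n)$ word operations on $O(\log n)$-bit words).

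Next I would establish the key XOR identity: for every sampling rate $p$ and every vertex subset $S$,
\[
\xor_{v\in S}\xx_{v}^{(p)}
\;=\;\xor_{v\in S}\,\xor_{\edge\in E'_p:\,\edge\sim v}\edge
\;=\;\xor_{\edge\in E'_p\cap\partial(S)}\edge,
\]
because every edge with both endpoints in $S$ is XORed twice (and so cancels) while boundary edges are XORed exactly once. In particular, whenever $|E'_p\cap\partial(S)|=1$, the XOR is the ID of that unique sampled boundary edge. I would then specify the candidate extraction procedure: for each of the $O(\log n)$ rates, emit $\xor_{v\in S}\xx_v^{(p)}$ as a candidate, and discard those that fail the structural check of being a valid edge ID (e.g.\ check that the ID decodes to a pair of real vertices whose hash matches, which makes false positives occur with probability only $1/\mathrm{poly}(n)$ and so only changes the bound by negligible additive terms).

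The quantitative guarantee comes from a single-rate isolation calculation. Fix $\edge\in\partial(S)$ and write $k=|\partial(S)|$. Because the rates form a geometric sequence covering $[1/n^2,\,1]$, there is some index $i$ with $p_i\in[1/(2k),\,1/k]$ (and $p_0=1$ handles $k=1$). At that rate the events $\{\edge\in E'_{p_i}\}$ and $\{\edge'\notin E'_{p_i}\text{ for }\edge'\in\partial(S)\setminus\{\edge\}\}$ are independent since edges have independent hash values, so
\[
\Pr\bigl[E'_{p_i}\cap\partial(S)=\{\edge\}\bigr]
\;=\;p_i(1-p_i)^{k-1}\;\geq\;\tfrac{1}{2k}\cdot e^{-1}\;=\;\Omega(1/k),
\]
and boosting by a constant factor (either tightening the geometric mesh or taking a constant number of independent sketches per rate) gives the stated $1/|\partial(S)|$ bound. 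On this event the candidate for rate $p_i$ is exactly the ID of $\edge$, so $\edge$ appears among the $O(\log n)$ outputs with the required probability.

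The main obstacle I anticipate is handling false positives and ensuring the probability bound is not polluted by them: when $|E'_{p}\cap\partial(S)|\neq 1$ the XOR is essentially a uniformly random bit string, and for the lemma to be clean we need it to almost never masquerade as a valid edge ID in $\partial(S)$. I would handle this by attaching a $\Theta(\log n)$-bit random tag to each edge ID and verifying the tag on any candidate; a union bound over the $O(\log n)$ rates and all $O(\log n)$ invocations during an algorithm's run shows that every spurious candidate is rejected with probability $1-1/\mathrm{poly}(n)$, so the analysis of the isolation event is unaffected.
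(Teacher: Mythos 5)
The paper does not prove Lemma~\ref{lem:KKM}; it is stated as a citation to Ahn et al.\ and Kapron et al.\ and used as a black box, so there is no in-paper proof to compare against. Your reconstruction is the standard argument from those works and is essentially correct: the XOR telescoping identity, the geometric sequence of sampling rates, and the isolation calculation $p_i(1-p_i)^{k-1}=\Omega(1/k)$ at the rate $p_i \approx 1/k$ are exactly the ingredients the cited papers use. Two small points worth flagging. First, you correctly observe that a single rate only yields $\Omega(1/|\partial(S)|)$ with a constant below $1$ (the optimum of $p(1-p)^{k-1}$ is about $1/(ek)$), so some constant-factor boosting is needed to match the literal ``at least $1/|\partial(S)|$'' wording; your fixes (denser mesh or $O(1)$ independent copies per rate) both preserve the $O(\log n)$ output size and are standard. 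Second, you replace the paper's ``sample each edge independently'' (Definition~\ref{def:sketch}) with a pseudorandom hash of the edge ID; this is the usual implementation trick to make the two endpoints agree on the sample set without storing per-edge randomness and does not change the analysis, but strictly speaking it is a derandomization-style substitution that the lemma statement itself does not require, since one can equally imagine shared random bits indexed by edge ID. Neither point is a gap; the proposal is a faithful and correct proof of the cited result.
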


%\noindent \textbf{Fast contraction lemma.}
\paragraph{Fast Contraction Lemma.}
As XOR is a semi-group operation, we can use these sketches in conjunction
with the dynamic forest data structure given in Theorem~\ref{thm:Main_data_structure}
to check whether a tree resulting from an edge deletion
has any outgoing edges.
In particular, $O(\log{n})$ copies of this sketch structure
allow us to find a replacement edge with high probability after
deleting a single edge in $O(1/\epsilon)$ rounds and $O(n^{\epsilon})$
total communication.
Our algorithm then essentially ``contracts" these edges found,
thus essentially reconnecting temporarily disconnected trees in $F$.

However, a straightforward generalization of the above method to deleting a batch
of $k$ edges results in an overhead of $\Theta(\log{k})$,
%This is because once we receive IDs of the edge candidates for replacement,
%we need to verify whether they span different trees using queries
%(i.e. the $\textsc{ID}$ operation) before updating
%our connectivity information.
%\saurabh{I don't fully understand the above sentence}
%
%Furthermore, even if we are able to avoid this check of false
%positives by designing better sketches,
because it's possible that this
random contraction process may take up to $\Theta(\log{k})$ rounds.
Consider for example a length $k$ path: if we pick $O(1)$ random
edges from each vertex, then each edge on the path is omitted by
both of its endpoints with constant probability.
So in the case of a path, we only reduce the number of remaining edges
by a constant factor in expectation, leading to a total of about
$\Theta(\log{k})$ rounds.
With our assumption of $s = O(n^{\epsilon})$ and queries arriving
in batches of $k \leq s$, this will lead to a round count that's
up to $\Theta(\log{n})$.

We address this with a natural modification motivated by the path
example: instead of keeping $O(\log{n})$ independent copies of the sketching
data structures, we keep $\widetilde O(n^{\delta})$ copies, for some small constant $\delta$,
which enables us to
sample $n^{\delta}$ random edges leaving each connected component
at any point.
As this process only deals with edges leaving connected components,
we can also view these connected components as individual vertices.
The overall algorithm then becomes a repeated contraction process
on a multi-graph: at each round,
each vertex picks $n^{\delta}$ random edges
incident to it, and contracts the graph along all picked edges.
Our key structural result %, which we prove in Section~\ref{sec:Contraction},
is a lemma
that shows that this process terminates
in $O(1 / \delta)$ rounds with high probability.
To formally state the lemma, we first define a random process of sampling edges in a graph.

\ContractionSamplingDef*

Below is our structural lemma, which we prove in Section~\ref{sec:Contraction}.

\Contraction*

%In this section, we assume we want to sample edges leaving vertices.
%The argument naturally generalizes to the case of sampling edges leaving disjoint subsets of vertices.

%\noindent \textbf{Independent sample extractor from sketches.}
\paragraph{Independent Sample Extractor From Sketches.}
%To facilitate finding replacement edges after a batch of deletions,
%we require an efficient algorithm to discover edges going between
%currently existing components.
%Viewing each component as a vertex in a multigraph,
%this amounts to simply resolving a multigraph into its connected components.
%We visualize this by ``contracting" the discovered edges
%until we are left with only singleton vertices - each representing a component
%in the original multigraph.
On a high level, our idea is to use $O(k \cdot \text{polylog}(n))$ independent sketches
to simulate the required \ContractionSampling process, and then apply Lemma~\ref{lem:contraction_main}.
However, we cannot do this naively, because \ContractionSampling requires the edges being sampled independently, 
whereas the sketch from Lemma~\ref{lem:KKM} does not satisfy this property.
Recall that the sketch generated at a vertex $v$ can correspond to an edge (say $uv$)
if no other edge adjacent to $v$ was sampled in the same sketch.
Consider an example where two edges $uv$ and $uw$ are sampled by the graph.
This means that no other edge from $v$ or $w$ can be sampled in that same sketch,
implying the sampling process is not independent.
%Here is an illustrative example: Consider graph $G = (V, E)$
%with $V = \{x_1, x_2, x_3, x_4, x_5, x_6\}$ and 
%$E=\{\edge_1= (x_1, x_2),\edge_2 = (x_1, x_3), \edge_3 = (x_1, x_4), \edge_4 = (x_2, x_5), \edge_5 = (x_2, x_6)\}$. 
%Assume a sketch sample all the edges independently with probability $1/2$.
%Let $S_{x_i}$ to be the random variable of the sampled edge for vertex $x_i$ if only one sampled edge incidents to $x_i$, otherwise $S_{x_i} = \bot$.
%We have $\Pr[S_{x_1} = e_2] = 1/8, \Pr[S_{x_2} = e_4] = 1/8$, but $\Pr[S_{x_1} = e_2 \text{ and } S_{x_2} = e_4] = 1/32$.

We would like to remark that this is not an issue for previous sketching based connectivity algorithms (e.g. \cite{ahn2012analyzing,KapronKM13}),  because in~\cite{ahn2012analyzing,KapronKM13},
each time, any current connected component only needs to find an arbitrary edge leaving the connected component.
In this way, if most current connected components find an arbitrary edge leaving the component, then after contracting connected components using sampled edges,  the total number of connected components reduce by at least a constant factor.
In this way, after $O(\log n)$ iterations, each connected component shrinks into a single vertex.
But in our case the contraction lemma requires edges being sampled independently.
Hence, we cannot directly apply Lemma~\ref{lem:contraction_main} on sketches.

%However, even though $O(\log n)$ sketches with different sampling rates allow to find incident edges for a constant fraction of all the vertices with high probability,
%the edges sampled for two adjacent vertices using a single sketch are not independent. Consider the following illustrative graph $G = (V, E)$
%with $V = \{x_1, x_2, x_3, x_4, x_5, x_6\}$ and
%$E=\{\edge_1= (x_1, x_2),\edge_2 = (x_1, x_3), \edge_3 = (x_1, x_4), \edge_4 = (x_2, x_5), \edge_5 = (x_2, x_6)\}$.
%Assume a sketch sample all the edges independently with probability $1/2$.
%Let $S_{x_i}$ to be the random variable of the sampled edge for vertex $x_i$ if only one sampled edge incidents to $x_i$, otherwise $S_{x_i} = \bot$.
%We have $\Pr[S_{x_1} = e_2] = 1/8, \Pr[S_{x_2} = e_4] = 1/8$, but $\Pr[S_{x_1} = e_2 \text{ and } S_{x_2} = e_4] = 1/32$.

To get around this issue, we construct an independent edge sample extractor from the sketches and show that with high probability,
this extractor will extract a set of independent edge samples that are equivalent to being sampled from a \ContractionSampling random process, as required by Lemma~\ref{lem:contraction_main}.
One key observation is that if the graph is bipartite, then sketch values on the vertices from one side of the bipartite graph are independent, because every edge sample is only related to one sketch value.
The high level idea of our extractor is then to extract bipartite graphs from sketches,
such that each edge appears in many bipartite graphs with high probability.
For each sketch, consider the following random process:
\begin{enumerate}
\item For each vertex of the graph, randomly assign a color of red or yellow. Then we can construct a bipartite graph with red vertices on one side, yellow vertices on the other side, and an edge is in the bipartite graph if and only if the color of one endpoint is red, and the other endpoint is yellow. Note that this step is not related to the process of sketch construction.
\item Independently sample every edge not in the bipartite graph with probability same as the probability of sampling used in the sketch.
\item For each red vertex whose incident edges were not sampled in Step 2,
independently sample every edge incident to the vertex in the bipartite graph with probability same as that used in the sketch.
\item Choose all the edges sampled in Step 3 which do not share a red vertex with any other sampled edge.
\end{enumerate}

We show that the edges obtained in Step 4 are sampled independently (conditioned on the outcome of Step 2).
Another way to see this independence is to partition all the independent random variables in the process of generating all the sketches
into two random processes $R_1$ and $R_2$ (based on the bipartite graph generated for each sketch) such that $R_1$ and $R_2$ are independent and simulate a required \ContractionSampling process in the following sense:

\begin{enumerate}
\item After implementing the random process $R_1$ and based on the outcome of $R_1$, define a \ContractionSampling process as required by Lemma~\ref{lem:contraction_main}.
\item The random process $R_2$ simulates the defined \ContractionSampling process in the following sense:
there is a partition of  the independent random variables of random process $R_2$ into groups satisfying the following conditions:
\begin{enumerate}
\item There is a bijection between  groups and random variables of the \ContractionSampling process.
\item For each group, there exists a function of the random variables in the group such that the function is equivalent to the corresponding random variable of the \ContractionSampling process.
\end{enumerate}
\end{enumerate}
Furthermore, all the edges sampled by the defined \ContractionSampling process are generated by the sketches (meaning that there exist a vertex and a sketch such that sketch on the vertex is the ID of the sampled edge).
In this way, we argue that the edges generated by all the sketches contains a set of edges generated by a \ContractionSampling process
so that we can apply Lemma~\ref{lem:contraction_main}.

%{Janardhan: I do not understand this statement:}
%We show that after first phase, with good probability, we can define a ContractionSampling process based on the outcome of the first phase
%such that the second phase implement the ContractionSampling process and 
%for every edge sampled by the ContractionSampling process, there exists a vertex and a sketch such that the sketch on the vertex is the ID of the sampled edge. 

More formally, we define the simulation between two random processes as follows.

\begin{definition}
We say 
a set of independent random variables $E_1, E_2, \dots, E_t$  simulates another set of independent random variables $F_1, F_2, \dots, F_\ell$ if 
there exists a set of random variables $U \subseteq \{E_1, E_2, \dots, E_t\}$ such that with constant probability,  after fixing all the random variables of $U$, 
there are $\ell$ subsets $T_1, T_2, \dots T_\ell \subseteq \{E_1, E_2, \dots, E_t\} \setminus U$ (depending on the outcome of the random process for $U$) satisfying
\begin{enumerate}
\item $T_1, \dots, T_\ell$ are mutually disjoint. 
\item For every $i \in [\ell]$, there exist a random variable which is a function of random variables in $T_i$, denoted as $f_i(T_i)$, such that $f(T_i)$ is same to the random variable $F_i$.
\end{enumerate}
\end{definition}
And we show that the process of generating $ O(k \log^3 n)$ sketches simulates the random process in the contraction lemma. 

%
%\begin{lemma}
%For a given bipartite graph, consider the process of sampling every edge with a fixed probability. 
%For any vertex $v$ on the left side of the bipartite graph, define random variable $E_v$ as 
%\[
%E_v = \begin{cases}
%	 \text{the sampled edge incident to } v   & \text{if only one edge incident to $v$ was sampled}\\
%	\emptyset & \text{otherwise}
%\end{cases}
%\]
%Then, all the random variables are independent. 
%\end{lemma}

\IndependenceExtractor*

\section{Batch-Dynamic Trees in \MPC{}}\label{sec:MPCDynamicTree}

\providecommand{\set}[1]{\left\{#1\right\}}

In this section we describe a simple batch-dynamic tree data structure
in the \MPC{} setting. Our data structure is based on a recently
developed parallel batch-dynamic data structure in the shared-memory
setting~\cite{tseng2018batch}. Specifically, Tseng et al. give a
parallel batch-dynamic tree that supports batches of $k$ links, cuts,
and queries for the representative of a vertex in $O(k\log(n/k + 1))$
expected work and $O(\log n)$ depth w.h.p. Their batch-dynamic trees
data structure represents each tree in the forest using an Euler-tour
Tree (ETT) structure~\cite{HenzingerK99}, in which each tree is
represented as the cyclic sequence of its Euler tour, broken at an
arbitrary point. The underlying sequence representation is a
concurrent skip list implementation that supports batch \emph{join}
and \emph{split} operations. Augmented trees are obtained by
augmenting the underlying sequence representation.

We show that the structure can be modified to achieve low
\emph{round-complexity} and \emph{communication} in the \MPC{} setting.
We now define the batch-dynamic trees interface and describe how to
extend the data structure into the \MPC{} setting.
The main difficulty encountered in the shared-memory setting is that
nodes are stored in separate memory locations and refer to each other
via pointers.  Therefore, when traversing the skip list at some level
$i$ to find a node's ancestor at level $i+1$, it requires traversing
all nodes that occur before (or after) it at level $i$. We show that
by changing the sampling probability to $1/n^{\epsilon}$, we can
ensure that each level has size $\tilde{O}(n^{\epsilon})$, each level
can be stored within a \emph{single machine} and thus this search can
be done within a \emph{single round}. The new sampling probability
also ensures that the number of levels is $O(1/\epsilon)$ w.h.p. which
is important for achieving our bounds.

\myparagraph{Batch-Dynamic Trees Interface}
A batch-parallel dynamic trees data structure represents a forest
$G=(V, E)$ as it undergoes batches of links, cuts, and connectivity
queries. A \emph{Link} links two trees in the forest. A \emph{Cut}
deletes an edge from the forest, breaking one tree into two trees.  A
\emph{ID} query returns a unique representative for the tree
containing a vertex. Formally the data structure supports the
following operations:
\begin{itemize}
  \item \textbf{$\textsc{Link}(\set{\set{u_1, v_1}, \ldots, \set{u_k,
    v_k}})$} takes an array of edges and adds them to the graph
    $G$. The input edges must not create a cycle in $G$.

  \item \textbf{$\textsc{Cut}(\set{\set{u_1, v_1}, \ldots, \set{u_k,
      v_k}})$} takes an array of edges and removes them from the
    graph $G$.

  \item \textbf{$\textsc{ID}(\set{u_1, \ldots,
    u_k})$} takes an array of vertex ids and returns an array
    containing the representative of each $u_i$. The
    \emph{representative} of a node, $r(u)$ is a unique value s.t.
    $r(u) = r(v)$ iff $u$ and $v$ are in the same tree.
\end{itemize}

Furthermore, the trees can be augmented with values ranging over a
domain $D$, and a commutative function $f : D^2 \rightarrow D$. The
trees can be made to support queries for the sum according to $f$ on
arbitrary subtrees, but for the purposes of this paper queries over
the entire tree suffice. The interface is extended with the following
two primitives:
\begin{itemize}
  \item \textbf{$\textsc{UpdateKey}(\set{\set{u_1, \hat{x}_1},
    \ldots, \set{u_k, \hat{x}_k}})$} takes an array of vertex id,
    value pairs and updates the value for $u_i$ to $\hat{x}_i$.

  \item \textbf{$\textsc{GetKey}(\set{u_1, \ldots, u_k})$} takes an
    array of vertex ids and returns an array containing the value of
    each $u_i$, $\hat{x}_i$.

  \item \textbf{$\textsc{ComponentSum}(\set{u_1, \ldots, u_k})$}
    takes an array of vertex ids and returns an array containing
    $\sum_{w : w \in T(u_i)} \hat{x_w}$ where $T(u_i)$ is the tree
    containing $u_i$, $\hat{x_w}$ is the value for node $w$, and the
    sum is computed according to $f$.
\end{itemize}

We show the following theorem in this section. Let $\delta$ be a
parameter controlling the size of the keys stored at each node and let
$\alpha$ be a parameter controlling the size of the blocks stored
internally within a single machine.

\begin{theorem}\label{thm:batch_aug_trees}
Let $\delta$ be a parameter controlling the keysize, and $\alpha$ be a
constant controlling the blocksize s.t. $\delta + \alpha < \epsilon$
and $0 < \alpha$.  Then, in the \MPC{} model with memory per machine $s =
\tilde{O}(n^{\epsilon})$ there is an augmented batch-dynamic tree data
structure in \MPC{} that supports batches of up to $k$ \textsc{Link},
\textsc{Cut}, \textsc{ID}, \textsc{UpdateKey},
\textsc{GetKey}, and \textsc{ComponentSum} operations in
$O(1/\alpha)$ rounds per operation w.h.p. where $k = O(n^{\alpha})$.

Furthermore, the batch operations cost
\begin{itemize}
  \item $\tilde{O}(kn^{\delta})$ communication per round w.h.p. for
    \textsc{UpdateKey}, \textsc{GetKey}, and \textsc{ComponentSum}
  \item $\tilde{O}(kn^{\delta}n^{\alpha})$ communication per round
    w.h.p. for \textsc{Link} and \textsc{Cut} and
  \item $O(k)$ communication per round for \textsc{ID}.
\end{itemize}
\end{theorem}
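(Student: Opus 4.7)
The plan is to adapt the Tseng et al.\ shared-memory batch-dynamic tree to \MPC{} by representing each tree in $F$ as an Euler-tour sequence stored in a skip list, and to choose the skip-list promotion probability so that the structure maps cleanly onto \MPC{} machines. Concretely, I will set the per-node promotion probability at every level to $p = 1/n^\alpha$. A standard Chernoff argument over the geometric gaps between consecutive sampled nodes shows that, w.h.p., (i) each ``block'' of level-$i$ nodes lying between two adjacent level-$(i{+}1)$ nodes has size $\Otil(n^\alpha)$, and (ii) the total number of non-empty levels is $O(1/\alpha)$. Since $\alpha + \delta < \epsilon$, a whole block together with its $n^\delta$-sized augmented keys has size $\Otil(n^{\alpha+\delta}) = \Otil(n^\epsilon)$ and thus fits on a single machine, so every block can be stored and rewritten locally in a single round.

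With this layout, I would implement the operations as follows. Each \textsc{Link} or \textsc{Cut} on the forest corresponds to $O(1)$ splits and joins of Euler-tour sequences, i.e.\ to $O(1)$ splits/joins of the underlying skip lists. I process a batch of $k$ such edge operations level by level from the bottom up: at level $i$, I group the $O(k)$ affected nodes by the block they live in, ship each touched block (size $\Otil(n^{\alpha+\delta})$) to one machine, rewrite it locally to reflect the split/join, and emit a single updated summary per block upward to level $i{+}1$. Disjointness of blocks within a level gives embarrassing parallelism, the per-round communication is $\Otil(k\, n^{\alpha+\delta})$, and $O(1/\alpha)$ levels yield the stated round-count. \textsc{ID} queries simply walk from each query-node up to the topmost skip-list node in its sequence and return its identifier; each of the $O(1/\alpha)$ rounds ships $O(1)$ bits per query, giving $O(k)$ communication per round. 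For \textsc{UpdateKey}, \textsc{GetKey}, and \textsc{ComponentSum}, the same root-path walk carries a key of size $n^\delta$, combined under the commutative aggregator $f$ as it ascends, so the per-round communication is $\Otil(k\, n^\delta)$; and for \textsc{ComponentSum} the answer is just the topmost summary.

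The main obstacle will be keeping the augmented summaries consistent when a single batch touches many nodes whose ancestor-blocks collide at higher levels. I will handle this with the standard bottom-up ETT recomputation, but ``bucketed by block'' at each level: after rewriting a block at level $i$, every affected key inside it is locally combined under $f$ into a single replacement summary, which is then propagated as one element to its parent block at level $i{+}1$. Because $f$ is commutative and associative and every ancestor of an affected node is the unique block-representative at each level, this exactly reproduces the sequential ETT semantics while avoiding any per-affected-node broadcast up the root-path; each round of this scheme stays within the $\Otil(k\,n^{\alpha+\delta})$ budget for \textsc{Link}/\textsc{Cut} and $\Otil(k\,n^\delta)$ for key operations. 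The remaining routine pieces are the Chernoff bounds on block sizes and on the number of levels, and the observation that grouping the $O(k) \leq n^\alpha$ affected nodes by their level-$(i{+}1)$ parent can be done in $O(1)$ rounds by standard \MPC{} sorting on machines of size $\Otil(n^\epsilon)$.
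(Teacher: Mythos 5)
Your proposal matches the paper's construction essentially exactly: both adapt the Tseng et al.\ Euler-tour/skip-list structure to \MPC{} by lowering the promotion probability to $1/n^{\alpha}$ so that blocks have size $\Otil(n^{\alpha})$, fit on one machine (even with $n^{\delta}$-sized keys since $\alpha+\delta<\epsilon$), and the $O(1/\alpha)$ levels are processed one per round with the same per-operation communication bounds. The remaining details you defer to (Chernoff bounds, per-level block grouping, the \textsc{Link}/\textsc{Cut}-to-split/join reduction, the bottom-up augmented-summary repair) are the same ones the paper sketches, so the approaches are not meaningfully different.
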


\subsection{Augmented Batch-Dynamic Sequences in \MPC{}}
In order to obtain Theorem~\ref{thm:batch_aug_trees}, we
first show how to implement augmented batch-dynamic sequences in
few rounds of \MPC{}. In particular, we will show the following lemma.
Note that achieving a similar bound on the round-complexity for large
batches, e.g., batches of size $O(n)$, would disprove the $2$-cycle
conjecture. We refer to~\cite{tseng2018batch} for the precise
definition of the sequence interface.

\begin{lemma}\label{lem:batch_aug_seq}
Let $\delta$ be a parameter controlling the keysize, and $\alpha$ be a
constant controlling the blocksize s.t. $\delta + \alpha < \epsilon$
and $0 < \alpha$. Then, in the \MPC{} model with memory per machine $s =
\tilde{O}(n^{\epsilon})$
there is an augmented batch-dynamic sequence
data structure in \MPC{} that supports batches of up to $k$
\textsc{Split}, \textsc{Join}, \textsc{ID},
\textsc{UpdateKey}, \textsc{GetKey}, and \textsc{SequenceSum}
operations in $O(1/\alpha)$ rounds per operation w.h.p. where $k =
O(n^{\alpha})$.

Furthermore, the batch operations cost
\begin{itemize}
  \item $\tilde{O}(kn^{\delta})$ communication per round w.h.p. for
    \textsc{UpdateKey}, \textsc{GetKey}, and \textsc{SequenceSum}
  \item $\tilde{O}(kn^{\delta}n^{\alpha})$ communication per round
    w.h.p. for \textsc{Split} and \textsc{Join} and
  \item $O(k)$ communication per round for \textsc{ID}.
\end{itemize}
\end{lemma}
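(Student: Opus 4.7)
The plan is to adapt the concurrent skip list representation of Tseng et al.~\cite{tseng2018batch} by changing the level sampling probability from $1/2$ to $1/n^{\alpha}$, so that each level's ``blocks'' (maximal runs of unpromoted elements between two consecutive higher-level elements) have size $\tilde{O}(n^{\alpha})$ with high probability and the total number of levels is $O(1/\alpha)$ with high probability by a standard Chernoff+union-bound argument. Combined with the assumption $\alpha + \delta < \epsilon$, one such block, augmented with keys of size $n^{\delta}$, fits inside the memory $\tilde{O}(n^{\epsilon})$ of a single machine. I would store each block on its own machine, together with the augmented partial sum over that block under $f$, and store, at each level-$i$ representative, a pointer to its level-$(i+1)$ parent and a pointer to the machine holding its block. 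This layout is what makes every per-level primitive doable in a single round.

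First I would handle \textsc{ID}, \textsc{GetKey}, and \textsc{UpdateKey}. For \textsc{ID}, starting from each of the $k$ input vertices simultaneously, one round per level suffices: on each machine we locally scan at most one block to find the enclosing level-$(i+1)$ element, then follow its parent pointer to the next level; since there are $O(1/\alpha)$ levels this is $O(1/\alpha)$ rounds, with only the parent pointers ($O(1)$ information per query) traveling between rounds, yielding total communication $O(k)$ per round. For \textsc{GetKey}/\textsc{UpdateKey} we do one round to route each of the $k$ requests to the machine owning the relevant leaf's block, sending an $n^{\delta}$-sized key per request, giving $\tilde{O}(kn^{\delta})$ communication per round; \textsc{UpdateKey} additionally triggers a recomputation of augmented sums along the path from the updated leaf to the root, which again is $O(1/\alpha)$ rounds of local scans, one block per round. \textsc{SequenceSum} is answered by first running an \textsc{ID} query and then, on the owning machine, reading the cached partial sum stored at the top-level representative, giving the same bounds.

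Next I would handle \textsc{Split} and \textsc{Join}. A batch of $k$ splits/joins touches at most $k$ blocks per level. For each level I would use sorting-based routing~\cite{KSV10} in $O(1)$ rounds to group all operations affecting the same block onto one machine; that machine locally rebuilds its block, recomputes the associated augmented sum under $f$, and forwards to the next level the new set of level-$(i+1)$ representatives (which can have size $\tilde{O}(k)$). Because a rebuilt block has size $\tilde{O}(n^{\alpha})$ and carries keys of length $n^{\delta}$, each level contributes $\tilde{O}(kn^{\alpha+\delta})$ communication, and we repeat for all $O(1/\alpha)$ levels. After restructuring, augmented partial sums are refreshed top-down in another $O(1/\alpha)$ rounds with the same communication bound. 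Totalling across levels and incorporating recomputation of augmented values yields exactly the bounds claimed by the lemma.

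The main obstacle I anticipate is not round-complexity per se but the w.h.p.\ structural guarantees under the non-standard sampling rate $p = 1/n^{\alpha}$: one must show (i) the height of the skip list is $O(1/\alpha)$ with high probability, (ii) \emph{every} block at \emph{every} level has size $\tilde{O}(n^{\alpha})$ with high probability, and (iii) these bounds continue to hold after batches of splits/joins, including in the presence of the adversarial accumulation of deletions that could otherwise unbalance blocks. Items (i) and (ii) follow from geometric-distribution tail bounds together with a union bound over all $\tilde{O}(n)$ potential blocks across all $O(1/\alpha)$ levels, using the fact that $\alpha$ is a positive constant so that $n^{\alpha}$ dominates any polylogarithmic overhead; item (iii) follows because the promotion coin of each element is fixed at the time it enters the sequence and is independent of the split/join sequence, so the analysis is identical to the static case. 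Once these structural lemmas are in hand, the round and communication bounds are an immediate consequence of the per-level accounting above.
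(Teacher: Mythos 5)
Your proposal is essentially the paper's own proof: both replace the constant skip-list promotion probability with $1/n^{\alpha}$ so that blocks have size $\tilde{O}(n^{\alpha})$ and there are $O(1/\alpha)$ levels w.h.p., store each block (with its $n^{\delta}$-sized keys and cached partial sums) on a single machine using $\alpha+\delta<\epsilon$, and then process each operation level-by-level to get $O(1/\alpha)$ rounds with the stated per-operation communication bounds. The only surface difference is that you spell out the sorting-based routing step and the w.h.p.\ structural arguments (height, block size, invariance under splits/joins since promotion coins are fixed at insertion) slightly more explicitly than the paper does.
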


For the sake of simplicity we discuss the case where $\delta = 0$ and
$0 < \alpha < \epsilon$ (i.e. values that fit within a constant number
of machine words), and describe how to generalize the idea to larger
values at the end of the sub-section.

\myparagraph{Sequence Data Structure} As in Tseng et
al.~\cite{tseng2018batch} we use a skip list as the underlying
sequence data structure. Instead of sampling nodes with constant
probability to join the next level, we sample them with probability
$1/n^{\alpha}$. It is easy to see that this ensures that the number
of levels in the list is $O(1/\alpha)$ w.h.p. since $\alpha$ is a
constant greater than $0$. Furthermore, the
largest number of nodes in some level $i$ that ``see" a node at level
$i+1$ as their left or right ancestor is $O(n^{\alpha}\log n)$
w.h.p. We say that the left (right) \emph{block} of a node belonging
to level $i$ are all of its siblings to the left (right) before the
next level $i+1$ node.
As previously discussed, in the \MPC{} setting we should intuitively try
to exploit the locality afforded by the \MPC{} model to store the blocks
(contiguous segments of a level) on a single machine.  Since each
block fits within a single machine w.h.p., operations within a block
can be done in 1 round, and since there are $O(1/\alpha)$ levels,
the total round complexity will be $O(1/\alpha)$ as desired. Since
the ideas and data structure are similar to Tseng et
al.~\cite{tseng2018batch}, we only provide the high-level details and
refer the reader to their paper for pseudocode.

\myparagraph{Join}
The join operation takes a batch of pairs of sequence elements to
join, where each pair contains the rightmost element of one sequence
and the leftmost element of another sequence. We process the levels
one by one. Consider a join of $(r_i, l_i)$. We scan the blocks for
$r_i$ and $l_i$ to find their left and right ancestors, and join them.
In the subsequent round, these ancestors take the place of $(r_i,
l_i)$ and we recursively continue until all levels are processed.
Observe that at each level, for each join we process we may create a
new block, with $\tilde{O}(n^{\alpha})$ elements. In summary, the
overall round-complexity of the operation is $O(1/\alpha)$ w.h.p.,
and the amount of communication needed is $\tilde{O}(kn^{\alpha})$
w.h.p.

\myparagraph{Split}
The split operation takes a batch of sequence elements at which to
split the sequences they belong to by deleting the edge to the right
of the element. We process the levels one by one.  Consider a split at
a node $e_i$. On each level, we first find the left and right
ancestors as in case of join. We then send all nodes splitting a given
block to the machine storing that block, and split it in a single
round. Then, we recurse on the next level. If the left and right
ancestors of $e_i$ were connected, we call split on the left right
ancestor at the next level. The overall round-complexity is
$O(1/\alpha)$ w.h.p., and the amount of communication needed is
$\tilde{O}(kn^{\alpha})$ w.h.p.

\myparagraph{Augmentation and Other Operations}
Each node in the skip list stores an augmented value which represents
the sum of all augmented values of elements in the block for which it
is a left ancestor. Note that these values are affected when
performing splits and joins above, but are easily updated within the
same round-complexity by computing the correct sum within any block
that was modified and updating its left ancestor. \textsc{SetKey} operations,
which take a batch of sequence elements and update the augmented
values at these nodes can be handled similarly in the same
round-complexity as join and split above. Note that this structure
supports efficient range queries over the augmented value, but for the
purposes of this paper, returning the augmented value for an entire
sequence (\textsc{SequenceSum}) is sufficient, and this can clearly be done in
$O(1/\alpha)$ rounds and $O(k)$ communication. Similarly, returning
a representative node (\textsc{ID}) for the sequence can be done in $O(1/\alpha)$
rounds w.h.p. and $O(k)$ communication by finding the top-most level
for the sequence containing the queried node, and returning the
lexicographically first element in this block.

\myparagraph{Handling Large Values}
Note that if the values have super-constant size, i.e. size
$O(n^{\delta})$ for some $\delta$ s.t. $\delta + \alpha < \epsilon$ we
can recover similar bounds as follows. Since the blocks have size
$\tilde{O}(n^{\alpha})$ and each value has size $O(n^{\delta})$ the
overall size of the block is $\tilde{O}(n^{\alpha + \delta}) =
\tilde{O}(n^\epsilon)$. Therefore blocks can still be stored within a
single machine without changing the sampling parameter. Storing large
values affects the bounds as follows. First, the communication cost of
performing splits and joins grows by a factor of $O(n^{\delta})$ due
to the increased block size. Second, the cost of getting, setting, and
performing a component sum grows by a factor of $O(n^{\delta})$ as
well, since $k$ values are returned, each of size $O(n^{\delta})$.
Therefore the communication cost of all operations other than finding
a represntative increase by a multiplicative $O(n^{\delta})$ factor.
Finally, note that the bounds on round-complexity are not affected,
since nodes are still sampled with probability $1/n^{\alpha}$.

\subsection{Augmented Batch-Dynamic Trees in \MPC{}}
We now show how to implement augmented batch-dynamic trees in \MPC{},
finishing the proof of Theorem~\ref{thm:batch_aug_trees}. We focus on
the case where $\delta = 0$ (we are storing constant size words) and
explain how the bounds are affected for larger $\delta$.

\myparagraph{Forest Data Structure}
We represent trees in the forest by storing the Euler tour of the tree in a
sequence data structure. If the forest is augmented under some domain
$D$ and commutative function $f : D^{2} \rightarrow D$, we apply this
augmentation to the underlying sequences.

\myparagraph{Link}
Given a batch of link operations (which are guaranteed to be acyclic)
we update the forest structure as follows. Consider a link $(u_i,
v_i)$. We first perform a batch split operation on the underlying
sequences at all $u_i, v_i$ for $1 \leq i \leq k$, which splits the
Euler tours of the underlying trees at the nodes incident to a link.
Next, we send all of the updates to a single machine to establish the
order in which joins incident to a single vertex are carried out.
Finally, we perform a batch join operation using the order found in
the previous round to link together multiple joins incident to a
single vertex. Since we perform a constant number of batch-sequence
operations with batches of size $O(k)$, the overall round complexity
is $O(1/\alpha)$ w.h.p.  by our bounds on sequences, and the overall
communication is $\tilde{O}(kn^{\alpha})$ w.h.p.

\myparagraph{Cut}
Given a batch of cut operations, we update the forest structure as
follows. Consider a cut $(u_i, v_i)$. The idea is to splice this edge
out of the Euler tour by splitting before and after $(u_i, v_i)$ and
$(v_i, u_i)$ in the tour. The tour is then repaired by joining the
neighbors of these nodes appropriately. In the case of batch cuts, we
perform a batch split for the step above. For batch cuts, notice that
many edges incident to a node could be deleted, and therefore we may
need to traverse a sequence of deleted edges before finding the next
neighbor to join. We handle this by sending all deleted edges and
their neighbors to a single machine, which determines which nodes
should be joined together to repair the tour. Finally, we repair the
tours by performing a batch join operation. Since we perform a
constant number of batch-sequence operations with batches of size
$O(k)$ the overall round complexity is $O(1/\alpha)$ w.h.p.  by our
bounds on sequences, and the overall communication is
$\tilde{O}(kn^{\alpha})$ w.h.p.

\myparagraph{Augmentation, Other Operations and Large Values}
Note that the underlying sequences handle updating the augmented
values, and that updating the augmented values at some nodes trivially
maps to an set call on the underlying sequences. Therefore the bounds
for \textsc{GetKey} and \textsc{SetKey} are identical to that of
sequences. Similarly, the bounds for \textsc{ID} are
identical to that of the sequence structure.
For super-constant size values, the bounds are affected exactly as in
the case for augmented sequences with large values. The communication
costs for all operations other than \textsc{ID} grow
by an $O(n^{\delta})$ factor and the round-complexity is unchanged.
This completes the proof of Theorem~\ref{thm:batch_aug_trees}.

%!TEX root = Main.tex
%\section{Fast contraction}
\section{Fast Contraction}
\label{sec:Contraction}
%\saurabh{Section name: Fast Contraction?}

The aim of this section is to prove Lemma~\ref{lem:contraction_main},
which is pivotal in proving the correctness of the main algorithm
from Section~\ref{sec:1conn}.
%\begin{lemma}
%	\label{lem:Contraction}
%	In a multigraph on $n$ vertices with at most $m$ (multi) edges, repeatedly
%	picking $n^{\epsilon}$ neighbors out of each vertex and contracting them
%	leads to the graph becoming singleton vertices (which may be disjoint)
%	in $O( 1 / \epsilon)$ iterations with high probability.
%\end{lemma}
%
%\Contraction*

Lemma~\ref{lem:contraction_main} is important in proving that our algorithm can find
replacement edges in the spanning forest quickly in the event
of a batch of edges being deleted.
The proof idea is as follows.
We first show that there exists a partitioning of the vertices
such that the edges within the partitions collapse in a single iteration.

To do this, we first need to define a few terms relating to
expansion criteria of a graph.
Let $\deg_G(v)$ denote the degree of a vertex $v$ in graph $G$.
For edges in a partition to collapse in a single iteration,
we need each partition to be sufficiently ``well-knit".
This property can be quantified using the notion of conductance.

%\begin{definition}[Conductance]
%Given a graph $G(V,E)$ and a subset of vertices $S \subseteq V$,
%the conductance of $S$ w.r.t. $G$ is defined as
%\[
%\phi_G(S) \defeq \min_{S' \subseteq S} \dfrac{\left| E(S', S \setminus S') \right|}{\min \left\{ \sum_{u \in S'} \deg_G(u), \sum_{u \in S \setminus S'} \deg_G(u) \right\}}.
%\]
%\end{definition}
\Conductance*

The following lemma proves the existence of a partitioning such that
each partition has high conductance.

%\begin{lemma}[\cite{SpielmanT11}, Section 7.1.]
%	\label{lem:Expander}
%	Given a parameter $k > 0$,  
%	any graph $G$ with $n$ vertices and $m$ edges can be partitioned into
%	groups of vertices $S_1, S_2, \ldots$ such that
%	\begin{itemize}
%		\item the conductance of each $S_i$ is at least $1/k$;
%		\item the number of edges between the $S_i$'s is at most $O(m \log{n}/k)$.
%	\end{itemize}
%\end{lemma}
\Expander*

Now that we have a suitable partitioning,
we want to find a strategy of picking edges in a decentralized fashion
such that all edges within a partition collapse with high probability.
One way to do this is to pick edges which form
a spectral sparsifier of $S_i$.
The following lemma by Spielman and Srivastava~\cite{SpielmanS08}
helps in this regard: we use more recent interpretations of it
that take sampling dependencies into account.

\begin{lemma}[\cite{SpielmanS08,Tropp12,KoutisLP16,KyngPPS17}]
	\label{lem:Sparsifier}
	On a graph $G$, let $E_1 \ldots E_{k}$ be independent random distributions
	over edges such that the total probability of an edge $e$ being picked
	is at least $\Omega(\log{n})$ times its effective resistance,
	then a random sample from $H = E_1 + E_2 + \ldots + E_k$ is connected
	with high probability.
\end{lemma}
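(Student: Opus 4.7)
The plan is to invoke the Spielman--Srivastava spectral sparsification framework together with a matrix Chernoff concentration bound applied at the round level (since it is the rounds, not individual edges, that are asserted to be independent). The effective-resistance hypothesis will turn out to be exactly the leverage-score condition needed for this concentration to go through.

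First I would write the graph Laplacian as $L = \sum_{e \in E} L_e$, where each $L_e$ is the rank-one edge Laplacian, and recall that the effective resistance of $e$ is $R_e = \mathrm{tr}(L_e L^\dagger)$. For each round $E_i$ I would define the random PSD matrix
\[
X_i \;\defeq\; \sum_{e \in E_i} \frac{1}{p_{i,e}} \, L^{\dagger/2} L_e L^{\dagger/2},
\]
where $p_{i,e} = \Pr[e \in E_i]$ and the sum is over the (random) edges actually drawn by $E_i$. By construction $\mathbb{E}[X_i]$ is a partial projector onto the image of $L$, and $\sum_i \mathbb{E}[X_i] = \Pi_{\mathrm{im}(L)}$. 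Independence of the rounds transfers directly to independence of the $X_i$'s. The hypothesis $\sum_i p_{i,e} \geq \Omega(\log n) R_e$ lets me bound $\|X_i\|$ by $O(1/\log n)$, which is exactly the spectral-norm bound required by the matrix Chernoff inequality. Applying it yields, with high probability, $\tfrac{1}{2} L \preceq L_H \preceq \tfrac{3}{2} L$, where $L_H$ is the Laplacian of the sample multigraph $H$. In particular $\ker(L_H) = \ker(L)$, and since the kernel of a graph Laplacian records its connected components, $H$ and $G$ share components, so $H$ is connected whenever $G$ is.

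The main obstacle is handling the correlation structure within a single round: the original Spielman--Srivastava argument samples edges one at a time independently, whereas here the edges drawn by a single $E_i$ may be dependent. This is precisely what the matrix-martingale refinements of Tropp, Koutis--Levin--Peng, and Kyng--Pachocki--Peng--Sachdeva accommodate: provided one has independence between \emph{rounds}, it suffices to control $\|X_i\|$ and the total variance of $\sum_i X_i$, both of which are governed by the $\Omega(\log n)$ oversampling in effective resistance. With those two quantities bounded, the concentration statement from those cited works closes out the proof, and no further independence between individual edges within a round is needed.
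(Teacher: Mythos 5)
The paper does not prove this lemma itself; it is cited to \cite{SpielmanS08,Tropp12,KoutisLP16,KyngPPS17}, so I can only assess your argument on its own terms. The framework you invoke — matrix Chernoff applied to the (independent) rounds with leverage-score control — is indeed the right one and matches what the cited papers do, but your normalization is off in a way that breaks both of the key quantitative claims you make.

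You set $X_i = \sum_{e \in E_i} \tfrac{1}{p_{i,e}} L^{\dagger/2} L_e L^{\dagger/2}$ with $p_{i,e} = \Pr[e \in E_i]$. With this per-round normalization, $\mathbb{E}[X_i] = \sum_e p_{i,e}\cdot \tfrac{1}{p_{i,e}} L^{\dagger/2} L_e L^{\dagger/2} = \Pi_{\mathrm{im}(L)}$ for \emph{every} $i$, so $\sum_i \mathbb{E}[X_i] = k\,\Pi_{\mathrm{im}(L)}$, not $\Pi_{\mathrm{im}(L)}$ as you claim. Worse, the norm bound fails: $\|X_i\|$ can be as large as $\max_e R_e / p_{i,e}$, and the hypothesis $\sum_j p_{j,e} \geq \Omega(\log n) R_e$ says nothing about any individual $p_{i,e}$, which can be arbitrarily small. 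So you cannot conclude $\|X_i\| = O(1/\log n)$, and matrix Chernoff does not apply. The fix is to normalize by the \emph{total} probability $q_e \defeq \sum_{j=1}^k p_{j,e}$, i.e., take $X_i = \sum_{e \in E_i} \tfrac{1}{q_e} L^{\dagger/2} L_e L^{\dagger/2}$. Then $\mathbb{E}\bigl[\sum_i X_i\bigr] = \sum_e q_e\cdot\tfrac{1}{q_e} L^{\dagger/2} L_e L^{\dagger/2} = \Pi_{\mathrm{im}(L)}$, each summand's norm is $R_e/q_e \leq O(1/\log n)$ by hypothesis, and Tropp's lower-tail matrix Chernoff gives $\sum_i X_i \succeq \tfrac12 \Pi_{\mathrm{im}(L)}$ with high probability, which yields $\ker(L_H)=\ker(L)$ and hence connectivity. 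Finally, your concern about within-round correlation is moot under the intended reading: each $E_i$ is a distribution over single edges (this is how \ContractionSampling uses it — each $S_{v,i}$ outputs one edge or $\bot$), so $X_i$ has at most one term and the summands are genuinely independent; the subtlety the cited works accommodate is non-uniform splitting of sampling probability across independent rounds, not correlated edges within a round.
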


Now we want to show that the random process ContractionSampling (Defintion~\ref{def:contractionsampling})
where each vertex draws $k \log^{2}n$ samples
actually satisfies the property mentioned in Lemma~\ref{lem:Sparsifier},
i.e., all edges are picked with probability at least
their effective resistance.
To show this, we first need the following inequality given by Cheeger.

\begin{lemma}[\cite{AlonM85}]
	\label{lem:Cheeger}
	Given a graph $G$, for any subset of vertices $S$ with conductance $\phi$,
	we have
	\[
	\lambda_2 \left( \DD_S^{-1/2} \LL_{S} \DD_S^{-1/2} \right) \geq \dfrac{1}{2} \phi^2,
	\]
	where $\LL_{S}$ is the Laplacian matrix of the subgraph of $G$ induced by $S$.
	$D_S$ is the diagonal matrix with degrees of vertices in $S$.
\end{lemma}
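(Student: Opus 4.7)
The plan is to give the classical variational / sweep-cut proof of Cheeger's inequality due to Alon and Milman, applied to the induced subgraph $G[S]$ with the degree normalization inherited from $G$. The starting observation is that $\NN_S := \DD_S^{-1/2}\LL_S \DD_S^{-1/2}$ is positive semidefinite with $\DD_S^{1/2}\mathbf{1}_S$ in its kernel, because $\LL_S \mathbf{1}_S = 0$ (the Laplacian of $G[S]$ annihilates constants). Substituting $y = \DD_S^{-1/2}x$ in the Courant--Fischer formula gives the Rayleigh quotient characterization
\[
\lambda_2(\NN_S) = \min_{y \neq 0,\; \sum_{u \in S} \deg_G(u)\, y_u = 0} \frac{\sum_{(u,v) \in E(G[S])}(y_u - y_v)^2}{\sum_{u \in S}\deg_G(u)\, y_u^2}.
\]

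First, I would apply the \emph{positive-part} reduction. Let $y^\star$ attain the above minimum, let $m$ be a $\deg_G$-weighted median of $y^\star$ on $S$, and set $z_+ := (y^\star - m)_+$ and $z_- := (m - y^\star)_+$. By the median property both supports have $\deg_G$-volume at most $\tfrac{1}{2}\sum_{u \in S}\deg_G(u)$. A pointwise check on each edge shows $(z_{+,u} - z_{+,v})^2 + (z_{-,u} - z_{-,v})^2 \leq (y^\star_u - y^\star_v)^2$ (the nontrivial case $y^\star_u \geq m \geq y^\star_v$ reduces to $a^2+b^2 \leq (a+b)^2$ for $a,b \geq 0$). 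For the denominators, the constraint $\sum \deg_G(u) y^\star_u = 0$ gives $\sum \deg_G(u)(y^\star_u - m)^2 = \sum \deg_G(u)(y^\star_u)^2 + m^2\sum \deg_G(u) \geq \sum \deg_G(u)(y^\star_u)^2$, which equals $\sum \deg_G(u) z_{+,u}^2 + \sum \deg_G(u) z_{-,u}^2$. By the mediant inequality, at least one of $z_+, z_-$, call it $w$, has Rayleigh quotient (with respect to $\LL_S$ and $\DD_S$) at most $\lambda_2(\NN_S)$ and support of $\deg_G$-volume at most half of $\sum_{u \in S}\deg_G(u)$.

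Second, I would run a coarea / layer-cake argument on $w$. Define level sets $L_t := \{u \in S : w_u^2 > t\}$; each $L_t$ is contained in the support of $w$, so $\sum_{u \in L_t}\deg_G(u) \leq \tfrac{1}{2}\sum_{u \in S}\deg_G(u)$, and hence $|E_{G[S]}(L_t, S \setminus L_t)| \geq \phi \sum_{u \in L_t}\deg_G(u)$ for $\phi := \phi_G(S)$. Integrating the layer-cake identities
\[
\sum_{u}\deg_G(u) w_u^2 = \int_0^\infty \sum_{u \in L_t}\deg_G(u)\, dt,\qquad \sum_{(u,v) \in E(G[S])}|w_u^2 - w_v^2| = \int_0^\infty |E_{G[S]}(L_t, S\setminus L_t)|\, dt,
\]
yields $\sum_{(u,v) \in E(G[S])} |w_u^2 - w_v^2| \geq \phi\sum_{u \in S}\deg_G(u) w_u^2$.

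Third, I would close with Cauchy--Schwarz. Writing $|w_u^2 - w_v^2| = |w_u - w_v|(w_u + w_v)$,
\[
\phi \sum_u \deg_G(u)\, w_u^2 \;\leq\; \sqrt{\sum_{(u,v)}(w_u - w_v)^2} \cdot \sqrt{\sum_{(u,v)}(w_u + w_v)^2} \;\leq\; \sqrt{\sum_{(u,v)}(w_u - w_v)^2}\cdot \sqrt{2\sum_u \deg_G(u)\, w_u^2},
\]
using $\sum_{(u,v)}(w_u+w_v)^2 \leq 2\sum_{(u,v)}(w_u^2+w_v^2) \leq 2\sum_u \deg_G(u) w_u^2$. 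Squaring and rearranging gives $\sum(w_u - w_v)^2 \geq (\phi^2/2)\sum \deg_G(u) w_u^2$, so the Rayleigh quotient of $w$ is at least $\phi^2/2$. Combined with the first step this establishes $\lambda_2(\NN_S) \geq \phi^2/2$. The main subtlety is the positive-part reduction, where the mean-zero constraint on $y^\star$ and the choice of a $\deg_G$-weighted median must jointly ensure both a Rayleigh-quotient bound and a bounded-volume support, since conductance only controls cuts whose smaller side has at most half the volume; the coarea step and the Cauchy--Schwarz finish are then routine.
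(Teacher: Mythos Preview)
The paper does not actually prove this lemma; it is stated with a citation to Alon--Milman and used as a black box in the proof of Lemma~\ref{lem:ERlessthanp}. So there is no ``paper's own proof'' to compare against. Your argument is the standard variational / sweep-cut proof of the easy direction of Cheeger's inequality, and it is correct in this setting. In particular, you correctly carry the $\deg_G$ normalization (rather than $\deg_{G[S]}$) through the Rayleigh quotient, the median/positive-part reduction, the coarea step, and the final Cauchy--Schwarz, which is exactly the normalization needed to match the paper's definition of $\phi_G(S)$ and its subsequent use in bounding effective resistances.
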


\begin{lemma}
	\label{lem:ERlessthanp}
	Let $S$ be a subset of vertices of $G$ such that $\phi_G(S) \geq 1/2\alpha^{1/3}$ for some $\alpha > 0$.
	For an edge $e=uv$, where $u,v \in S$, the effective resistance of $e$ measured in $S$, $ER_{S}(e)$, satisfies
	\[
	ER_{S}\left( e \right)
	\leq
	\alpha \left(\dfrac{1}{\deg_G(u)} + \dfrac{1}{\deg_G(v)}\right).
	\]
\end{lemma}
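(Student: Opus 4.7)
My plan is to recast $ER_S(uv)$ as a quadratic form in the pseudoinverse of the \emph{normalized} Laplacian of $G[S]$, and then apply Cheeger's inequality (Lemma~\ref{lem:Cheeger}) to lower-bound the relevant eigenvalue via $\phi_G(S)$. Let $\LL_S$ be the Laplacian of $G[S]$, let $\DD_S$ be the diagonal matrix of $G$-degrees on $S$ (matching the normalization used in Lemma~\ref{lem:Cheeger}), and set $\MM = \DD_S^{-1/2}\LL_S\DD_S^{-1/2}$. Since $\phi_G(S)>0$ forces $G[S]$ to be connected, the null space of $\MM$ is exactly the span of $\DD_S^{1/2}\one$.

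The first step is to establish the identity
\[
ER_S(uv) \;=\; (e_u-e_v)^{T}\LL_S^{+}(e_u-e_v) \;=\; \tilde x^{T}\MM^{+}\tilde x, \qquad \tilde x \;=\; \DD_S^{-1/2}(e_u-e_v).
\]
The left equality is the standard Laplacian formula for effective resistance. For the right equality, observe that $(e_u-e_v)\perp\one$, hence $\tilde x\perp \DD_S^{1/2}\one$, so $\tilde x$ lies in the range of $\MM$. Letting $z=\MM^{+}\tilde x$ and $y=\DD_S^{-1/2}z$, a direct computation gives $\LL_S y = \DD_S^{1/2}\MM z = \DD_S^{1/2}\tilde x = (e_u-e_v)$, so $y$ and $\LL_S^{+}(e_u-e_v)$ differ only by a multiple of $\one$; since $(e_u-e_v)\perp\one$, pairing on both sides with $(e_u-e_v)$ kills the ambiguity.

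Next, I apply the Rayleigh-quotient bound to $\MM^{+}$: since $\tilde x$ is orthogonal to the null vector of $\MM$,
\[
\tilde x^{T}\MM^{+}\tilde x \;\leq\; \frac{\|\tilde x\|_2^{2}}{\lambda_2(\MM)} \;=\; \frac{1}{\lambda_2(\MM)}\left(\frac{1}{\deg_G(u)}+\frac{1}{\deg_G(v)}\right),
\]
using the one-line computation $\|\DD_S^{-1/2}(e_u-e_v)\|_2^{2} = 1/\deg_G(u)+1/\deg_G(v)$. Finally, Lemma~\ref{lem:Cheeger} with $\phi_G(S)\geq 1/(2\alpha^{1/3})$ gives $\lambda_2(\MM) \geq \tfrac{1}{2}\phi_G(S)^{2}\geq 1/(8\alpha^{2/3})$, so
\[
ER_S(uv) \;\leq\; 8\alpha^{2/3}\left(\frac{1}{\deg_G(u)}+\frac{1}{\deg_G(v)}\right),
\]
which yields the stated bound after absorbing the constant $8$ and the exponent $2/3$ into $\alpha$ (the intended regime has $\alpha$ at least a sufficiently large constant, which is where the lemma is invoked in Lemma~\ref{lem:contraction_main}).

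The main obstacle is the identity in the second paragraph: the replacement of $\LL_S^{+}$ by the normalized analogue $\DD_S^{-1/2}\MM^{+}\DD_S^{-1/2}$ is \emph{not} an operator identity, since the two disagree along the null direction $\one$. What makes the substitution legitimate is the shift-invariance of the pseudoinverse along the null space, together with the fact that the test vector $(e_u-e_v)$ is orthogonal to $\one$. Everything else---the eigenvalue bound via Cheeger and the Rayleigh-quotient upper bound on $\tilde x^{T}\MM^{+}\tilde x$---is routine.
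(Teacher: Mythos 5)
Your proof is correct and follows the paper's strategy: pass from $ER_S(uv)$ to the quadratic form $\tilde{x}^T\MM^+\tilde{x}$ with $\MM = \DD_S^{-1/2}\LL_S\DD_S^{-1/2}$, bound it via the Rayleigh quotient by $\|\tilde{x}\|^2/\lambda_2(\MM)$, and invoke Lemma~\ref{lem:Cheeger} to lower-bound $\lambda_2(\MM)$ by $\tfrac{1}{2}\phi_G(S)^2$. In fact your null-space bookkeeping for the identity $ER_S(uv)=\tilde{x}^T\MM^+\tilde{x}$ is more careful than the paper's, which jumps in one line from Lemma~\ref{lem:Cheeger} to the Loewner inequality $\LL_S \succeq \tfrac{1}{2}\phi_G(S)^2\,\PPi_{\perp\one}\DD_S\PPi_{\perp\one}$ and then to \cref{eq:CheegerToDegree}. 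Where you genuinely diverge is the reading of $\DD_S$ in Lemma~\ref{lem:Cheeger}. You take $\DD_S = \mathrm{diag}(\deg_G(v))_{v\in S}$, which makes $\|\tilde{x}\|^2 = 1/\deg_G(u)+1/\deg_G(v)$ directly and gives the bound with $\phi_G(S)^{-2}$. The paper instead takes $\DD_S = \mathrm{diag}(\deg_S(v))$, the degrees \emph{inside} $G[S]$ (this is visible in the appearance of $\deg_S$ in \cref{eq:CheegerToDegree}), and then needs the extra step $\deg_S(v)/\deg_G(v) \geq \phi_G(S)$ to pass from $\deg_S$ to $\deg_G$, which is exactly what produces the $\phi_G(S)^{-3}$ in their final line. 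Your reading requires a mildly generalized Cheeger inequality in which the normalizing diagonal may dominate, rather than equal, the degree diagonal of $G[S]$; this is true (one can reduce it to the standard Alon--Milman statement by adding self-loops of weight $\deg_G(v)-\deg_S(v)$, which do not change $\LL_S$ or the edge cuts but make the degrees equal $\deg_G$), but it is not the interpretation the paper's own proof uses, and if you instead adopt the paper's reading you need their $\deg_S/\deg_G \geq \phi_G(S)$ step and land on $\phi^{-3}$ as they do. Net effect: your route buys a quantitatively sharper $O(\alpha^{2/3})$ bound where the paper gets $O(\alpha)$; both suffice for Lemma~\ref{lem:contraction_main}, where $\alpha = \Theta(k\log n)$. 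The closing remark about ``absorbing the constant $8$ and the exponent $2/3$ into $\alpha$'' is harmless in that regime (it needs $\alpha \geq 512$), but would be cleaner stated as having proved something strictly stronger than the lemma claims.
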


\begin{proof}
	From Lemma~\ref{lem:Cheeger}, we get that
	\[
	\LL_{S} \succeq \dfrac{1}{2}\left(\phi_G(S)\right)^2 \PPi_{\perp \vec{1}_S} D_S \PPi_{\perp \vec{1}_S}.
	\]
	Using this, along with the definition $ER_S(u,v) \defeq \chi_{uv}^T \LL_{S}^{\dagger} \chi_{uv}$,
	gives us that
	\begin{align}
	\label{eq:CheegerToDegree}
	ER_S(u,v) \leq \dfrac{1}{2} \left(\phi_G(S)\right)^{-2} \left(\dfrac{1}{\deg_S(u)} + \dfrac{1}{\deg_S(v)}\right).
	\end{align}
	We have for any subset $S' \subseteq S$ that:
	\begin{align*}
	\dfrac{E(S', S \setminus S')}{ \min \left\{ \sum_{u \in S'} \deg_G(u), \sum_{u \in S \setminus S'} \deg_G(u) \right\} } \geq \phi_G(S).
	\end{align*}
	
	Furthermore, for every vertex $v \in S$, we get
	\begin{align*}
	\dfrac{\deg_S(v)}{\deg_G(v)} \geq \phi_G(S),
	\end{align*}
	which when substituted into Equation~\ref{eq:CheegerToDegree} gives
	\begin{align*}
	ER_S(u,v) \leq \dfrac{1}{2} \left(\phi_G(S)\right)^{-3} \left(\dfrac{1}{\deg_G(u)} + \dfrac{1}{\deg_G(v)}\right).
	\end{align*}
	Substituting for $\phi_G(S) \geq 1/2 \alpha^{1/3}$ completes the proof.
%	
%	\saurabh{Check constants}
\end{proof}

Now, we have enough ammunition to prove Lemma~\ref{lem:contraction_main}.

\begin{proof}[Proof of Lemma~\ref{lem:contraction_main}]
	From Lemma~\ref{lem:Expander}, we know that our graph can be partitioned into
	expanders with conductance at least $\Omega(k^{-1/3} \log^{1/3}n)$.
	Now, let $S$ be one such partition and let $e = uv$ be an edge contained in $S$.
	From the definition of the random process in Definition~\ref{def:contractionsampling},
	we know that for an edge $uv$,
	the probability that it is sampled by either $u$ or $v$ is at least
	\[
	k \log^{2}n \left(\dfrac{1}{\deg_G(u)} + \dfrac{1}{\deg_G(v)}\right)
	\geq ER_S(uv) \cdot \Omega(\log n),
	\]
	where the inequality follows from Lemma~\ref{lem:ERlessthanp}.
	Since each such edge $uv$ is chosen with probability greater than $\Omega(\log{n})$ times its
	effective resistance w.r.t. $S$,
	from Lemma~\ref{lem:Sparsifier}, we know that the edges chosen
	within $S$ are connected with high probability.
	
	Thus, we are left with the edges between the partitions, the number of which
	is bounded by $O(m \log^{4/3}{n} \cdot k^{-1/3})$ edges,
%	Finally, repeating this contraction routine $O(1/\delta)$ times
%	contracts all edges with high probability.
\end{proof}

%!TEX root = Main.tex
\section{Independent Sample Extractor From Sketches}%Proof of Lemma~\ref{lem:independence}}
\label{sec:independentextractor}
In this section, we prove Lemma~\ref{lem:independence},
which shows how we extract independent edge samples
from the sketches, which are inherently dependent. 
%To illustrate our idea, we need a few more definition.
%We start with the definition of induced bipartite multigraph.

We start with the definition of an induced bipartite multigraph.
Given a multigraph $G = (V, E)$ of $n$ vertices, we say $B = (R, Y,  E_{B})$ is an \textbf{induced bipartite multigraph} of $G$ if 
$V$ is partitioned into two disjoint vertex sets $R$ and $Y$ and an edge of $G$ belongs to $E_B$ if and only if the edge contains one vertex from $R$ and one vertex from $Y$.

For a fixed multigraph $G$ and an induced bipartite multigraph $B$ of $G$, we conceptually divide the process of generating a sketch with parameter $p$ into two phases: 
\begin{enumerate}
\item[]Phase 1. Independently sample each edge not in the bipartite graph with probability $p$.
\item[]Phase 2.  Independently sample each edge in the bipartite graph with probability $p$.
\end{enumerate}

\begin{lemma}
Given a multigraph $G = (V, E)$ and an induced bipartite multigraph $B = (R, Y, E_B)$ of $G$, 
with  probability at least $1 - \frac{1}{n}$, $O(k\log^2 n)$ independent sketches simulate the following random process:
Every vertex $v \in R$ is associated with $t_v$ independent variables $S_{v, 1}, S_{v, 2}, \dots S_{v, t_v}$ for some integer $t_v \geq k$ satisfying 
		\begin{enumerate}
		\item %For any $i \in t_v$,
		The outcome of each $S_{v, i}$ can be an edge incident to $v$ or $\bot$. 
		\item 
		For every edge $\edge$ incident to vertex $v$, 
		\[\sum_{i = 1}^{t_v} \Pr[S_{v, i} = \edge] \geq \frac{2k}{\deg_G(v)}.\] 
		\end{enumerate}
Furthermore, for every edge sampled by the above random process, 
there exists a sketch and a vertex such that the value of the sketch on the vertex equals the edge ID.
\end{lemma}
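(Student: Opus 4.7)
My plan is to decompose the randomness of each sketch into two independent phases, and then extract from each sketch a ``clean'' contribution whose dependencies are controlled by disjointness. For a sketch $j$ with parameter $p_j$, let Phase~1 independently sample every non-bipartite edge (both endpoints in $R$, or both in $Y$) with probability $p_j$, and let Phase~2 independently sample every bipartite edge in $E_B$ with probability $p_j$. I would take the set $U$ in the paper's simulation definition to be the union of all Phase~1 variables across all sketches, and for each $v \in R$ and sketch $j$ let $T_{(v,j)}$ consist of the Phase~2 variables for the edges of $E_B$ incident to $v$ in sketch $j$. Given $U$, let $R'_j \subseteq R$ be the red vertices at which no incident non-bipartite edge is sampled in Phase~1 of sketch $j$, and define
\[
S_{v,j} \;=\; \begin{cases} e & \text{if } v \in R'_j \text{ and } e \text{ is the unique edge in } T_{(v,j)} \text{ sampled in Phase 2},\\ \bot & \text{otherwise.}\end{cases}
\]
Whenever $S_{v,j} = e$, the edge $e$ is the only incident sampled edge at $v$ across both phases combined, so the XOR sketch of sketch $j$ at $v$ equals exactly $e$; this establishes the ``furthermore'' clause.

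Independence of the collection $\{S_{v,j}\}$ would follow from a disjointness observation: across different sketches the $T_{(v,j)}$'s belong to independent families of Phase~2 variables, and within a fixed sketch each edge in $E_B$ has a unique endpoint in $R$, so the $T_{(v,j)}$'s for distinct red vertices are pairwise disjoint. Consequently, after conditioning on $U$, each $S_{v,j}$ is measurable with respect to its own $T_{(v,j)}$ alone, giving the mutual independence required by the simulation framework.

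The quantitative part is the sum-of-probabilities lower bound, which I read as applying to every edge $e \in E_B$ incident to $v$ (the remaining edges are covered by other bipartitions in the main lemma). A direct computation using the two-phase independence yields unconditionally
\[
\Pr[S_{v,j} = e] \;=\; p_j(1-p_j)^{\deg_G(v)-1},
\]
so picking the scale $p^\ast \in \{1, 1/2, 1/4, \ldots\}$ closest to $1/\deg_G(v)$ makes the per-sketch probability $\Omega(1/\deg_G(v))$ and keeps $\Pr[v \in R'_j]$ at least a constant. I would allocate $\Theta(k\log n)$ independent sketches at each of the $O(\log n)$ scales, for a grand total of $O(k\log^2 n)$ sketches. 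At the scale $p^\ast$ matched to $v$, a Chernoff bound shows that $\Omega(k\log n)$ of these sketches satisfy $v \in R'_j$ except with probability $\exp(-\Omega(k\log n)) \leq n^{-2}$; a union bound over $v \in R$ then delivers the $1 - 1/n$ guarantee, and on this event $\sum_j \Pr[S_{v,j} = e \mid U] \geq \Omega(k\log n/\deg_G(v)) \geq 2k/\deg_G(v)$.

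The main obstacle I would expect is the conditioning bookkeeping. The Phase~1 variables both live inside $U$ and determine which sketches are ``active'' for a given vertex, so I need to simultaneously track the conditional-on-$U$ distribution of each $S_{v,j}$ and the random (over $U$) cardinality of $\{j : v \in R'_j\}$. Arranging the Chernoff bound uniformly across the $n$ vertices---so that a single $1 - 1/n$ bound suffices---while keeping the total sketch count at $O(k\log^2 n)$ and verifying that the resulting disjoint-$T_{(v,j)}$ structure matches the paper's somewhat delicate definition of simulation, is where I expect the argument to need the most care.
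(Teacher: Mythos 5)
Your proposal is correct and takes essentially the same route as the paper: the same two-phase split of each sketch into non-bipartite (Phase 1, collected into $U$) and bipartite (Phase 2) samples, the same disjoint families $T_{v,j}$ keyed by a red vertex and a sketch index, the same matched scale $p^\ast \asymp 1/\deg_G(v)$ to keep both $\Pr[v\in R'_j]$ and the per-sketch hit probability $\Omega(1/\deg_G(v))$, and the same Chernoff-plus-union-bound argument to guarantee $\Omega(k\log n)$ active sketches at that scale for every $v \in R$ simultaneously. The only cosmetic difference is that you define $S_{v,j}$ for every $(v,j)$ (letting it be identically $\bot$ when $v\notin R'_j$), whereas the paper only defines $S_{v,i}$ for active pairs; both packagings satisfy the simulation definition, and your closing observation that $S_{v,j}=e$ forces the sketch value at $v$ to equal $e$ correctly establishes the ``furthermore'' clause.
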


	\begin{proof}
Assume for \[p \in \left\{\frac{1}{2}, \frac{1}{4}, \dots, \frac{1}{2^{\lceil \log_2 n\rceil}+1}\right\},\] 
that there are $t = 8000 k \log n$ sketches corresponding to each $p$.
Let $p_i$ denote the parameter of $i$-th sketch.

Let $m$ denote the number of edges in $G$. We use 
$E_{i, \edge_1}, E_{i, \edge_2}, \dots, E_{i, \edge_m}$ to denote the random variables denoting edges being present in the $i$-th sketch. 
Hence, the random process of generating all the sketches corresponds to sampling random variables $\{E_{i, \edge_j}\}_{i \in [t], j \in [m]}$.

Let $U \subseteq \{E_{i, \edge_j}\}_{i \in [t], j \in [m]}$ be the set of random variables in Phase 1 of all the sketches. 
We define another random process based on the outcome of $U$ as follows: 
For $i$-th sketch and any vertex $v \in R$, if no edge incident to vertex $v$ was sampled in Phase 1 of the $i$-th sketch, then 
we define a new independent random variable $S_{v, i}$ such that 
\[\Pr[S_{v, i} = \edge] = p_i (1 - p_i)^{\deg_B(v) - 1}\] if $\edge$ is in graph $B$ and incident to vertex $v$,
and \[\Pr[S_{v, i} = \bot] = 1 - p_i(1 - p_i)^{\deg_B(v) - 1} \cdot \deg_B(v).\]
If at least one edge  incident to vertex $v$ was sampled in Phase 1 of the $i$-th sketch, then we do not define random variable $S_{v, i}$.

Now, for an arbitrary $v \in R$, let 
\[p_v \defeq \frac{1}{2^{\lceil \log_2 \deg_G(v) \rceil + 1}}.\]
For a single sketch with parameter $p_v$, 
the probability that no edge incident to $v$ was sampled in Phase 1 is
\[(1 - p_v)^{\deg_G(v) - \deg_B(v)} \geq (1 - p_v)^{\deg_G(v) } > 0.1.\]
Applying Chernoff bound, with probability $1 - \frac{1}{n^3}$, at least $80k$ random variables $S_{v, i}$ are defined such that
with probability $p_v (1 - p_v)^{\deg_B(v) - 1}$,  $S_{v, j}$ equals exactly equals edge $\edge$ for every $\edge$ in $B$ incident to $v$. 
Hence, for any edge $\edge$ incident to $v$ in graph $B$, we have 
\[\sum_{S_{v, i}: S_{v, i} \text{ is defined}} \Pr[S_{v, i}  = \edge]\geq 80 k \cdot p_v (1 - p_v)^{\deg_B(v) - 1} > 80 k \cdot p_v (1 - p_v)^{\deg_G(v)} \geq \frac{2k}{\deg_G(v)}.\]
 By union bound, with probability $1 - \frac{1}{n}$, all the defined random variables $S_{v, j}$'s form the required random process.

In the rest of this proof, we show that Phase 2 of each sketch simulates the generation of the defined random varibles $\{S_{v, i}\}_{v \in R, i \in [t]}$.
For every defined random variable $S_{v, i}$, 
we let \[T_{v, i} = \{E_{i, \edge} : \text{$\edge \in B$ and is incident to vertex $v$}\}\] denote the random variable for the $i$-th sketch which corresponds to edges incident to vertex $v$ in graph $B$.
It is easy to verify that $T_{v, i} \cap U = \emptyset$. 
Furthermore, all the $T_{v, i}$'s are mutually disjoint. 
We define a function 
\[
f_{v, i}(T_{v, i}) = \begin{cases}
	 \edge  & \text{if } \sum_{E_{i, \edge'} \in T_{v, i}} E_{i, \edge'} = 1 \text{ and } E_{i, \edge} = 1\\
	\bot & \text{if } \sum_{E_{i, \edge'} \in T_{v, i}} E_{i, \edge'} \neq  1.
\end{cases}
\]
Since all the random variables in $T_{v, i}$ are independent, we have 
\[\Pr[f_{v, i}(T_{v, i}) = \edge] = p_i (1 - p_i)^{\deg_B(v) - 1}\]
for any edge $\edge$ incident to $v$ in $B$, 
and \[\Pr[f_{v, i}(T_{v, i}) = \bot] = 1 -  p_i (1 - p_i)^{\deg_B(v) - 1} \cdot \deg_B(v).\]
Then the lemma follows.
\end{proof}

Using the above lemma, we can now prove Lemma~\ref{lem:independence}.

\begin{proof}[Proof of Lemma~\ref{lem:independence}]
We repeat the following process $10 \log n$ times:
\begin{enumerate}
\item Every vertex is independently assigned the color red with probability 1/2, or is assigned yellow otherwise.
\item  Let $R$ be the vertices with red color and $Y$ be all the vertices with yellow color. Construct the induced bipartite multigraph $B = (R, Y, E_B)$, where $E_B$ contains all the edges of $G$ with one red vertex and one yellow vertex. 
\end{enumerate}
By Chernoff bound and union bound,  with probability at least $1 - \frac{1}{n}$, for every edge $\edge$ and a vertex $v$ contained by the edge $\edge$, there is a sampled bipartite multigraph $B= (R, Y, E_B)$ such that 
$v \in R$ and $\edge \in E_B$.

Assuming every edge belongs to at least one sampled bipartite graph. 
For each sampled bipartite multigraph, we assign $O(k \log^2 n)$ sketches.
The lemma follows by  
applying Lemma~\ref{lem:independence} for every bipartite multigraph and its assigned sketches, 
\end{proof}

\section{Connectivity Algorithms and Correctness}\label{sec:1conn_algorithms}

%In this section, we prove Lemma~\ref{lem:contraction_main}.
%We maintain the invariant that $F$ is a maximal spanning tree, and the key of each vertex is $O(k \log^3 n)$ sketches on the vertex.

We give the algorithms for batch edge queries, batch edge insertions, and batch edge deletions and prove the correctness in Section~\ref{sec:alg_query}, Section~\ref{sec:alg_insertion} and Section~\ref{sec:alg_deletion} respectively.
Putting together Lemmas~\ref{lem:Query}, \ref{lem:Delete} and \ref{lem:Insert}
then gives the overall result as stated in Theorem~\ref{thm:Main}.

Throughout this section, we will use the batch-dynamic tree data structure discussed in Section~\ref{sec:MPCDynamicTree} to maintain 
\begin{enumerate}
\item a maximal spanning forest $F$ of the graph,
\item a key $\vec \xx_v$ for every vertex $v$, where $\vec \xx_v$ is a vector of $\Otil(n^\delta)$ sketch values on vertex $v$,
\item an edge list data structure which can be used to check if an edge is in the graph given an edge ID.
\end{enumerate}

\subsection{Algorithm for Batch Edge Queries}\label{sec:alg_query}

%Once we have access to these data structures,
%we can build efficient algorithms for the
%operations from Theorem~\ref{thm:Main}.
Since $F$ is a maximal spanning tree,
the query operations are directly provided by calling $\textsc{ID}$
on all involved vertices.
Pseudocode of this routine is in Algorithm~\ref{alg:queryAlgo}.

\begin{algorithm}
	\begin{algbox}
		$\textsc{Query}((u_1,v_1), (u_2,v_2), \ldots, (u_k,v_k))$

		\textbf{Input}: Pairs of vertices $(u_1,v_1), (u_2,v_2), \ldots, (u_k,v_k)$

		\textbf{Output}: For each $1 \leq i \leq k$, \textbf{yes} if $u_i$ and $v_i$
		are connected in $G$, and \textbf{no} otherwise.
		\begin{enumerate}
			\item Call $\textsc{ID}
			(u_1, v_1, u_2, v_2, \ldots, u_k, v_k)$.
			\item For each $i$, output \textbf{yes} if $u_i$ and $v_i$ have the same component ID,	and \textbf{no} otherwise.
		\end{enumerate}
	\end{algbox}
	\caption{Querying the connectivity between a batch of vertex pairs}
	\label{alg:queryAlgo}
\end{algorithm}

\begin{lemma}
	\label{lem:Query}
	The algorithm \textsc{Query} (Algorithm~\ref{alg:queryAlgo}) correctly answers connectivity queries
	and takes $O(1/\alpha)$ rounds, each with total communication
	at most $\Otil(k)$.
\end{lemma}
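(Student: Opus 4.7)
The plan is to split the proof into two parts: correctness, and the round/communication bounds, both of which reduce to properties we have already established.

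For correctness, the key invariant is that the forest $F$ maintained by our data structure is a \emph{maximal} spanning forest of $G$ at all times (this invariant is preserved by the insertion and deletion subroutines analyzed in Sections~\ref{sec:alg_insertion} and~\ref{sec:alg_deletion}). Given this invariant, two vertices $u_i$ and $v_i$ lie in the same connected component of $G$ if and only if they lie in the same tree of $F$. Since the component ID returned by $\textsc{ID}$ uniquely identifies a tree of $F$, comparing $\textsc{ID}(u_i)$ with $\textsc{ID}(v_i)$ correctly decides connectivity of the pair $(u_i, v_i)$ in $G$.

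For the resource bounds, I invoke Theorem~\ref{thm:Main_data_structure} directly on the single call to $\textsc{ID}$ with $2k$ input vertices. That theorem guarantees that a batch $\textsc{ID}$ operation of size $O(k)$ executes in $O(1/\alpha)$ rounds and uses only $\Otil(k)$ communication per round (note that \textsc{ID} is cheaper than \textsc{UpdateKey}/\textsc{GetKey}/\textsc{ComponentSum} since it only returns a single representative value rather than a key of length $\ell_{key}$). The postprocessing step — comparing the two returned IDs for each pair — requires only that the two IDs for pair $i$ be routed to a common machine, which adds $O(k)$ communication and $O(1)$ additional rounds and therefore does not change the asymptotic bounds.

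The proof is essentially a direct reduction to Theorem~\ref{thm:Main_data_structure}, so I do not expect any genuine obstacle. The only point that must be handled carefully is ensuring that the comparison step in part (2) of Algorithm~\ref{alg:queryAlgo} fits within the stated communication budget: since the IDs are of size $O(\log n)$ (absorbed into the $\Otil(\cdot)$ notation) and there are $O(k)$ pairs, this is immediate, as each pair's two IDs can be collocated via a single shuffle step that costs $\Otil(k)$ communication in $O(1)$ rounds.
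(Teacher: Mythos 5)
Your proof is correct and follows essentially the same argument as the paper: correctness from the maximal spanning forest invariant, and the round/communication bounds from a direct application of Theorem~\ref{thm:Main_data_structure} to the $\textsc{ID}$ call. The paper's proof is a one-liner invoking exactly these two facts; your version just spells out the routine bookkeeping (the $2k$ vertices, the $O(\log n)$-bit IDs, the collocation shuffle) that the paper leaves implicit.
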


\begin{proof}
	The correctness and performance bounds follow from the
	fact that $F$ is a maximal spanning forest of $F$ and from
	Theorem~\ref{thm:Main_data_structure}.
\end{proof}

\subsection{Algorithm for Batch Edge Insertions}\label{sec:alg_insertion}

Given  a batch of $k$ edge insertions,
we want to identify a subset of edges from the batch that are going to add to $F$ 
to maintain the invariant that $F$ is a maximal spanning forest.
To do this, we use $\textsc{ID}$ operation to find IDs of all the involved vertices in the edge insertion batch.
Then we construct a graph $G_{local}$ which initially contains all the edges in the edge insertion batch, and then contracts vertices from same connected component of $F$ to a single vertex.
Since this graph contains $k$ edges, we can put this graph into a single machine,
and compute a spanning forest $F_{local}$ of $G_{local}$.
We maintain the maximal spanning forest $F$ by adding edges in $F_{local}$ to $F$. %are the , and then identify all the edges belongs to $F$ after merging the batch of edge insertion.
We also maintain the edge list data structure by adding inserted edges to the list,
and maintain the sketches for the involved vertices by  the \textsc{UpdateKey} operation.
Pseudocode of the batched insertion routine is in Algorithm~\ref{alg:insertAlgo}.

\begin{algorithm}[!htb]
	\begin{algbox}
		$\textsc{Insert}(u_1v_1, u_2v_2, \ldots, u_kv_k)$

		\textbf{Input}: new edges $\edge_1 = u_1v_1, \edge_2 = u_2v_2, \ldots, \edge_k = u_kv_k$.

		\begin{enumerate}
			\item Add all $k$ edges to the edge list data structure.
			\item Run  $\textsc{GetKey}(u_1, v_1, \ldots, u_k, v_k)$.
			\item For every sketch, sample every inserted edge with probability equal to the parameter of the sketch, and compute the updated key value for vertices $\vec \xx_{u_1}', \vec \xx_{v_1}', \ldots, \vec \xx_{u_k}', \vec \xx_{v_k}'$.
			\item Run  $\textsc{UpdateKey}((u_1, \vec \xx_{u_1}'), (v_1, \vec \xx_{v_1}'), \ldots, (u_k, \vec \xx_{u_k}'), (v_k, \vec \xx_{v_k}'))$.
			\item Run $\textsc{ID}(\{u_1, v_1, u_2, v_2 \ldots u_k, v_k\})$.
			\item Using these IDs as vertex labels, construct a graph $G_{local}$ among
			the inserted edges, on a local machine.
			\item Find a maximal spanning forest $F_{local}$ of $G_{local}$
			locally on this machine.
			\item Run $\textsc{Link}(E(F_{local}))$.
		\end{enumerate}
	\end{algbox}
	\caption{Pseudocode for maintaining the data structure upon
		a batch of insertions.}
	\label{alg:insertAlgo}
\end{algorithm}

\begin{lemma}
	\label{lem:Insert}
	The algorithm \textsc{Insert} in Algorithm~\ref{alg:insertAlgo} correctly maintains a maximal spanning forest of $G$
	and takes $O(1/\alpha)$ rounds, each with total communication
	at most $\Otil(kn^{\alpha + \delta})$.
\end{lemma}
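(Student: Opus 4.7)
The plan is to verify three invariants after a batch of insertions: (i) the edge-list data structure contains exactly the edges of the updated graph, (ii) for every vertex $v$ and every sampling rate, the stored sketch $\vec\xx_v$ equals the XOR of the (sampled) incident edge IDs in the updated graph, and (iii) $F$ is a maximal spanning forest of the updated graph. Invariant (i) is immediate from Step~1 of the algorithm. For invariant (ii), note that the sketch is additive under edge insertion in the XOR semi-group: if an edge $\edge=u_iv_i$ is sampled into a particular copy of the sketch, its ID must be XORed into both $\vec\xx_{u_i}$ and $\vec\xx_{v_i}$ and into no other coordinate. Step~3 computes exactly these per-endpoint updates locally from the old values fetched in Step~2, and Step~4 writes them back using \textsc{UpdateKey}; the sampling coin for each insertion is flipped independently so each sketch remains distributed as in Definition~\ref{def:sketch}.

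For invariant (iii), after Step~5 each endpoint of an inserted edge is labelled by its current component ID in $F$. The multigraph $G_{\mathrm{local}}$ constructed in Step~6 is thus precisely the graph obtained from the batch of inserted edges by contracting every tree of $F$ into a single super-vertex. By the definition of maximal spanning forest, an inserted edge $u_iv_i$ joins two distinct components of the updated graph iff its two super-vertices in $G_{\mathrm{local}}$ are distinct and lie in different components of $G_{\mathrm{local}}$; conversely any self-loop in $G_{\mathrm{local}}$ would close a cycle in $F$. A maximal spanning forest $F_{\mathrm{local}}$ of $G_{\mathrm{local}}$ therefore selects a subset of the inserted edges whose addition to $F$ joins exactly those pairs of trees that the insertions are supposed to merge, while avoiding cycles. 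After Step~8, $F\cup E(F_{\mathrm{local}})$ is therefore a maximal spanning forest of the updated graph, which establishes (iii).

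For the resource bounds, Steps~2--5 and Step~8 are each a single call to a batch-dynamic-tree operation from Theorem~\ref{thm:Main_data_structure} with batch size $O(k)$ and key length $\ell_{\mathrm{key}} = \Otil(n^\delta)$; each such call uses $O(1/\alpha)$ rounds and $\Otil(k\cdot\ell_{\mathrm{key}}\cdot n^\alpha)=\Otil(kn^{\alpha+\delta})$ communication, except for \textsc{ID} whose communication is only $\Otil(k)$. Steps~1, 3, 6, and 7 are entirely local to one machine: the hypothesis $k\cdot n^{\alpha+\delta}\cdot\mathrm{polylog}(n)\le s$ guarantees that the $k$ inserted edges together with their endpoint IDs and the $\Otil(n^\delta)$ sampling coins per edge all fit on a single machine, so computing $F_{\mathrm{local}}$ on the contracted graph of at most $2k$ super-vertices and $k$ edges takes one round and communication $\Otil(kn^{\alpha+\delta})$ to gather the data. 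Summing the constant number of phases gives the claimed $O(1/\alpha)$ rounds and $\Otil(kn^{\alpha+\delta})$ communication per round.

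The main place that needs care, and which I would foreground in the write-up, is the contraction argument for invariant (iii): one must check that Steps~5--7 truly simulate ``contract all current trees of $F$, then take a maximal spanning forest of the batch on the contracted graph.'' The only way this can fail is if two inserted edges receive inconsistent component labels, which is ruled out because Step~5 queries \textsc{ID} on all $2k$ endpoints within a single batch and the data-structure guarantees a consistent snapshot of component IDs across a batch. Given that, everything else follows from the semi-group additivity of the sketches and the per-operation costs in Theorem~\ref{thm:Main_data_structure}.
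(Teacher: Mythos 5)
Your proposal is correct and follows essentially the same approach as the paper: you argue correctness by observing that Steps 5--7 contract the current trees and take a spanning forest of the contracted batch (so no cycle is created and every merge-inducing edge is accounted for), and you derive the round/communication bounds by charging each batch-dynamic-tree call to Theorem~\ref{thm:Main_data_structure}. The paper's own proof is just a terser statement of the same argument; your additional bookkeeping of the edge-list and sketch invariants is a faithful expansion (note only the small slip that Step~3 is local computation, not a data-structure call, which you in any case correct a few lines later).
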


\begin{proof}
	To show the correctness, notice that since we add only a forest on
	the components as a whole,
	there is never an edge added between two already connected components.
	Additionally, since the forest is spanning, we do not throw away any necessary edges.

	From Theorem~\ref{thm:Main_data_structure}, using \textsc{GetKey}, \textsc{UpdateKey}, \textsc{ID} and \textsc{Link}
	falls under the claimed bound for rounds and communication,
	whereas the rest of the steps are performed only locally.
\end{proof}

\subsection{Algorithm for Batch Edge Deletions}\label{sec:alg_deletion}

%With this tool, we can now state our procedure for repairing
%the spanning forest after a batch of $k$ deletions.
Pseudocode of the batched deletion routine is in Algorithm ~\ref{alg:deleteAlgo}.
%The increase in the number of copies of sketches results in an extra
%overhead of $n^{\epsilon}$ in total memory, as well as
%communication per round.
%However, note that such overheads are already present in the
%underlying tree data structures from Theorem~\ref{thm:Main_data_structure}.

\begin{algorithm}[!htb]
	\begin{algbox}
		$\textsc{Delete}(e_1, e_2, \ldots, e_k)$

		\textbf{Input}: edges $\edge_1 = u_1v_1, \edge_2 = u_2v_2, \ldots, \edge_k = u_kv_k$ that are currently
		present in the graph.

		\begin{enumerate}
			\item Update the global edge index structure.
			\label{algline:removefromE}
			\item Run  $\textsc{GetKey}(u_1, v_1, \ldots, u_k, v_k)$.
			\item For every sketch, compute the updated key value  $\vec \xx_{u_1}', \vec \xx_{v_1}', \ldots, \vec \xx_{u_k}', \vec \xx_{v_k}'$ for vertices $u_1, v_1, \ldots, u_k, v_k$ by removing the IDs of edges $\edge_1, \ldots, \edge_k$.
			\item Run  $\textsc{UpdateKey}((u_1, \vec \xx_{u_1}'), (v_1, \vec \xx_{v_1}'), \ldots, (u_k, \vec \xx_{u_k}'), (v_k, \vec \xx_{v_k}'))$.
			\item Run $\textsc{Cut}$ for all edges that are in the spanning forest.
			Let $u_1 \ldots u_t$ be representative vertices from the resulting trees
			\item \label{algline:queryendpoints} Run $\textsc{ComponentSum}(\{ u_1 \ldots u_t\})$ to extract
			the total XOR values from each of the trees.
			
			\item Repeat $O(1/\delta)$ rounds:
			\label{algline:mainloop}
			\begin{enumerate}
				\item From the XOR values of the current components,
				deduce a list of potential replacement edges, $E_{R}$
				\item Identify the subset of edges with endpoints between current
				components given by $\textsc{ID}(u_1) \ldots \textsc{ID}(u_t)$ using a call to $\textsc{Query}$.
				\label{algline:otherendpoint}
				\item Find $T_R$, a maximal spanning forest of the valid replacement edges,
				via local computation.
				\item $\textsc{Link}(E(T_R))$.
				\item Update $u_1 \ldots u_t$ and their XOR values,
				either using another batch of queries, or by a local computation.
				\label{algline:insertreplacement}
			\end{enumerate}
		\end{enumerate}
	\end{algbox}
	\caption{Pseudocode for maintaining the data structure upon
		a batch of deletions.}
	\label{alg:deleteAlgo}
\end{algorithm}

%\richard{I stopped here: we might want to update signature and etc
%based on whether we stick with this presentation / set of function
%signatures}
%
\begin{lemma}
	\label{lem:Delete}
	The algorithm \textsc{Delete} (Algorithm~\ref{alg:deleteAlgo}) correctly maintains a maximal spanning forest of $G$
	and takes $O(1/\delta\alpha)$ rounds, each with total communication	at most $\Otil(kn^{\alpha + \delta})$.
\end{lemma}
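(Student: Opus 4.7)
The plan is to separate the proof of Lemma~\ref{lem:Delete} into a correctness argument and a complexity analysis. For correctness, I need to show that after the algorithm completes, $F$ is again a maximal spanning forest of the updated graph $G$. The initial steps (updating the edge index and sketches, cutting removed tree edges) clearly bring $F$ into a state where it is a spanning forest, but possibly non-maximal: the tree containing a deleted edge $\edge_i$ may now be split into two components that still have a non-tree edge of $G$ between them. The job of the main loop is precisely to find and re-link these replacement edges. I will view the current set of components as vertices of a multigraph $H$, with the remaining non-tree edges of $G$ across components as its edges. A replacement edge exists iff $H$ has an edge; the spanning forest is maximal iff $H$ has no edges.

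The key structural claim is that the main loop, after $O(1/\delta)$ iterations, contracts $H$ to its connected components with high probability. To establish this, I will invoke Lemma~\ref{lem:KKM} to argue that the vertex-wise XORs obtained via $\textsc{ComponentSum}$ yield, per component, an edge ID with the correct marginal probability; then apply Lemma~\ref{lem:independence} to assert that the $\Otil(n^\delta)$ independent sketches produce a subset of edge samples obeying a \ContractionSampling{} distribution on $H$; and finally apply Lemma~\ref{lem:contraction_main} with parameter $k = n^\delta$ to conclude that the number of remaining edges in $H$ shrinks by a polynomial factor per iteration, so all edges of $H$ are exhausted after $O(1/\delta)$ iterations with high probability. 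The validation step using $\textsc{Query}$ in line~\ref{algline:otherendpoint} removes false-positives generated by sketches whose XOR happens not to correspond to a real edge, and taking a maximal spanning forest $T_R$ of the valid edges before calling $\textsc{Link}$ ensures that no cycles are introduced in $F$. After linking, the key fact that the XOR of the sketches over a newly merged component cancels out the newly added tree edges together with any edges internal to the component means that updated \textsc{ComponentSum} values in the next iteration correctly represent a sketch on the contracted multigraph.

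For the complexity bound, each iteration of the main loop performs a constant number of calls to the batch-dynamic tree primitives ($\textsc{ID}$, $\textsc{Link}$, $\textsc{ComponentSum}$, $\textsc{UpdateKey}$) on batches of size $O(k)$, together with purely local work on a single machine. Since each vertex carries a key $\vec \xx_v$ of size $\ell_{key} = \Otil(n^\delta)$ packaging the $\Otil(n^\delta)$ independent sketches, and the assumption $k \cdot n^{\alpha+\delta} \cdot \mathrm{polylog}(n) \leq s$ is in force, Theorem~\ref{thm:Main_data_structure} guarantees that each such batch operation uses $O(1/\alpha)$ rounds and $\Otil(k \cdot n^{\alpha+\delta})$ communication per round. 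Multiplying by the $O(1/\delta)$ iterations of the loop yields the claimed $O(1/(\delta\alpha))$ rounds at $\Otil(k n^{\alpha+\delta})$ communication.

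The hardest part will be tying the abstract \ContractionSampling{} guarantee of Lemma~\ref{lem:contraction_main} to what the algorithm actually does on the dynamically evolving multigraph $H$: each iteration changes the vertex set of $H$ (components merge), so one must argue that the fresh $\textsc{ComponentSum}$ values in the next iteration genuinely behave as independent sketches on the \emph{new} $H$. This reduces to the observation that XOR sketches compose under union of components, and that independence across sketches is preserved since each sketch was built from independent per-edge coin flips at insertion time. A secondary subtlety is that the \ContractionSampling{} bound in Lemma~\ref{lem:contraction_main} is with respect to degrees in the current component multigraph rather than in $G$; I will verify that the probability lower bound supplied by Lemma~\ref{lem:independence} is strong enough in this regime by arguing the extractor's guarantee is degree-dependent in exactly the way that \ContractionSampling{} requires.
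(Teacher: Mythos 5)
Your proof follows essentially the same route as the paper's: correctness is established by invoking Lemma~\ref{lem:independence} to show the $\Otil(n^\delta)$ sketches simulate a \ContractionSampling{} process on the component multigraph, then applying Lemma~\ref{lem:contraction_main} iteratively to conclude the loop terminates after $O(1/\delta)$ iterations with high probability; and the round/communication bounds are read off from Theorem~\ref{thm:Main_data_structure} per iteration. Your added discussion of the evolving multigraph $H$, XOR composition under component merging, and the role of the \textsc{Query} validation step makes explicit some points the paper leaves implicit, but it is the same argument at heart.
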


\begin{proof}

	Note that $F$ remains a maximal spanning forest if the deleted edges are from
	outside of $F$.
	So, we only need to deal with the complementary case.
	Consider some tree $T \in F$, from which we deleted $\hat{k} - 1$ edges.
	$T$ is now separated into $\hat{k}$ trees, $T_1, T_2, \ldots , T_{\hat{k}}$.
	We need to show that the algorithm eventually contracts all $T_i$
	using the edges stored in the sketches.
	For this, note that the guarantees of Lemma~\ref{lem:independence}
	imply that from the $\Otil(n^{\delta})$ copies of sketches,
	we can sample edges leaving a group of $T_{i}$s in
	ways that meet the requirements of Lemma~\ref{lem:contraction_main}.
	These trees will collapse into singleton
	vertices in $O(1/\delta)$ rounds with high probability by applying Lemma~\ref{lem:contraction_main} iteratively.
	Thus the result is correct.

	Steps~\ref{algline:removefromE}-\ref{algline:queryendpoints} only require $O(1/\alpha)$
	rounds of communication, from Theorem~\ref{thm:Main_data_structure}.
	Step~\ref{algline:mainloop} loops $O(1/\delta)$ times,
	and its bottleneck is step~\ref{algline:otherendpoint},
	the verification of the locations of the endpoints in the trees.
	Once again by the guarantees of Theorem~\ref{thm:Main_data_structure},
	this takes $O(1/\alpha)$ rounds for each iteration,
	and at most $\Otil(kn^{\delta + \alpha})$ communication per round.

	Lastly, we call \textsc{Link} on the edges in $E_R$
	across various iterations.
	Since at most $k$ edges are deleted from $F$,
	there can only be at most $k$ replacement edges,
	so the total communication caused by these is $\Otil(kn^{\alpha + \delta})$.
\end{proof}

%
%Putting together Lemmas~\ref{lem:Query}, \ref{lem:Delete} and \ref{lem:Insert}
%then gives the overall result as stated in Theorem~\ref{thm:Main}.
%\input{contraction}
%\input{static}
%\input{RootedTree}
%\input{2Conn}
%\input{3Conn}
%!TEX root = Main.tex

\section{Adaptive Connectivity} \label{sec:lowerbound}

The adaptive connectivity problem is a ``semi-online" version of the
dynamic connectivity, where we are given a sequence of query/update
pairs, and each update (an edge insertion or deletion) is only applied
if its corresponding query evaluates to true on the graph resulting
from all operations before this pair. We say that the problem is
semi-online because although the entire input is known in advance, the
algorithm must answer each query taking into account all operations
that occur before it.

In this section, we show that this natural problem is \Pcomplete{}
under \NCone{} reductions. In the context of \MPC{} algorithms, our
result implies that if there exists an $O(1)$ round low-memory \MPC{}
algorithm solving the problem, then every problem in \classP{} can be
solved in $O(1)$ rounds in the low-memory \MPC{} model.

On the positive side, in Subsection~\ref{subsec:adaptiveupperbound} we
give an upper-bound based on the batch-dynamic connectivity algorithm
from Section~\ref{sec:1conn}, which shows that the adaptive
connectivity problem can be solved in $O(1)$ rounds for batches with
size proportional to the space per machine.

We first give a formal definition of the adaptive connectivity
problem.

\begin{definition}[Adaptive Connectivity]
The input to the \emph{Adaptive Connectivity} problem is an input
graph $G$ on $n$ vertices, and a sequence of query and update pairs:
$[(q_1, u_1), \ldots, (q_m, u_m)]$. Each query, $q_i$, is of the form
$\mathsf{Connected}(u,v)$ or $\lnot \mathsf{Connected}(u,v)$, and each
update, $u_i$, is either an edge insertion
($\mathsf{Insert}(e=(u,v))$) or an edge deletion
($\mathsf{Delete}(e=(u,v))$). The problem is to run each $q_i, i \in
[1, m]$ on the graph $G_i$, and apply $u_i$ to $G_i$ to produce
$G_{i+1}$ if and only if $q_i = \mathsf{true}$. The output of the
problem is $q_m$.
\end{definition}

\subsection{A Lower Bound for Adaptive Connectivity}\label{subsec:adaptivelowerbound}

We now prove our lower-bound, showing that the adaptive connectivity
problem is \Pcomplete{}. The idea is that we can use the adaptivity in
the problem to encode a circuit evaluation problem, which are well
known to be hard for \classP{}. Our reduction will be from the
Circuit Value Problem, defined below:

\begin{definition}[Circuit Value Problem]
  The input to the \emph{Circuit Value Problem} is an encoding of a
  circuit $C$ consisting of binary-fanin $\wedge$ ($\mathsf{and}$) and
  $\vee$ ($\mathsf{or}$) gates, and unary-fanin $\lnot$ ($\mathsf{not}$) gates,
  defined over n boolean inputs $x_1, \ldots, x_n$ with truth
  assignments.  Additionally there is a single specified output gate,
  $y$.  The problem is to evaluate $C$ and emit the value of the
  output gate, $y$.
\end{definition}

Our reduction makes use of a topological ordering of the input
circuit. A topological ordering of a DAG (e.g., circuit) is a
numbering $\rho$ of its vertices so that for every directed edge
$(u,v)$, $\rho(u) < \rho(v)$. Although we can topologically order a
DAG in $\mathsf{NC}^{2}$, there is no known \NCone{} algorithm for the
problem, which would mean that our reduction would use a (stronger)
$\mathsf{NC}^{2}$ reduction. To bypass this issue, we use the fact
that the Topologically-Ordered Circuit Value Problem is still
\Pcomplete{}~\cite{Greenlaw1995}. Therefore, in what follows we assume
that the circuit value problem instance provided to the reduction is
topologically ordered.

\AdaptiveLowerBound*
\begin{proof}
The adaptive connectivity problem is clearly contained in $\mathsf{P}$
since a trivial $O(\mathsf{poly}(m))$ work algorithm can first run a
connectivity query using BFS or DFS on $G_i$ to check whether the
vertices are connected or not, and then apply the update $u_i$
depending on the result of the query.

For hardness we give a reduction from the Topologically-Ordered
Circuit Value Problem. We assume the circuit $C$, is equipped with the
ability to query for the $i$-th gate in the specified topological
order in $O(1)$ time. Let $n$ be the number of variables in the
circuit, and $k$ be the number of gates.

The reduction builds the initial graph $G$ on $n+k+1$ vertices, where
there are $n$ vertices corresponding to the variables, $k$ vertices
corresponding to the gates, where gate $g_i$ corresponds to a vertex
$v_i$,  and a single distinguished root vertex, $r$. In the initial
graph, each variable that is set to $\mathsf{true}$ is connected to
$r$.

The reduction constructs a query/update sequence inductively as
follows. Consider the $i$-th gate in the topological ordering of $C$.
\begin{itemize}
  \item If the gate is of the form $g_i = g_a \wedge g_b$, we append
the following query/update pair to the sequence:
\begin{equation*}
  (\mathsf{Connected}(v_a, v_b), \mathsf{Insert}(r, v_i))
\end{equation*}
That is, if the vertex corresponding to gate $g_a$ is connected to the
vertex corresponding to gate $g_b$ in $G$, then add an edge between
the root $r$ and the vertex corresponding to the $i$-th gate.

\item Similarly, if the gate is $g_i = g_a \vee g_b$, we append the
following query/update pairs to the sequence:
\begin{align*}
  (\mathsf{Connected}(r, v_a), \mathsf{Insert}(r, v_i)) \\
  (\mathsf{Connected}(r, v_b), \mathsf{Insert}(r, v_i))
\end{align*}

\item Finally, if the gate is of the form $g_i = \lnot g_a$, we
append the following query/update pair to the sequence:
\begin{equation*}
  (\lnot \mathsf{Connected}(r, v_a), \mathsf{Insert}(r, v_i))
\end{equation*}
\end{itemize}

In this way a simple proof by induction shows that after executing all
query/update pairs in the sequence, the connected component in $G$
containing the root $r$ contains all vertices (gates) that evaluate to
$\mathsf{true}$. By making the final query of the form
\begin{align*}
  (\mathsf{Connected}(r, y), \_) \\
\end{align*}
where $y$ is the desired output gate, the output of the adaptive
connectivity instance returns the answer to the to input circuit.

It is easy to see that we can construct the query/update sequence in
\NCone{} as we can access the $i$-th gate independently in parallel,
and each gate can be made to emit exactly two update/query pairs (for
$\wedge$ and $\lnot$ gates we can simply insert a second noop
query/update pair).
\end{proof}

We have the following corollary in the \MPC{} setting.

\MPCLowerBound*
\begin{proof}
  The proof follows by observing that each of the \NCone{} reductions
  starting with the reduction from an arbitrary polynomial-time Turing
  machine, to the Topologically-Ordered Circuit Value Problem can be
  carried out in $O(1)$ rounds of \MPC{}. Therefore, by applying these
  reductions, we can transform any problem in \classP{} to an adaptive
  connectivity instance in $O(1)$ rounds of \MPC{}.
\end{proof}

\begin{remark}
  We note that there are no known polynomial-time algorithms for the
  (Topologically-Ordered) Circuit Value Problem with depth
  $O(n^{1-\epsilon})$, i.e., achieving polynomial speedup, and that
  finding such an algorithm has been a longstanding open question in
  parallel complexity theory~\cite{VitterS86, Condon94, Reinhardt97}.
  A parallel algorithm for adaptive connectivity in the centralized
  setting achieving even slightly sub-linear depth, e.g.,
  $O(n^{\epsilon - c})$ depth to process adaptive batches of size
  $O(n^{\epsilon})$ for any constants $\epsilon, c > 0$ would imply by
  our reduction above an algorithm for the (Topologically-Ordered)
  Circuit Value Problem with depth $O(n^{1-c})$, and therefore give an
  upper-bound with polynomial speedup.
\end{remark}

\myparagraph{Hardness for Other Adaptive Problems}
Note that the reduction given above for adaptive connectivity
immediately extends to problems related to connectivity, such as
directed reachability, and shortest-path problems. For adaptive
directed connectivity, when we add an $(x,y)$ edge in the undirected
case, we repeat the query twice and add both the $x\rightarrow y$ and
$y\rightarrow x$ edges. For adaptive unweighted shortest paths, if the
queries are of the form $\mathsf{DistanceLessThan}(u, v, d)$ then we
reduce these queries to connectivity/reachability queries by setting
$d$ to an appropriately large value (in the reduction above, setting
$d$ to $2$ suffices).

\subsection{An Upper Bound for Adaptive Connectivity}\label{subsec:adaptiveupperbound}

We now show that the \emph{static} batch-parallel 1-Edge-Connectivity
algorithm given in Theorem~\ref{thm:Main} can be used to solve the
adaptive connectivity problem. The bounds on the largest batch sizes
that the algorithm handle are identical to those in
Theorem~\ref{thm:Main}.

Given an adaptive batch of size $k$, the idea is to first take all
deletion updates in the batch, and ``apply" them on $G$ using a
modified version of Theorem~\ref{thm:Main}. Instead of permanently
inserting the newly discovered replacement edges into $G$, we
temporarily insert them to find all replacement edges that exist if
all deletions in the adaptive batch actually occur. This can be done
by first deleting the edges in the adaptive batch using
Theorem~\ref{thm:Main}, finding all replacement edges, and then
undoing these operations to restore $G$. The algorithm them collects
the adaptive batch, and all replacement edges (which have size at most
equal to the size of the adaptive batch) on a single machine, and
simulates the sequential adaptive algorithm on the contracted graph
corresponding to vertices active in the batch in 1 round. The
insertions and deletions that ensue from processing the adaptive batch
can be applied in the same bounds as Theorem~\ref{thm:Main}.
Therefore, we have an algorithm for adaptive connectivity with the
following bounds:

\AdaptiveUpperBound*

%\begin{remark}\label{rmk:adaptiveupperbound}
%In the \MPC{} model with memory per machine $s = \Otil(n^\epsilon)$ for some constant $\epsilon$,
%we can maintain a dynamic undirected graph with
%at most $m$ edges
%which can handle the following operation with high probability: (for any constant $\delta < \epsilon$, and $k \leq n^{\epsilon - \delta}$)
%\begin{enumerate}
%  \item An adaptive batch of up to $k$ (query, edge
%    insertions/deletions) pairs, using $O(1/(\epsilon\delta))$ rounds.
%\end{enumerate}
%Furthermore, the total communication for handling a batch of $k$
%operations is $\Otil(k n^{\delta})$, and the total space used across
%all machines is $\Otil(m + n^{1+\delta})$.
%\end{remark}

%\input{MoreGraph}

\bibliographystyle{alpha}
\bibliography{Ref}

\end{document}